\tikzstyle{VB}=[fill={rgb,255: red,236; green,255; blue,255}, draw=black, shape=circle, thick, minimum size=6mm, inner sep=0pt, font={\footnotesize}, shading=ball, ball color={{rgb,255: red,236; green,255; blue,255}}]
\tikzstyle{VG}=[fill=green, draw=black, shape=circle, thick, minimum size=6mm, inner sep=0pt, font={\footnotesize}, ball color=green]
\tikzstyle{VR}=[fill=red, draw=black, shape=circle, thick, minimum size=6mm, inner sep=0pt, font={\footnotesize}, ball color=red]
\tikzstyle{VY}=[fill=yellow, draw=black, shape=circle, thick, minimum size=6mm, inner sep=0pt, font={\footnotesize}, ball color=yellow]
\tikzstyle{VP}=[fill=pink, draw=black, shape=circle, thick, minimum size=6mm, inner sep=0pt, font={\footnotesize}, ball color=pink]
\tikzstyle{VBsmall}=[fill={rgb,255: red,236; green,255; blue,255}, draw=black, shape=circle, thick, minimum size=2mm, inner sep=0pt, font={\footnotesize}, shading=ball, ball color={{rgb,255: red,236; green,255; blue,255}}]
\tikzstyle{VGsmall}=[fill=green, draw=black, shape=circle, thick, minimum size=2mm, inner sep=0pt, font={\footnotesize}, ball color=green]
\tikzstyle{VRsmall}=[fill=red, draw=black, shape=circle, thick, minimum size=2mm, inner sep=0pt, font={\footnotesize}, ball color=red]
\tikzstyle{VYsmall}=[fill=yellow, draw=black, shape=circle, thick, minimum size=2mm, inner sep=0pt, font={\footnotesize}, ball color=yellow]
\tikzstyle{VPsmall}=[fill=pink, draw=black, shape=circle, thick, minimum size=2mm, inner sep=0pt, font={\footnotesize}, ball color=pink]
\tikzstyle{VBfaded}=[fill={rgb,255: red,236; green,255; blue,255}, draw=black, shape=circle, thick, minimum size=2mm, inner sep=0pt, font={\footnotesize}, opacity=0.4]
\tikzstyle{VGfaded}=[fill=green, draw=black, shape=circle, thick, minimum size=2mm, inner sep=0pt, font={\footnotesize}, opacity=0.4]
\tikzstyle{VRfaded}=[fill=red, draw=black, shape=circle, thick, minimum size=2mm, inner sep=0pt, font={\footnotesize}, opacity=0.4]
\tikzstyle{VYfaded}=[fill=yellow, draw=black, shape=circle, thick, minimum size=2mm, inner sep=0pt, font={\footnotesize}, opacity=0.4]
\tikzstyle{VPfaded}=[fill=pink, draw=black, shape=circle, thick, minimum size=2mm, inner sep=0pt, font={\footnotesize}, opacity=0.4]
\tikzstyle{VRBGsquare}=[top color={rgb,255: red,236; green,255; blue,255},  bottom color=green, middle color=red,draw=black, shape=rectangle, thick, minimum size=3mm, inner sep=0pt, font={\footnotesize}]
\tikzstyle{VBGsquare}=[top color={rgb,255: red,236; green,255; blue,255}, bottom color=green, draw=black, shape=rectangle, thick, minimum size=3mm, inner sep=0pt, font={\footnotesize}]
\tikzstyle{VGRsquare}=[top color=green, bottom color=red,draw=black, shape=rectangle, thick, minimum size=3mm, inner sep=0pt, font={\footnotesize}]
\tikzstyle{VRBsquare}=[top color=red, bottom color={rgb,255: red,236; green,255; blue,255}, draw=black, shape=rectangle, thick, minimum size=3mm, inner sep=0pt, font={\footnotesize}]
\tikzstyle{VBYsquare}=[top color={rgb,255: red,236; green,255; blue,255}, bottom color=yellow, draw=black, shape=rectangle, thick, minimum size=3mm, inner sep=0pt, font={\footnotesize}]
\tikzstyle{VGYsquare}=[top color=green, bottom color=yellow,draw=black, shape=rectangle, thick, minimum size=3mm, inner sep=0pt, font={\footnotesize}]
\tikzstyle{VRYsquare}=[top color=red, bottom color=yellow, draw=black, shape=rectangle, thick, minimum size=3mm, inner sep=0pt, font={\footnotesize}]
\tikzstyle{VBsquare}=[fill={rgb,255: red,236; green,255; blue,255}, draw=black, shape=rectangle, thick, minimum size=3mm, inner sep=0pt, font={\footnotesize}]
\tikzstyle{VGsquare}=[fill=green, draw=black, shape=rectangle, thick, minimum size=3mm, inner sep=0pt, font={\footnotesize}]
\tikzstyle{VRsquare}=[fill=red, draw=black, shape=rectangle, thick, minimum size=3mm, inner sep=0pt, font={\footnotesize}]
\tikzstyle{VYsquare}=[fill=yellow, draw=black, shape=rectangle, thick, minimum size=3mm, inner sep=0pt, font={\footnotesize}]
\tikzstyle{VBGfsquare}=[top color={rgb,255: red,236; green,255; blue,255}, bottom color=green, draw=black, shape=rectangle, ultra thick, minimum size=3mm, inner sep=0pt, font={\footnotesize}]
\tikzstyle{VGRfsquare}=[top color=green, bottom color=red,draw=black, shape=rectangle, ultra thick, minimum size=3mm, inner sep=0pt, font={\footnotesize}]
\tikzstyle{VRBfsquare}=[top color=red, bottom color={rgb,255: red,236; green,255; blue,255}, draw=black, shape=rectangle, ultra thick, minimum size=3mm, inner sep=0pt, font={\footnotesize}]
\tikzstyle{VBYfsquare}=[top color={rgb,255: red,236; green,255; blue,255}, bottom color=yellow, draw=black, shape=rectangle, ultra thick, minimum size=3mm, inner sep=0pt, font={\footnotesize}]
\tikzstyle{VGYfsquare}=[top color=green, bottom color=yellow,draw=black, shape=rectangle, ultra thick, minimum size=3mm, inner sep=0pt, font={\footnotesize}]
\tikzstyle{VRYfsquare}=[top color=red, bottom color=yellow, draw=black, shape=rectangle, ultra thick, minimum size=3mm, inner sep=0pt, font={\footnotesize}]
\tikzstyle{VBfsquare}=[fill={rgb,255: red,236; green,255; blue,255}, draw=black, shape=rectangle, ultra thick, minimum size=3mm, inner sep=0pt, font={\footnotesize}]
\tikzstyle{VGfsquare}=[fill=green, draw=black, shape=rectangle, ultra thick, minimum size=3mm, inner sep=0pt, font={\footnotesize}]
\tikzstyle{VRfsquare}=[fill=red, draw=black, shape=rectangle, ultra thick, minimum size=3mm, inner sep=0pt, font={\footnotesize}]
\tikzstyle{VYfsquare}=[fill=yellow, draw=black, shape=rectangle, ultra thick, minimum size=3mm, inner sep=0pt, font={\footnotesize}]
\tikzstyle{VRBGssquare}=[top color={rgb,255: red,236; green,255; blue,255},  bottom color=green, middle color=red,draw=black, shape=rectangle, thick, minimum size=2mm, inner sep=0pt, font={\footnotesize}]
\tikzstyle{VBGssquare}=[top color={rgb,255: red,236; green,255; blue,255}, bottom color=green, draw=black, shape=rectangle, thick, minimum size=2mm, inner sep=0pt, font={\footnotesize}]
\tikzstyle{VGRssquare}=[top color=green, bottom color=red,draw=black, shape=rectangle, thick, minimum size=2mm, inner sep=0pt, font={\footnotesize}]
\tikzstyle{VRBssquare}=[top color=red, bottom color={rgb,255: red,236; green,255; blue,255}, draw=black, shape=rectangle, thick, minimum size=2mm, inner sep=0pt, font={\footnotesize}]
\tikzstyle{VBYssquare}=[top color={rgb,255: red,236; green,255; blue,255}, bottom color=yellow, draw=black, shape=rectangle, thick, minimum size=2mm, inner sep=0pt, font={\footnotesize}]
\tikzstyle{VGYssquare}=[top color=green, bottom color=yellow,draw=black, shape=rectangle, thick, minimum size=2mm, inner sep=0pt, font={\footnotesize}]
\tikzstyle{VRYssquare}=[top color=red, bottom color=yellow, draw=black, shape=rectangle, thick, minimum size=2mm, inner sep=0pt, font={\footnotesize}]
\tikzstyle{VBssquare}=[fill={rgb,255: red,236; green,255; blue,255}, draw=black, shape=rectangle, thick, minimum size=2mm, inner sep=0pt, font={\footnotesize}]
\tikzstyle{VGssquare}=[fill=green, draw=black, shape=rectangle, thick, minimum size=2mm, inner sep=0pt, font={\footnotesize}]
\tikzstyle{VRssquare}=[fill=red, draw=black, shape=rectangle, thick, minimum size=2mm, inner sep=0pt, font={\footnotesize}]
\tikzstyle{VYssquare}=[fill=yellow, draw=black, shape=rectangle, thick, minimum size=2mm, inner sep=0pt, font={\footnotesize}]
\tikzstyle{Eout}=[->, >=latex]
\tikzstyle{Ein}=[<-, >=latex]
\tikzstyle{Enone}=[-]
\tikzstyle{Einout}=[<->, >=latex]
\tikzstyle{Eoutthick}=[->, >=latex, very thick]
\tikzstyle{Einthick}=[<-, >=latex, very thick]
\tikzstyle{Enonethick}=[-, very thick]
\tikzstyle{Einoutthick}=[<->, >=latex, very thick]
\tikzstyle{Enonedotted}=[-, dash dot]
\tikzstyle{Eoutd}=[->, >=latex, dotted, thin]
\tikzstyle{Eind}=[<-, >=latex, dotted, thin]
\tikzstyle{Enoned}=[-, dotted, thin]
\tikzstyle{Einoutd}=[<->, >=latex, dotted, thin]
\DeclareMathOperator{\skel}{skel}
\DeclareMathOperator{\In}{In}
\DeclareMathOperator{\Ps}{Ps}
\DeclareMathOperator{\Ch}{Ch}
\DeclareMathOperator{\Bd}{Bd}
\DeclareMathOperator{\BdR}{BdR}
\DeclareMathOperator{\BdC}{BdC}
\DeclareMathOperator{\B}{\beta}
\newcommand{\G}{\mathcal{G}}
\newcommand\VI{\mathcal{V}_I}
\renewcommand{\P}{\mathcal{P}}
\newcommand{\N}{\mathcal{N}}
\newcommand{\RRG}{\mathcal{RRG}}
\newcommand{\C}{\mathcal{C}}
\renewcommand{\S}{\mathcal{S}}
\newcommand{\bD}{\mathbf{D}}
\newcommand{\bDo}{\bD^\omega}
\newcommand{\MA}{\bDo}
\newcommand{\otau}{\overline{\tau}}
\newcommand{\orr}{\overline{r}}
\newcommand{\oF}{\overline{F}}
\newcommand{\hF}{\hat{F}}
\newcommand{\hR}{\hat{R}}
\newcommand{\hB}{\hat{B}}
\newcommand{\K}{\mathcal{K}}
\begin{document}



\title{Topological Characterization of Consensus Solvability\\in Directed Dynamic Networks\thanks{This work is  supported by the Austrian Science Fund (FWF) projects ADynNet (P28182) and ByzDEL (P33600), and the German Research Foundation (DFG) project FlexNets (470029389). Kyrill Winkler was supported by the Vienna Science and Technology Fund (WWTF) project WHATIF (ICT19-045) and 
		Ami Paz by the Austrian Science Fund (FWF) and netIDEE SCIENCE project P 33775-N.
}}

\titlerunning{Topological Characterization of Consensus Solvability}

\author{Hugo Rincon Galeana\inst{1}
\and
Ulrich Schmid\inst{1}
\and
Kyrill Winkler\inst{2}
\and
Ami Paz\inst{3}
\and
Stefan Schmid\inst{4}
}
\authorrunning{H. Rincon, U. Schmid, K. Winkler, A. Paz, S. Schmid}
%
\institute{TU Wien,
Austria
\and
ITK Engineering, 
Austria
\and
CNRS, France
\and
TU Berlin, Germany
}
\maketitle              

\sloppy

\begin{abstract}
Consensus is one of the most fundamental problems in distributed computing. 
This paper studies the consensus problem in a synchronous dynamic directed network, in which communication is controlled by an oblivious message adversary. 
The question when consensus is possible in this model has already been studied thoroughly in the literature from a combinatorial perspective, and is known to be challenging. This paper presents a topological perspective on consensus solvability under oblivious message adversaries, which provides interesting new insights. 

Our main contribution is a topological characterization of consensus solvability,
which also leads to explicit decision procedures. 
Our approach is based on the novel notion of a communication pseudosphere, which can be seen as the message-passing analog of the well-known standard chromatic subdivision for wait-free shared memory systems. We further push the elegance and expressiveness of the ``geometric'' reasoning enabled by the topological approach by dealing with uninterpreted complexes, which considerably reduce the size of the protocol complex, and by labeling facets with information flow arrows, which give an intuitive meaning to the implicit epistemic status of the faces in a protocol complex.

\keywords{Dynamic networks \and message adversary \and consensus \and combinatorial topology \and uninterpreted complexes}
\end{abstract}

\section{Introduction}

Consensus is a most fundamental problem in distributed computing, in which 
multiple processes need to agree on some value, based on their local inputs. 
The problem has already been studied for several decades and in various different models, yet in many distributed settings the question of when and how fast consensus can be achieved continues to puzzle researchers. 

This paper studies consensus in the fundamental setting where 
processes communicate over a synchronous dynamic directed network, where
communication is controlled by an oblivious \emph{message adversary}~\cite{AG13}.
This model is appealing, because it is conceptually simple and still provides a highly dynamic network model. 
In this model, fault-free processes communicate in a lock-step synchronous fashion using message passing, and a message adversary may drop some messages sent by the processes in each round.
Viewed more abstractly, the message adversary provides a sequence of directed communication graphs, whose edges indicate which process can successfully send a message to which other process in that round.
An oblivious message adversary is defined by a set $\bD$ of allowed
communication graphs, from which it can pick one in each round~\cite{CGP15}, independently of its picks in the previous rounds.

The model is practically motivated, as the communication topology of many large-scale distributed systems is \emph{dynamic} (e.g., due to mobility, interference, or failures) and its links are often \emph{asymmetric} (e.g., in optical or in wireless networks)~\cite{NKYG07}.
The model is also theoretically interesting, as solving consensus in general dynamic directed networks is known to be difficult~\cite{SW89,SWK09,CGP15,BRSSW18:TCS,WSS19:DC}.

Prior work primarily focused on the circumstances under which consensus is actually solvable under oblivious message adversaries~\cite{CGP15}. 
Only recently, 
first insights have been obtained on the time complexity of reaching consensus in this model~\cite{WPRSS23:ITCS}, using a combinatorial approach.
The present paper complements this by a topological perspective, which provides
interesting new insights and results.

\medskip
\noindent

\textbf{Our contributions:} 
Our main contribution is a topological characterization of consensus solvability 
for synchronous dynamic networks under oblivious message adversaries. It provides not only intuitive (``geometric'') explanations for the surprisingly
intricate time complexity results established in \cite{WPRSS23:ITCS},
both for the decision procedure (which allows to determine whether
consensus is solvable for a given oblivious message adversary or not)
and, in particular, for the termination time of any correct distributed
consensus algorithm.

To this end, we introduce the novel notion of a communication pseudosphere, which can be seen as the message-passing analog of the well-known standard chromatic subdivision for wait-free shared memory systems. Moreover, we use
uninterpreted complexes~\cite{SC20:PODC}, which considerably reduce the
size and structure of our protocol complexes. And last but not least,
following \cite{GP16:OPODIS}, we label the edges in our protocol complexes
by the information flow that they carry, which give a very intuitive meaning
to the the implicit epistemic status (regarding knowledge of initial values)
of the vertices/faces in a protocol complex. Together with the inherent beauty
and expressiveness of the topological approach, our tools facilitate
an almost ``geometric'' reasoning, which provides simple
and intuitive explanations for the surprising results of
\cite{WPRSS23:ITCS}, like the sometimes exponential gap between decision
complexity and consensus termination time. It also leads to
a novel decision procedure for deciding whether consensus under a given oblivious message adversary can be achieved
in some $k$ rounds.

In general, we believe that, unlike the combinatorial approaches
considered in the literature so far, our topological approach also has
the potential for the almost immediate generalization to other
decision problems and other message adversaries, and may hence be of
independent interest.


\medskip
\textbf{Related work}: 
Consensus problems arise in various models, including 
shared memory architectures, message-passing systems, and 
blockchains, among others~\cite{ongaro2014search,KO11:SIGACT,WS19:EATCS,abraham2017blockchain}.
The distributed consensus problem in the message-passing model, as it is considered in this paper, where communication occurs over a dynamic network, has been studied for almost 40 years~\cite{HRT02:ENTCS,SW07,SWK09,CBS09,BSW11:hyb,CGP15,CFN15:ICALP,FNS18:PODC}. Already in 1989, Santoro and Widmayer~\cite{SW89} showed that consensus is impossible in this model if 
up to $n-1$ messages may be lost each round. Schmid, Weiss and Keidar~\cite{SWK09} showed that if losses do not isolate the processes, consensus can even be solved when a quadratic number of messages is lost per round. Several other generalized models have been proposed in the literature~\cite{Gaf98,KS06,CBS09}, like the heard-of model by Charron-Bost and Schiper~\cite{CBS09}, and also different agreement problems like approximate and asymptotic consensus have been studied in these models~\cite{CFN15:ICALP,FNS18:PODC}. 
In many of these and similar works on consensus~\cite{FG11,BRS12:sirocco,SWS16:ICDCN,BRSSW18:TCS,WSS19:DC,NSW19:PODC,CastanedaFPRRT21topo}, a model is considered in which, in each round, a digraph is picked from a set of possible communication graphs. Afek and Gafni coined the term message adversary 
for this abstraction~\cite{AG13}, and used it for relating problems solvable in
wait-free read-write shared memory systems to those solvable in message-passing systems.
For a detailed overview of the field, we refer to the recent survey by Winkler and Schmid~\cite{WS19:EATCS}. 
%

An interesting alternative model for dynamic networks assumes a 
$T$-interval connectivity guarantee, that is, 
a common subgraph in the communication graphs
of every $T$ consecutive rounds~\cite{KLO10:STOC,KOM11}. 
In contrast to our directional model, solving consensus is relatively
simple here, since the $T$-interval connectivity model relies on 
bidirectional links and always connected communication graphs.
For example, $1$-interval-connectivity, the weakest form of
$T$-interval connectivity, implies that all nodes are able to reach
all the other nodes in the system in each of the graphs.
%
%
%
Solving consensus in undirected graphs that are always connected was also considered in the case of a given $(t+1)$-connected graph and at most $t$-node failures~\cite{CastanedaFPRRT19tres}.
Using graph theoretical tools, the authors extend the notion of a radius in a graph to determine the consensus termination time in the presence of failures.

Coulouma, Godard, and Peters~\cite{CGP15} showed an interesting equivalence relation, which captures the essence of consensus impossibility under oblivious message adversaries via the non-broadcastability of one of the so-called beta equivalence classes, hence refining the results of~\cite{SW07}. Building upon some of these insights, Winkler et al.~\cite{WPRSS23:ITCS}
studied of the time complexity of consensus in this model. In particular, they presented an explicit decision procedure and
analyzed both its decision time complexity and the termination time of distributed consensus. It not only turned out that 
consensus may take exponentially longer than broadcasting~\cite{itcs23broadcast}, but also that there is sometimes
an exponential gap between decision time and termination time. Surprisingly, this gap is not caused by properties related
to broadcastability of the beta classes, but rather by the number of those.

Whereas all the work discussed so far is combinatorial in nature, there is
also some related
topological research, see \cite{HKR13} for an introduction and overview.
Using topology in distributed computing started out from
wait-free computation in shared memory systems with immediate atomic snapshots (the IIS model), see e.g.~\cite{HS99:ACT,AttiyaC13,AttiyaCHP19,Kozlov15,Kozlov16,GRS22:ITCS}. The evolution of the protocol complex in the IIS model is governed
by the pivotal chromatic subdivision operation here. We will show that the latter
can alternatively be viewed as a specific oblivious message adversary, the set $\bD$ of which containins all transitively closed and unilaterally connected graphs.

Regarding topology in dynamic networks,
Casta{\~n}eda et al.~\cite{CastanedaFPRRT21topo} studied consensus and other problems in both static and dynamic graphs, albeit under the assumption that all the nodes know the graph sequence.
That is, they focused on the question of information dissemination, and put aside  questions of indistinguishability between graph sequences.
In contrast, in our paper, we develop a topological model that captures both information dissemination and indistinguishability. 
An adversarial model that falls into ``our'' class of models
has been considered by Godard and Perdereau~\cite{GP16:OPODIS}, who studied
general $k$-set agreement under the assumption that some maximum number of
(bidirectional) links could drop messages in a round. The authors also
introduced the idea to label edges in the protocol complex by arrows that
give the direction of the information flow, which we adopted.
Shimi and Casta{\~n}eda~\cite{SC20:PODC} studied $k$-set agreement
under the restricted class of oblivious message adversaries that
are ``closed-above'' (with $\bD$ containing, for every included graph,
also all graphs with more edges).

One of the challenges of applying topological tools in distributed settings is that the simplicial complex representing the system grows dramatically with the number of rounds, as well as with the number of processes and possible input values. In the case of colorless tasks, such as $k$-set agreement, the
attention can be restricted to colorless protocol complexes \cite{HKR13}. In the case
of the IIS model, its evolution is governed by the barycentric subdivision,
which results in much smaller protocol complexes than produced by the chromatic
subdivision. Unfortunately, however, it is not suitable for tracing indistinguishability
in dynamic networks under message adversaries. The same is true for the
``local protocol complexes'' introduced in~\cite{FraigniaudP20}. By contrast,
uninterpreted complexes, as introduced in \cite{SC20:PODC}, are effective
here and are hence also used in our paper.

Apart from consensus being a special case of $k$-set agreement (for $k=1$),
consensus has not been the primary problem of interest for topology in
distributed computing, in particular not for dynamic networks under
message adversaries. However,
a point-set topological characterization of when consensus is possible under general (both closed and non-closed) message adversaries has been presented by Nowak, Schmid and Winkler in \cite{NSW19:PODC}. The resulting decision procedure
is quite abstact, though (it acts on infinite admissible executions), 
and so are some results on the termination time for closed message adversaries that
confirm \cite{WSM19:OPODIS}.

The topology of message-passing models in general has been considered 
by Herlihy, Rajsbaum, and Tuttle already in 2002~\cite{HRT02:ENTCS}. 
Herlihy and Rajsbaum~\cite{HR10} studied $k$-set agreement in models
leading to shellable complexes.


\medskip
\noindent

\textbf{Paper organization:} 
We introduce our model of distributed computation and the oblivious message adversary in 
Section~\ref{sec:compmodel}.
In Section~\ref{sec:topologyintro} we present a framework which will allow us to study consensus on dynamic networks from a topological perspective.  
Our characterization of consensus solvability/impossibility for the oblivious
message adversary is presented in Section~\ref{sec:consensuschar}, where we also describe an explicit decision procedure.
In Section~\ref{sec:terminationtime} we further explore the relationship between 
the time complexity required by our decision procedure and
 the actual termination time of 
distributed consensus. 
We conclude our contribution and discuss future research directions in Section
\ref{sec:conclusion}.

\section{System Model}
\label{sec:compmodel}

We consider a synchronous dynamic network consisting of a set of $n$ processes that do not fail, which are fully-connected via point-to-point links that might drop messages.
We identify the processes solely by their unique ids, which are taken from the set 
$\Pi=\{p_1,\dots,p_n\}$ and known to the processes. Let $[n]=\{1,\dots,n\}$.
Processes execute a deterministic full-information protocol $P$, using broadcast (send-to-all) communication. Their execution proceeds in a sequence of lock-step rounds, where every process simultaneously broadcasts a message to every other process, without getting immediately informed of a successful message reception, and then computes its next state based on its current local 
state and the messages received in the round. The rounds are communication-closed, i.e., messages not received in a specific round are lost and will not be delivered later.

Communication is hence unreliable, and in fact controlled by an oblivious \emph{message adversary} (MA) with non-empty graph set $\bD=\{D_1,\dots,D_k\}$.
All the graphs have $\Pi$ as their set of nodes, and an edge $p_i\to p_j$ represents a communication link from $p_i$ to $p_j$.
For every round $r\geq 1$, the MA arbitrarily picks
some communication graph $G_r$ from $\bD$,
and a message from a process $p_i$ arrives to process $p_j$ in this round if $G_r$ contains the edge $p_i\to p_j$, and otherwise it is lost.
We assume processes have persistent memory, i.e., every graph in $\bD$ contains all self-loops $p_i\to p_i$.
An infinite graph sequence $\G=(G_r)_{r\ge 1}$ picked by the message adversary is called a feasible graph sequence, 
and $\bD^\omega$ denotes the set of all \emph{feasible graph sequences} for the oblivious message adversary with
graph set $\bD$. 
The processes know $\bD$, but they do not have a priori knowledge of the graph $G_r$ for any $r$ (though they may infer it after the round occurred).

We consider a system where the global state is fully determined by the local states of each process. Therefore, 
a \emph{configuration} is just the vector of the local states (also called \emph{views}) of the processes. 
An admissible \emph{execution} $a$ of $P$ is just the sequence 
of configurations $a = (a_r)_{r \ge 0}$ at the end of the rounds $r\geq 1$, induced 
by a feasible graph sequence $\G\in \bDo$ starting out from a given initial configuration $a_0$.
Since we will restrict our attention to deterministic protocols $P$, the graph sequence $\G$ and 
the initial configuration $a_0$ uniquely determine~$a$. The \emph{view} of process 
$p_i$ in $a_r$ at the end of round $r\geq 1$ is denoted as $a_r(p_i)$; its initial view is 
denoted as $a_0(p_i)$.

We restrict our attention to deterministic protocols for the consensus problem, defined as follows:

\begin{definition}[Consensus]\label{def:consensus}
Every process $p_i\in \Pi$ has an input value $x_i\in \VI$ taken from 
a finite input domain $\VI$, which is encoded in the initial state, 
and an output value $y_i\in\VI \cup \{\bot\}$, initially $y_i=\bot$. 
In every admissible execution, a correct consensus protocol $P$ must ensure
the following properties: 
\begin{itemize}
\item \textbf{Termination:} Eventually,
every $p_i \in \Pi$ must decide, i.e., change to $y_i\neq\bot$, exactly once. 
\item \textbf{Agreement:} If processes $p_i$ and $p_j$ have decided, then $y_i=y_j$. 
\item \textbf{(Strong) Validity:} If $y_i\neq \bot$, then $y_i=x_j$ for some $p_j\in\Pi$, i.e., must be
the input value of some process $p_j$.
\end{itemize}
\end{definition}

In any given admissible execution $a$ of $P$, induced by $\G\in\MA$, for a process $p_i$, let $\In^{\G}(p_i,r)$ be the set 
of processes $p_i$ has \emph{heard of} in round $r$ (see also~\cite{CBS09}), i.e., 
the set of in-neighbors of process $p_i$ in $G_r$, and $\In^{\G}(p_i,0)=\{p_i\}$. 
Since all graphs in $\bD$ contain all self-loops, we have that $p_i \in \In^{\G}(p_i, r)$ for all $r\geq 0$ and $p_i\in\Pi$.
If the round $r$ is clear from the context, we also abbreviate $\In^{\G}(p_i)=\In^{\G}(p_i, r)$.

The evolution of the local views of the processes in an admissible execution $a$, induced by $\G$ and
the initial configuration $a_0$, can now be defined recursively as
\begin{equation}
a_r(p_i) = \bigl\{ (p_j, r, a_{r-1}(p_j)) : p_j \in \In^{\G}(p_i,r) \bigr\} \quad\mbox{for $r > 0$}.\label{eq:viewsa}
\end{equation}
Note that we could drop the round number $r$ from $(p_j, r, a_{r-1}(p_j))$ in the above
definition, since it is implicitly contained in the structure of $a_r(p_i)$; we included
it explicitly for clarity only.
The set of all possible round-$r$ views of $p_i$, including all the initial views for $r=0$, 
in any admissible execution, is denoted by 
$A^r(p_i)=\{a_r(p_i) \mid \forall \mbox{ admissible executions $a$ under MA}\}$.

In any admissible execution $a$, every process must eventually reach a final view, where it can take a decision on an output value which will not be changed later.
Consequently, there is some final round after which all processes have decided.


\section{A Topological Framework for Consensus}
\label{sec:topologyintro}
 
In this section, we introduce the basic elements of combinatorial topology and
specific concepts needed in our context of synchronous message-passing networks. 

Combinatorial topology in distributed computing~\cite{HKR13} rests on
simplicial input and output complexes describing the feasible input and output values
of a distributed decision task like consensus, and a carrier map that defines
the allowed output value(s), i.e., output simplices, for a given input simplex. 
A protocol that solves such a task in some computational model
gives rise to another simplicial complex, the protocol complex, which describes the evolution of the local views of the processes in any execution.
Protocol complexes traditionally model full information protocols in round-based models, 
which ensures a well-organized structure: 
The processes execute a sequence of communication operations, which disseminate their 
complete views, until they are able to make a decision.
Finally, a protocol induces a simplicial decision map, 
which maps each vertex in the protocol complex to an output vertex in a way compatible with the carrier map.

\subsection{Basic topological definitions}
\label{sec:standardtopology}

We start with the definitions of the basic vocabulary of combinatorial topology:

\begin{definition}[Abstract simplicial complex]\label{def:abssimpcomplex}
	An \emph{abstract simplicial complex} $\mathcal K$ is a pair $\langle V(\mathcal{K}), F(\mathcal{K}) \rangle$, where $V(\mathcal{K})$ is a set, $F(\mathcal{K}) \subseteq 2^{V(\mathcal{K})}$, and for any $\sigma, \tau \in 2^{V(\mathcal{K})}$ such that $\sigma \subseteq \tau$ and $\tau \in F(\mathcal{K})$, then $\sigma \in F(\mathcal{K})$. 
	$V(\mathcal{K})$ is called the set of \emph{vertices}, and $F(\mathcal{K})$ is the set of \emph{faces} or \emph{simplices} 
    of $\mathcal{K}$. 
    We say that a simplex $\sigma$ is a \emph{facet} if it is maximal with respect to containment, and a \emph{proper face} otherwise. 
    We use $Fct(\K)$ to denote the set of all facets of $\K$, and note that for a given $V(\K)$ we have that $F(\K)$ uniquely define  $Fct(\K)$ and vice versa. 
    A simplicial complex is
    \emph{finite} if its vertex set is finite, which will be the case for all the complexes in this paper.
\end{definition}
All the simplicial complexes we consider in this work are abstract.
For conciseness, we will usually sloppily write $\sigma \in \K$ instead of $\sigma \in F(\K)$.

\begin{definition}[Subcomplex]
	Let $\mathcal{K}$ and $\mathcal{L}$ be  simplicial complexes. We say that $\mathcal{L}$ is a \emph{subcomplex} of $\mathcal{K}$, written as $\mathcal{L} \subseteq \mathcal{K}$, if $V(\mathcal{L}) \subseteq V(\mathcal{K})$ and $F(\mathcal{L}) \subseteq F(\mathcal{K})$.
\end{definition}

\begin{definition}[Dimension] \label{def:dimension}
	Let $\mathcal{K}$ be a simplicial complex, and $\sigma \in F(\mathcal{K})$ be a simplex. We say that $\sigma$ has \emph{dimension} $k$, denoted by $\dim(\sigma)=k$, if it has a cardinality of $k+1$. A simplicial complex $\K$ is of dimension $k$ if every facet has dimension at most $k$, and it is \emph{pure} if all its facets have the same dimension.
\end{definition}
We sometimes denote a simplex as $\sigma^k$ in order to stress that its dimension is $k$.

\begin{definition}[Skeletons and boundary complex]
The \emph{$k$-skeleton} $\skel^k(\K)$ of a simplicial complex $\K$ is the subcomplex consisting of all simplices of
dimension at most $k$.
  The \emph{boundary complex} $\partial \sigma$ of a simplex $\sigma$, viewed as a complex, is the complex made up of all proper faces of $\sigma$.
\end{definition}

\begin{definition}[Simplicial maps] \label{def:simpmap}
	Let $\mathcal{K}$ and $\mathcal{L}$ be simplicial complexes. We say that a vertex map $\mu : V(\mathcal{K}) \rightarrow V(\mathcal{L})$ is a \emph{simplicial map} if, for any $\sigma \in F(\mathcal{K})$, $\mu (\sigma) \in F(\mathcal{L})$;
	here, $\mu (\sigma)=\{\mu(v)\mid v\in\sigma\}$.	
\end{definition}

\begin{definition}[Colorings and chromatic simplicial complexes]\label{def:coloring}
  We say that a simplicial complex $\K$ has a \emph{proper $c$-coloring} $\chi$, if there exists $\chi:V(\K) \to \{p_1,p_2, \ldots, p_{c}\}$ that is injective at every face of $\K$.
  If $\mathcal{K}$ has a proper $(\dim(\K)+1)$-coloring, we say it is a \emph{chromatic} simplicial complex.
\end{definition}

The range of $\chi$ is extended to sets of vertices $S$ by defining $\chi(S)=\{\chi(v)\mid v\in S\}$, which implies e.g.~$\chi(\sigma)=\chi(V(\sigma))$.

\begin{definition}[Carrier Map]\label{def:carrmap}
	Let $\mathcal{K}$ and $\mathcal{L}$ be simplicial complexes and $\Phi: F(\mathcal{K}) \rightarrow 2^{\mathcal{L}}$. We say that $\Phi$ is a \emph{carrier map}, if $\Phi(\sigma)$ is a subcomplex of $\mathcal{L}$ for any $\sigma \in \mathcal{K}$, and for any $\sigma_1, \sigma_2 \in \mathcal{L}$, $\Phi( \sigma_1 \cap \sigma_2) \subseteq \Phi(\sigma_1) \cap \Phi(\sigma_2)$.
	
	We say that a carrier map is \emph{rigid} if it maps every simplex $\sigma \in \mathcal{K}$ to a complex $\Phi(\sigma)$ which is pure of dimension $\dim(\sigma)$.
	It is said to be \emph{strict} if that for any two simplices $\sigma, \tau \in \mathcal{K}$, $\Phi(\sigma \cap \tau) = \Phi(\sigma) \cap \Phi(\tau)$.

	We say that a carrier map $\Phi: \mathcal{K} \rightarrow 2^\mathcal{L}$ \emph{carries} a simplicial vertex map $\mu: V(\mathcal{K}) \rightarrow V(\mathcal{L}$) if for any $\sigma \in \mathcal{K}$, $\mu(\sigma) \in \Phi (\sigma)$.
\end{definition}

Having introduced our basic vocabulary, we can now define the main ingredients
for the topological modeling of consensus in our setting. 

Generally, a \emph{distributed task} is defined by a tuple $T = \langle \mathcal{I}, \mathcal{O}, \Delta \rangle$ consisting of chromatic simplicial complexes $\mathcal{I}$ and $\mathcal{O}$  that model the valid input and output configurations respectively, for the set $\Pi$ of processes, and $\Delta : \mathcal{I} \rightarrow 2^{\mathcal{O}}$ is a carrier map that maps valid input configurations to sets of valid output configurations. 
Both complexes have vertices of the form $(p_i,x)$ with $p_i\in \Pi$, and they are chromatic with the coloring function $\chi((p_i,x))=p_i$.
All the simplicial maps we consider in this work are \emph{color preserving}, in the sense that they map each vertex $(p_i,x)$ to a vertex $(p_i,y)$ with the same process id $p_i$.

Many interesting tasks have some degree of regularity (that is, symmetry) in the input complex. 
In the case of consensus, in particular, any combination
of input values from $\VI$ is a legitimate initial configuration. Consequently,
the input complex for consensus in the classic topological modeling is
a pseudosphere~\cite{HRT02:ENTCS}.

In this paper, we will exploit the fact that strong validity does not force us to
individually trace the evolution of every possible initial configuration of the protocol complex. 
We will therefore restrict our attention to 
\emph{uninterpreted complexes}~\cite{SC20:PODC}: Instead of providing different
vertices for every possible value of $x_i$, we provide only one vertex labeled
with $\{p_i\}$, carrying the meaning of ``the actual input value $x_i$ of $p_i$''. This way, we can abstract away the input domain $\VI$ as well as the
actual assignment of initial values $x_i\in\VI$ to the processes. 
Topologically, uninterpreted complexes thus correspond to a ``flattening'' of the standard
complexes with respect to all input and output values. The main advantages of
resorting to uninterpreted protocol complexes is that they are exponentially
smaller than the standard protocol complex, even in the case of binary consensus, and independent of the particular initial configuration.
This can be compared with the study of \emph{colorless tasks}~\cite[Ch. 4]{HKR13}, where a different form of ``flattening'' of the complexes is done by omitting the process ids.

\begin{definition}[Uninterpreted input complex for consensus]\label{def:uicomplex}
The \emph{uninterpreted input complex} $\mathcal{I}$ for consensus is just a 
single \emph{initial simplex} $\sigma_0=\{(p_1,\{p_1\}), \dots, (p_n,\{p_n\})\}$ 
and all its faces, with the set of vertices 
$V(\mathcal{I}) = V(\sigma_0) = \{ (p_i,\{p_i\}) \; \vert \; p_i \in \Pi \}$, 
where the label $\{p_i\}$ represents the ``uninterpreted'' (i.e., fixed but arbitrary) 
input value of $p_i$.
\end{definition}
We use $\sigma_0$ throughout this paper to denote the above input simplex.

The \emph{uninterpreted output complex} $\mathcal{O}$ for consensus just specifies the process whose
input value will determine the decision value.

\begin{definition}[Uninterpreted output complex for consensus]\label{def:uocomplex}
The \emph{uninterpreted output complex} $\mathcal{O}$ for consensus is the 
union of $n$ disjoint complexes $\mathcal{O}(p_j)$, $p_j\in\Pi$, each
consisting of the simplex 
$\{(p_1,\{p_j\}), \dots, (p_n,\{p_j\})\}$ and all its faces. The label $\{p_j\}$ 
represents the ``uninterpreted'' (i.e., fixed but arbitrary) input value of $p_j$.
\end{definition}


The carrier map $\Delta$
for the consensus task maps any face $\rho$ of the initial simplex $\sigma_0 \in \mathcal{I}$ 
to $\dim(\rho)$-faces of $\mathcal{O}$ that all have a coloring equal to $\chi(\rho)$. 
Clearly, $\Delta$ is rigid and strict.

The \emph{uninterpreted protocol complex} $\P_r^{\MA}$ consists of vertices that are labeled by the heard-of histories the corresponding process has been able
to gather so far. 

\begin{definition}[Heard-of histories]\label{def:HOhistory}
For a feasible graph sequence $\G$, the \emph{heard-of history} $h_r^{\G}(p_i)$ of a process $p_i$ at the end of round~$r$ is defined as
\begin{eqnarray}
h_r^{\G}(p_i)&=&\{(p_j,h_{r-1}^{\G}(p_j)) \mid p_j \in \In^{\G}(p_i,r)\} \mbox{ for $r\geq 1$}, \label{eq:hr}\\
h_0^{\G}(p_i)&=& \{p_i\}.
\end{eqnarray}
The global heard-of history $h_r^{\G}$ at the end of round $r$ is just the tuple $(h_r^{\G}(p_1),\dots,h_r^{\G}(p_n))$.

The set of processes $p_i$ has ever heard of up to $h_r^{\G}(p_i)$, i.e., the end of round $r$, is denoted
$\cup h_r^{\G}(p_i) = \bigcup_{p_j \in \In^{\G}(p_i,r)} \cup h_{r-1}^{\G}(p_j)$ and $\cup h_0^{\G}(p_i)=h_0^{\G}(p_i)=\{p_i\}$. 

The set of all possible heard-of histories of $p_i$ (resp.\ the global ones) at the end of round $r\geq 0$, in every feasible graph sequence $\G\in\MA$, is denoted by 
\begin{eqnarray}
H^r(p_i)&=&\{h_r^{\G}(p_i) \mid  \G\in\MA \},\\
H^r&=&\{(h_r^{\G}(p_1),\dots,h_r^{\G}(p_n)) \mid  \G\in\MA\}.
\end{eqnarray}
\end{definition}

The uninterpreted protocol complex $\P_r^{\MA}$, which does not depend on the
initial configuration but only on $\MA$, is defined as follows: 

\begin{definition}[Uninterpreted protocol complex for $\MA$]\label{def:upcomplex}
The \emph{uninterpreted $r$-round protocol complex} $\P_r^{\MA} = \langle V(\P_r^{\MA}) , F(\P_r^{\MA}) \rangle$, $r\geq 0$, for a given oblivious message adversary $\MA$, is defined by its vertices and facets as follows:
\begin{eqnarray}
V(\P_r^{\MA}) &=& \bigl\{(p_i,h_r(p_i)) \; \vert \; p_i \in \Pi, \; h_r(p_i) \in H^r(p_i) \bigr\}, \nonumber\\
Fct(\P_r^{\MA}) &=& 
\bigl\{ \{(p_1,h_r(p_1)),\dots, (p_n,h_r(p_n))\} | \forall 1 \leq i \leq n: p_i \in \Pi, (h_r(p_1),\dots,h_n(p_n)) \in H^r\bigr\}\nonumber.
\end{eqnarray}
\end{definition}
For conciseness, we will often omit the superscript $\MA$ when the 
oblivious message adversary considered is clear from the context.

The \emph{decision map} $\mu: V(\P_r^{\MA}) \rightarrow V(\mathcal{O})$ is a chromatic simplicial map that maps a final view of a process $p_i$ at the end of round $r$ to an output value $p_j$ such that $p_j\in \cup h_r^{\G}(p_i)$; 
it is not defined for non-final views. 
Note that $\mu$ is uniquely determined by the images of the facets in $\P_r^{\MA}$
after any round $r$ were all processes have final views.
We say that consensus is solvable if such a simplicial map $\mu$ exists.

\paragraph{Remark.}
Standard topological modeling, which does not utilize uninterpreted complexes, also requires an
execution carrier map $\Xi: \mathcal{I} \to 2^{\P}$, which defines the subcomplex
$\Xi(\sigma)$ of the protocol complex $\P$ that arises when the protocol starts from 
the initial simplex $\sigma \in \mathcal{I}$. Solving a task requires $\mu \circ \Xi$ 
to be carried by $\Delta$, i.e., $\mu(\Xi(\sigma)) \in \Delta(\sigma)$ for all 
$\sigma \in \mathcal{I}$. In our setting, since we have only one (uninterpreted) facet 
in our input complex $\sigma_0$ and a protocol complex that can be written as 
$\bigcup_{r\geq 1} \P_r^{\MA}=\bigcup_{r\geq 1} \P^r(\sigma_0)$ (i.e., the union 
of all iterated protocol complex construction operators $\P^r$ given in
\cref{def:Pps} below), both the execution carrier map $\Xi$ and the carrier map $\Delta$ are independent of 
the actual initial values and hence quite simple: The former is just $\Xi=\bigcup_{r\geq 1} \P^r$
(with every $\P^r$ viewed as a carrier map), the latter has been stated after \cref{def:uocomplex}.

\subsection{Communication pseudospheres}

Rather than directly using \cref{def:upcomplex} for $\P_r$, we will now introduce an 
alternative definition based on communication pseudospheres. The latter
can be seen as the the message-passing analogon of the
well-known standard chromatic subdivision (see \cref{def:chrsub}) for wait-free shared memory systems. Topologically, it can be defined as follows:

\begin{definition}[Communication pseudosphere]\label{def:recps}
  Let $\mathcal{K}$ be an $(n-1)$-dimensional pure  simplicial complex
  with a proper coloring $\chi: V(\mathcal{K}) \rightarrow \{p_1, \ldots, p_n\}$. We define the \emph{communication pseudosphere} $\Ps(\K)$ through its vertex set and facets as follows:
\begin{align}		
	V(\Ps(\mathcal{K})) &= \bigl\{(p_i,\sigma) \; | \; \sigma \in F(\mathcal{K}), p_i \in \chi(\sigma)\bigr\}, \label{eq:vertexPs}\\
	Fct(\Ps(\mathcal{K})) &= \bigl\{\{(p_{1}, \sigma_1), (p_{2}, \sigma_2), \ldots (p_{n}, \sigma_n)\} \; | \;  \forall 1 \leq i \leq n: \; \sigma_i \in F(\K), p_i \in \chi(\sigma_i) 
        \bigr\}.\label{eq:facePs} 
\end{align}	
\end{definition}

Given an $(n-1)$-dimensional simplex $\sigma^{n-1} = \{(p_1,h_1), \dots, (p_n,h_n)\} \in \K$, the
communication pseudosphere $\Ps(\sigma^{n-1})$ contains a vertex $(p_i,\sigma)$
for every subset $\sigma\subseteq\{(p_1,h_1), \dots, (p_n,h_n)\}$ that satisfies $\{(p_i,h_i)\}\in \sigma$. 
Intuitively, $\sigma$ represents the information of those processes $p_i$ could have heard of in 
a round (recall that $p_i$ always hears of itself). $\Ps(\sigma^{n-1})$ hence indeed matches the 
definition of a pseudosphere~\cite{HRT02:ENTCS}.

Since $\bigl|\bigl\{
\sigma\subseteq\{(p_1,h_1), \dots, (p_n,h_n)\}\setminus\{(p_i,h_i)\}\bigr\}\bigr|=2^{n-1}$ 
for every $p_i$, every communication pseudosphere $\Ps(\sigma^{n-1})$
consists of $|V(\Ps(\sigma^{n-1}))|=n2^{n-1}$ vertices:
For every given vertex $(p_i,\sigma)$ and every $p_j\neq p_i$, there are exactly $2^{n-1}$ differently 
labeled vertices $(p_j,\cdot)$. Since $(p_i,\sigma)$ has an edge to each of those in the complex 
$\Ps(\sigma^{n-1})$, its degree must hence be $d=(n-1)2^{n-1}$. 

In the case of $n=2$ or $n=3$, 
let $v=|V(\Ps(\sigma^{n-1}))|$,
$e=|E(\Ps(\sigma^{n-1}))|$ and 
$f=|Fct(\Ps(\sigma^{n-1}))|$
denote the numbers of vertices, edges and facets in $\Ps(\sigma^{n-1})$, respectively.
It obviously holds that $v\cdot d=2e$ and $v\cdot d=nf$. 
Therefore, 
$e=vd/2=n(n-1)2^{2(n-1)-1}$ and $f=vd/n=(n-1)2^{2(n-1)}$. 
For $n=2$, we thus get $v=4$, $f=e=4$, $d=2$ and hence the following communication
pseudosphere $\Ps(\sigma_0^1)$ for the initial simplex $\sigma_0^1=\{(p_r,\{p_r\}), (p_w,\{p_w\})\}$:
\begin{equation}
  \tikzfig{Figures/n2all}\label{eq:graph}
\end{equation}

In the above figure, and throughout this paper, we use the labeling convention of the edges
proposed in~\cite{GP16:OPODIS}, which indicates the information flow
between the vertices in a simplex. For example, in the middle simplex (connected with edge $\leftrightarrow$), 
both processes have heard
from each other in round 1, so the connecting edge is denoted by 
$\leftrightarrow$. An edge without any arrow means that the two
endpoints do not hear from each other. Note carefully that we will
incorporporate these arrows also when talking about facets and faces 
that are \emph{isomorphic}: Throughout this paper, two faces $\sigma$
and $\kappa$ arising in our protocol complexes will be considered isomorphic 
only if $\chi(\sigma)=\chi(\kappa)$ and if all edges have the same orientation.

We note also that the labeling
of the vertices with the faces of $\sigma_0^1$ is highly
redundant. We will hence condense vertex labels when
we need to refer to them explicitly, and e.g.\ write
$(p_r,\{p_r,p_w\})$ instead of $(p_r,\{(p_r,\{p_r\}),(p_w,\{p_w\})\})$.

The communication pseudosphere  $\Ps(\sigma_0^2)$ for the initial simplex $\sigma_0^2=\{(p_r,\{p_r\}),(p_g,\{p_g\}), (p_w,\{p_w\})\}$ for $n=3$ is depicted in \cref{fig:pseudospheres}. It also highlights two facets, corresponding to the graphs 
$G_1$ (grey) and $G_2$ (yellow):
\begin{equation}\label{eq:advcomm}
  \tikzfig{Figures/graphs}
\end{equation}

\begin{figure}
\ctikzfig{Figures/n3all_bent_withpolygon}
\caption{Communication pseudosphere $\Ps(\sigma^{n-1})$ for $n=3$ (where $L=4$, $V=12$, $E=48$, $F=32$ and $d=8$), 
with the communication graphs of \cref{eq:advcomm} highlighted. Thick
edges represent the standard chromatic subdivision $\Ch(\sigma^2)$.}
\label{fig:pseudospheres}
\end{figure}

We will now recast
the definition of the uninterpreted protocol complex for a given oblivious
message adversary $\MA$ in terms of a communication pseudosphere. Recall from \cref{def:upcomplex}
that the uninterpreted initial protocol complex $\P_0=\P_0^{\MA}$
only consists of the single initial simplex
$\sigma_0 = \sigma_0^{n-1} = \{(p_1,\{p_1\}), \dots, (p_n,\{p_n\})\}$ and all
its faces. It represents the uninterpreted initial state, where every process 
has heard only from itself. Here is an example for $n=3$ and $\Pi=\{p_w,p_r,p_g\}$:
\begin{equation}
  \tikzfig{Figures/P0}
\end{equation}

Consequently, the single-round protocol complex $\P_1=\P_1^{\MA}$ is just the subcomplex of the 
communication pseudosphere $\Ps(\sigma_0)$ induced by the set $\bD$ of possible graphs. 
For example, $\P_1$ for $\bD=\{G_1,G_2\}$ is the subcomplex
of $\Ps(\sigma_0)$ made up by the two highlighted
facets corresponding to the graphs $G_1$ and $G_2$ in \cref{fig:pseudospheres}. 
That is, rather than labeling the vertices of $\Ps(\sigma_0)$
with \emph{all} the possible subsets of faces of $\sigma_0$ as in \cref{def:recps}, 
only those faces that are communicated via one of the graphs 
in $\bD$ are used by the 
\emph{protocol complex construction operator} $\P=\P^{\MA}$ for generating
$\P_1=\P(\sigma_0)$. Conversely, if one interprets $\bD$ as an $(n-1)$-dimensional simplicial 
complex $\bD(\sigma_0)$, consisting of one facet (and all its
faces) per graph $G\in \bD$ according to (\ref{eq:facePs}), one could write
$\P^{\MA}(\sigma_0)=\Ps(\sigma_0) \cap \bD(\sigma_0)$. 

This can be compactly summarized in the following definition: 

\begin{definition}[Protocol complex construction pseudosphere]\label{def:Pps}
  Let $\mathcal{K}$ be an $(n-1)$-dimensional pure simplicial complex
  with a proper coloring $\chi: V(\mathcal{K}) \rightarrow \{p_1, \ldots, p_n\}$,
  and $\In^G(p_i)$ be the set of processes that $p_i$ hears of in the communication graph $G \in \bD$.
  We define the \emph{protocol complex construction pseudosphere} $\P(\K)$ for
  the message adversary $\MA$, induced by the operator $\P:Fct(\K) \to \P(\K)$
  that can be applied to the facets of $\K$, through its vertex set and facets as follows:
\begin{align}		
	V(\P(\mathcal{K})) &= \bigl\{(p_i,\sigma) \in \Pi\times F(\K)\; | \; \exists G \in \bD: \In^G(p_i)=\chi(\sigma)\bigr\}, \label{eq:vertexN}\\
	Fct(\P(\mathcal{K})) &= \bigl\{\{(p_{1}, \sigma_1), \ldots, (p_{n}, \sigma_n)\} \; | \;  \exists G \in \bD, \forall 1 \leq i \leq n: 
	\In^G(p_i)=\chi(\sigma_i) 
        \bigr\}.\label{eq:faceN}
\end{align}	
\end{definition}

According to \cref{def:Pps}, 
our operator $\P$ (as well as $\Ps$) is actually defined only for the
facets in $\K$,
i.e., the dimension $n-1$ is actually implicitly encoded in the operator.
We will establish below that this is sufficient for our purposes, since 
every $\P$ is \emph{boundary consistent}: This property will allow us 
to uniquely define $\P$ for proper faces in $\K$ as well. 
We will use the following simple definition of boundary consistency, which 
makes use of the fact that the proper coloring of the vertices of a chromatic simplicial 
complex defines a natural ordering of the vertices of any of its faces.

\begin{definition}[Boundary consistency]
	\label{def:bc}
	We say that a protocol complex construction operator $\P$ according to \cref{def:Pps}
	is \emph{boundary consistent}, if for all
	possible choices of three facets $\sigma$, $\kappa$ and $\tau$ from every simplicial complex on which $\P$ can be applied, it holds that 
	\begin{equation}
		\sigma \cap \kappa = \sigma \cap \tau
		\implies
		\P(\sigma)\cap\P(\kappa)=
		\P(\sigma)\cap\P(\tau). \label{eq:bc}
	\end{equation}
\end{definition}

The following \cref{lem:bcP} shows that every $\P$ is boundary consistent
and that one can uniquely define $\P(\rho)$ also for a non-maximal simplex 
$\rho$ (taken as a complex). Moreover, it reveals that $\P$, viewed as
a carrier map, is strict (but not necessarily rigid):

\begin{lemma}[Boundary consistency of $\P$]\label{lem:bcP}
Every protocol construction operator $\P$ according to \cref{def:Pps} is boundary
consistent. It can be applied to any simplex $\rho\in\K$, viewed as a complex, and 
produces a unique (possibly impure) chromatic complex $\P(\rho)$ with dimension at 
most $\dim(\rho)$. Moreover,
\begin{equation}
\P(\sigma \cap \kappa) = \P(\sigma) \cap \P(\kappa)\label{eq:Pstrict}
\end{equation}
for any $\sigma, \kappa \in \K$.
\end{lemma}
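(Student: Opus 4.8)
The plan is to unpack Definition~\ref{def:Pps} at the level of vertices and facets and verify each claimed property by direct set-theoretic manipulation, using the crucial observation that the operator $\P$ acts on a facet $\sigma = \{(p_1,h_1),\dots,(p_n,h_n)\}$ only through the colors $\chi(\sigma) = \{p_1,\dots,p_n\}$ and, more importantly, that a vertex $(p_i,\tau)$ of $\P(\sigma)$ is determined by a face $\tau \subseteq \sigma$ together with the requirement that there is $G\in\bD$ with $\In^G(p_i) = \chi(\tau)$; in particular $(p_i,h_i)\in\tau$, so $\tau$ lies in the ``star'' of the vertex $(p_i,h_i)$ inside $\sigma$. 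First I would make precise how $\P$ extends to a non-maximal simplex $\rho\in\K$ (viewed as a complex): since $\rho$ is itself a pure complex of dimension $\dim(\rho)$ with proper coloring $\chi|_\rho$, Definition~\ref{def:Pps} applies verbatim with $n$ replaced by $\dim(\rho)+1$ and the graphs of $\bD$ restricted to the node set $\chi(\rho)$ — that is, $\In^G(p_i)\cap\chi(\rho)$ plays the role of $\In^G(p_i)$. I would note that this restriction is forced: for any facet $\sigma\supseteq\rho$, the faces of $\P(\sigma)$ all of whose vertices are colored in $\chi(\rho)$ depend only on $\rho$ and on which subsets of $\chi(\rho)$ are realized as $\In^G$-sets, which establishes uniqueness, well-definedness, chromaticity, and the dimension bound $\dim(\P(\rho))\le\dim(\rho)$ (possible impurity being visible already for $n=2$ when some color has fewer realized in-sets than another).

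The heart of the lemma is the strictness identity~\eqref{eq:Pstrict}, and I would prove it by double inclusion, with the $\subseteq$ direction being the formal part and the $\supseteq$ direction being where the structure of $\P$ really enters. For $\subseteq$: $\rho := \sigma\cap\kappa$ is a common face of $\sigma$ and $\kappa$, and since $\P$ restricted to $\rho$ is computed the same way whether we view $\rho$ inside $\sigma$, inside $\kappa$, or standalone, every simplex of $\P(\rho)$ is a simplex of both $\P(\sigma)$ and $\P(\kappa)$; this is essentially the carrier-map monotonicity already implicit in Definition~\ref{def:carrmap}. For $\supseteq$: take a simplex $\eta\in\P(\sigma)\cap\P(\kappa)$, say $\eta = \{(p_{i_1},\tau_1),\dots,(p_{i_m},\tau_m)\}$. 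Because $\eta\in\P(\sigma)$, each $\tau_\ell$ is a face of $\sigma$ with $(p_{i_\ell},h_{i_\ell})\in\tau_\ell$; because $\eta\in\P(\kappa)$, each $\tau_\ell$ is also a face of $\kappa$ containing the corresponding vertex. A face of $\sigma$ is just a subset of $V(\sigma)$, and a subset that is simultaneously a subset of $V(\sigma)$ and of $V(\kappa)$ is a subset of $V(\sigma)\cap V(\kappa) = V(\rho)$; hence each $\tau_\ell$ is a face of $\rho$, and the witnessing graphs $G\in\bD$ (one per vertex, or a common one per facet of $\eta$) still witness $\In^G(p_{i_\ell})\cap\chi(\rho) = \chi(\tau_\ell)$, so $\eta\in\P(\rho)$. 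I expect the one subtle point here to be bookkeeping about the witnessing graphs: the facet-level definition~\eqref{eq:faceN} requires a \emph{single} $G$ realizing all $n$ in-sets simultaneously, so when I restrict a facet $\eta'\supseteq\eta$ of $\P(\sigma)$ down to $\rho$ I must check the restricted in-sets of that same $G$ still form a legal facet of $\P(\rho)$ — which they do, precisely because $\P(\rho)$ is defined with the same $\bD$ and the restricted colorings.

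With~\eqref{eq:Pstrict} in hand, boundary consistency~\eqref{eq:bc} is immediate: if $\sigma\cap\kappa = \sigma\cap\tau$, then applying~\eqref{eq:Pstrict} to both sides gives $\P(\sigma)\cap\P(\kappa) = \P(\sigma\cap\kappa) = \P(\sigma\cap\tau) = \P(\sigma)\cap\P(\tau)$. I would close by remarking that rigidity can genuinely fail: $\P(\sigma)$ need not be pure of dimension $\dim(\sigma)$ because some color $p_i$ may have strictly fewer distinct realized in-neighborhoods over $\bD$ than another color, so a facet of $\P(\sigma)$ forced to omit a ``missing'' extension has smaller dimension — this is why the statement only claims ``dimension at most $\dim(\rho)$'' and ``possibly impure.'' The main obstacle, then, is not any single hard step but keeping the two-level (vertex vs.\ facet) definition of $\P$ coherent throughout the intersection arguments; everything else is routine unwinding of definitions.
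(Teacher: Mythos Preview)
Your extension of $\P$ to a proper face $\rho$ via \emph{restricted} in-neighbourhoods, i.e.\ replacing $\In^G(p_i)$ by $\In^G(p_i)\cap\chi(\rho)$, is not the right object, and with that definition the inclusion $\P(\rho)\subseteq\P(\sigma)\cap\P(\kappa)$ fails. Concretely, take $n=3$, let $\bD$ consist only of the complete graph $K_3$, and let $\sigma,\kappa$ be two facets of $\K$ that share exactly the edge $\rho$ with $\chi(\rho)=\{p_1,p_2\}$. Then $\P(\sigma)$ and $\P(\kappa)$ each consist of a single facet in which every vertex is labelled by the full ambient $2$-simplex; since $\sigma\neq\kappa$, no vertex labels coincide and $\P(\sigma)\cap\P(\kappa)=\emptyset$. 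Your restricted $\P(\rho)$, however, contains the edge $\{(p_1,\rho),(p_2,\rho)\}$, because $\In^{K_3}(p_i)\cap\{p_1,p_2\}=\{p_1,p_2\}=\chi(\rho)$. So your ``carrier-map monotonicity'' sentence is simply false for this $\P(\rho)$: the vertex $(p_1,\rho)$ is not a vertex of $\P(\sigma)$ at all, since no $G\in\bD$ satisfies $\In^G(p_1)=\{p_1,p_2\}$ exactly.

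The paper avoids this trap by reversing the order of the argument. It does \emph{not} propose an intrinsic definition of $\P(\rho)$ and then verify~\eqref{eq:Pstrict}; rather, it first proves boundary consistency~\eqref{eq:bc} directly for facets $\sigma,\kappa,\tau$---by observing that a vertex $(p_i,\sigma_i)$ of $\P(\sigma)$ coincides with the corresponding vertex $(p_i,\kappa_i)$ of $\P(\kappa)$ precisely when the \emph{full} in-neighbourhood $\In^G(p_i)=\chi(\sigma_i)$ lies inside $\chi(\rho)$---and only then \emph{defines} $\P(\rho):=\P(\sigma)\cap\P(\kappa)$, which is well-defined exactly because of~\eqref{eq:bc}. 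Strictness for arbitrary (non-facet) $\sigma,\kappa$ is then obtained by sandwiching between suitable facets. If you want to keep your order of argument, the fix is to use the stricter intrinsic definition in which a vertex $(p_i,\tau)$ belongs to $\P(\rho)$ only when some $G\in\bD$ has $\In^G(p_i)=\chi(\tau)\subseteq\chi(\rho)$ \emph{exactly} (not merely after intersecting with $\chi(\rho)$); your $\supseteq$ argument already establishes this, and with that definition the $\subseteq$ direction becomes the tautology you intended.
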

\begin{proof}
Using the notation from \cref{def:bc}, assume $\rho=\sigma \cap \kappa = \sigma \cap \tau$
for $0 \leq \dim(\rho) < n-1$; for the remaining cases, \cref{eq:Pstrict} holds
trivially. Consider the facet
$F_\sigma=\{(p_{1}, \sigma_1), (p_{2}, \sigma_2), \ldots (p_{n}, \sigma_n)\}$ 
resp.\
$F_\kappa=\{(p_{1}, \kappa_1), (p_{2}, \kappa_2), \ldots (p_{n}, \kappa_n)\}$ caused
by the same graph $G\in\bD$ in $\P(\sigma)$ resp.\ $\P(\kappa)$ according to \cref{eq:faceN}.
Recall that $\sigma_i$ resp.\ $\kappa_i$ is a face of $\sigma$ resp.\ $\kappa$ that represents
the information $p_i$ receives from the processes in $\chi(\sigma_i)=\chi(\kappa_i)$ via 
$\In^G(p_i)$. 

A vertex $(p_{i}, \kappa_i)$ appears in $\P(\sigma)\cap \P(\kappa)$ if and only if
$\kappa_i=\sigma_i$, which, in turn, holds only if $\chi(\kappa_i) \subseteq \chi(\rho)$.
Indeed, if $\kappa_i$ would contain just one vertex $v\in V(\kappa)$ with $\chi(v) \in 
\chi(\kappa\setminus\rho)$, then $\sigma_i$ would contain the corresponding vertex 
$v'\in V(\sigma)$ with $\chi(v')=\chi(v)$ satisfying $v'\neq v$ since
$(\kappa\setminus\rho) \cap (\sigma\setminus\rho) = \emptyset$, by the definition of $\rho$.
This would contradict $\kappa_i=\sigma_i$, however. Note that, since $p_i\in \chi(\sigma_i)$ 
for every $i$, this also implies $p_i \in \chi(\rho)$. 

Consequently, it is precisely the maximal face in $F_\sigma$ (and in $F_\kappa$) 
consisting only of identical vertices $(p_{i}, \kappa_i)=(p_{i}, \sigma_i)$ that appears
in $\P(\sigma)\cap \P(\kappa)$. Since this holds for all graphs $G\in\bD$, it follows that 
the subcomplex $\P(\sigma)\cap \P(\kappa)$, as the union of the resulting identical maximal 
faces, has dimension at most $\dim(\rho)$. Now, since exactly the same reasoning also
applies when $\kappa$ is replaced by $\tau$, we get $\P(\sigma)\cap\P(\kappa)=
\P(\sigma)\cap\P(\tau)$, so \cref{eq:bc} and hence boundary consistency of $\P$ holds.

We can now just \emph{define} $\P(\rho) = \P(\sigma \cap \kappa) := \P(\sigma) \cap \P(\kappa)$,
which secures \cref{eq:Pstrict} for facets $\sigma, \kappa \in \K$. For general simplices,
assume for a contradiction that there are $\sigma$, $\kappa$ with $\rho=\sigma \cap \kappa\neq \emptyset$ but $\P(\rho) \neq 
\P(\sigma) \cap \P(\kappa)$. Choose facets $\sigma'$, $\kappa'$ and $\sigma''$, $\kappa''$ satisfying $\rho=\sigma' \cap \kappa'$, $\rho=\sigma'' \cap \kappa''$,
$\sigma=\sigma'\cap\sigma''$ and $\kappa=\kappa'\cap \kappa''$, which is always possible. 
Applying \cref{eq:Pstrict} to all
these pairs results in $\P(\rho) = \P(\sigma') \cap \P(\kappa') = \P(\sigma'') \cap \P(\kappa'')$, 
$\P(\sigma) = \P(\sigma') \cap \P(\sigma'')$
and $\P(\kappa) = \P(\kappa') \cap \P(\kappa'')$. We hence find
\[
\P(\rho) \neq \P(\sigma)\cap \P(\kappa) = \P(\sigma') \cap \P(\sigma'') \cap \P(\kappa') \cap \P(\kappa'') = \P(\rho) \cap \P(\sigma'') \cap \P(\kappa'') = \P(\rho),
\]
which is a contradiction. \qed
\end{proof}
Note that $\P$ can hence indeed be interpreted as a carrier map, according to \cref{def:carrmap},
which is strict. It is well known that strictness implies that, for any simplex $\rho \in \P(\K)$, 
there is a unique simplex $\sigma$ with smallest dimension in
$\K$, called the \emph{carrier} of $\rho$, such that $\rho \in \P(\sigma)$.

\medskip

A comparison with \cref{def:upcomplex} reveals that $\P_1=\P(\sigma_0)$ as given in
\cref{def:Pps} is indeed just the uninterpreted 1-round protocol complex.
The general $r$-round uninterpreted protocol complex $\P_r$, $r\geq 1$, is
defined as $\P(\P_{r-1})$, i.e., as the union of $\P$
applied to every facet $\sigma$ of $\P_{r-1}$, formally
$\P_r = \bigcup_{\sigma \in \P_{r-1}} \P(\sigma)$. 
Boundary consistency ensures that $\P_r=\P^r(\sigma_0)$ for the initial simplex
$\sigma_0 = \{(p_1,\{p_1\}), \dots, (p_n,\{p_n\})\}$  is well-defined for any $r\geq 0$.
An example for $r=2$ can be found in the
bottom part of \cref{fig:RAScomplexes}. 
Note that the arrows of the in-edges of a vertex $(p_i,h_r(p_i))$ in a facet in $\P_r$ 
represent the outermost level in \cref{eq:hr}; the labeling of the in-edges of $p_i$ in earlier
rounds $< r$ is no longer visible here. However, given the simplex 
$\rho=\{(p_1,h_r(p_1)), \dots, (p_n,h_r(p_n))\} \in \P_{r}$, the
labeling of the vertices $(p_j,h_{r-1}(p_j))\in V(\sigma)$
of the carrier $\sigma \in \P_{r-1}$ of $\rho$, i.e., the unique
simplex satisfying $\rho \in \P_1(\sigma)$, can be used to
recover the arrows for round $r-1$. 

We note that $\P_r=\P^{r-1}(\P(\sigma_0)) = \P(\P^{r-1}(\sigma_0))$ allows to view
the construction of $\P_r$ equivalently as applying the one-round construction $\P$ 
to every facet $F_{r-1}$ of $\P_{r-1}$ or else as applying the $(r-1)$-round construction $\P^{r-1}$
to every facet $F$ of $\P_1$. Boundary consistency of $\P$ again ensures
that this results in exactly the same protocol complex. Our decision procedure
for consensus solvability/impossibility provided in \cref{sec:consensuschar} 
will benefit from the different views provided by this construction.

\medskip

In the remainder of this section, we will discuss some consequences of
the fact that the carrier map corresponding to a general protocol complex 
construction operator $\P$ is always strict but need not be rigid (recall \cref{lem:bcP}).
This is actually a consequence of the asymmetry in the protocol complex
construction caused by graphs $\bD$ that do not treat all processes
alike. 

Consider the complete communication pseudosphere shown in 
\cref{fig:pseudospheres}, which corresponds to $\bD$ containing
\emph{all} possible graphs with $n$ vertices
(recall \cref{def:recps}). 
It does treat all processes alike, which also implies
that its outer border, which is defined by $\P(\partial \sigma_0)$ (see \cref{def:border} below),
has a very regular structure: For example, the four white and green 
vertices aligned on the bottom side 
of the outer triangle of \cref{fig:pseudospheres} are actually an instance of 
the 2-process communication pseudosphere shown in (\ref{eq:graph}). Its
corresponding carrier map is rigid.
By contrast, the protocol complex for the message adversary 
$\bD=\{G_1,G_2\}$ depicted by the two highlighted facets corresponding 
to the graphs $G_1$ and $G_2$ in \cref{fig:pseudospheres} has a
very irregular border shown in \cref{fig:bordern3all}.

\begin{figure}
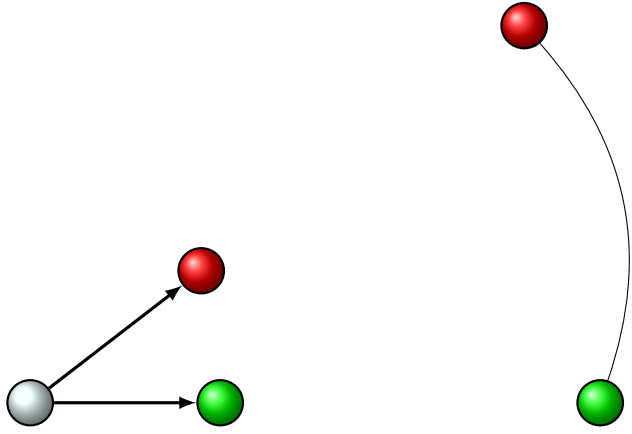

\ctikzfig{Figures/n3all_bent_border}
\caption{Border $\Bd(\P_1)$ of the simple message adversary shown in \cref{fig:pseudospheres}.}
\label{fig:bordern3all}
\end{figure}

It is worth mentioning, though, that there are other instances of
protocol complex construction operators that also have a rigid equivalent
carrier map. One important example is the popular standard chromatic subdivision~\cite{HKR13,Koz12:HHA},
which characterizes the iterated immediate snapshot (IIS) model of shared memory~\cite{HS99:ACT}:

\begin{definition}[Chromatic subdivision]\label{def:chrsub}
	Let $\mathcal{K}$ be an $(n-1)$-dimensional simplicial complex with a proper coloring $\chi: V(\mathcal{K}) \rightarrow \{p_1, \ldots, p_n\}$. We define the chromatic subdivision through its vertex set and facets as follows:
\begin{align}		
	V(\Ch(\mathcal{K})) =& 
	\bigl\{(p_i,\sigma) \in\Pi\times F(\K)\; | \;  p_i \in \chi(\sigma)\bigr\}, \label{eq:vertexCh}\\
	Fct(\Ch(\mathcal{K})) =& \bigl\{\{(p_1, \sigma_1), \ldots, (p_n, \sigma_n)\} \; | \;
	\exists\pi:[n]\to[n] \text{ permutation }
	\sigma_{\pi(1)} \subseteq \ldots \subseteq \sigma_{\pi(n)}, \nonumber\\
	&
	\forall 1 \leq i,j\leq n:
\chi(\sigma_i) \wedge (p_i\in\chi(\sigma_j) \Rightarrow \sigma_i \subseteq \sigma_j) \bigr\}.\label{eq:facetCh}
\end{align}	
\end{definition}

It is immediately apparent from comparing \cref{def:recps} and
\cref{def:chrsub} that $V(\Ps(\sigma^{n-1}))=V(\Ch(\sigma^{n-1}))$ and
$\Ch(\sigma^{n-1}) \subseteq \Ps(\sigma^{n-1})$, i.e., $\Ch(\sigma^{n-1})$ is 
indeed a subcomplex of $\Ps(\sigma^{n-1})$.
In \cref{fig:pseudospheres}, we have highlighted, via thick edges
and arrows, the protocol complex $\Ch(\sigma_0)$ for the corresponding 
message adversary.
In fact, the chromatic subdivision and hence the IIS model
is just a special case of our oblivious message adversary,
the set $\bD$ of which consists of all the directed graphs
that are unilaterally connected ($\forall G \in \bD, a,b\neq a \in V(G):
\mbox{$\exists$ directed path from $a$ to $b$ or from $b$ to $a$ in $G$}$)
and transitively closed ($\forall G \in \bD: (a,b),
(b,c)\in E(G) \Rightarrow (a,c)\in E(G)$).

\begin{lemma}[Equivalent message adversary for chromatic subdivision]\label{thm:MAforIIS}
	Let $\sigma_0$ be the uninterpreted input complex with process set $\Pi = \{p_1, \ldots, p_n\}$, and $\bD$ be the set of all unilaterally connected and transitively closed graphs on $\Pi$. Then, $\mathcal{P}(\sigma_0) = \Ch(\sigma_0)$.
\end{lemma}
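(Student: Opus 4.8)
The plan is to show the two complexes coincide by matching them facet-by-facet, which by \cref{def:recps,def:chrsub} amounts to matching the combinatorial data describing facets. First I would unwind both definitions on $\sigma_0^{n-1}=\{(p_1,\{p_1\}),\dots,(p_n,\{p_n\})\}$. A facet of $\Ps(\sigma_0)$ is a tuple $\{(p_1,\sigma_1),\dots,(p_n,\sigma_n)\}$ with each $\sigma_i$ a face of $\sigma_0$ containing the vertex $(p_i,\{p_i\})$; under the condensed labeling this is exactly a choice of subsets $S_i\subseteq\Pi$ with $p_i\in S_i$. A facet of $\mathcal{P}(\sigma_0)$ is such a tuple where, additionally, there exists a graph $G\in\bD$ with $\In^G(p_i)=S_i$ for all $i$ simultaneously. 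A facet of $\Ch(\sigma_0)$ is such a tuple satisfying the two conditions in \cref{eq:facetCh}: the $S_i$ form a chain under some permutation $\pi$, and $p_i\in S_j\Rightarrow S_i\subseteq S_j$ (the ``immediacy'' condition). So the statement reduces to: a tuple $(S_1,\dots,S_n)$ with $p_i\in S_i$ arises as the in-neighborhoods of a single unilaterally-connected, transitively-closed graph on $\Pi$ if and only if it satisfies the two $\Ch$-conditions.

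The key step is this equivalence of combinatorial conditions, which I would prove in both directions. For the ($\Rightarrow$) direction, suppose $G\in\bD$ is unilaterally connected and transitively closed, and set $S_i=\In^G(p_i)$. Transitive closure gives the immediacy condition directly: if $p_i\in S_j$, i.e.\ $(p_i,p_j)\in E(G)$, then for any $p_k\in S_i$, i.e.\ $(p_k,p_i)\in E(G)$, transitivity yields $(p_k,p_j)\in E(G)$, so $p_k\in S_j$; hence $S_i\subseteq S_j$. For the chain condition, unilateral connectivity plus transitive closure makes the reachability relation a total preorder on $\Pi$; ordering the processes so that ``earlier'' processes are reachable from (heard by) ``later'' ones, i.e.\ picking $\pi$ so that $(p_{\pi(a)},p_{\pi(b)})\in E(G)$ whenever $a\le b$, forces $S_{\pi(1)}\subseteq S_{\pi(2)}\subseteq\dots\subseteq S_{\pi(n)}$ (here the self-loops guarantee $p_i\in S_i$ and keep the inclusions consistent; ties in the preorder are broken arbitrarily, and one must check transitive closure makes the resulting $S$'s genuinely nested even across ties — this is where I expect to spend the most care).

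For the ($\Leftarrow$) direction, given $(S_1,\dots,S_n)$ satisfying the two $\Ch$-conditions, I would define $G$ by $E(G)=\{(p_j,p_i): p_j\in S_i\}$ and verify $G\in\bD$, i.e.\ that $G$ is transitively closed and unilaterally connected, and that $\In^G(p_i)=S_i$ (the last is immediate from the definition, using $p_i\in S_i$ for the self-loops). Transitive closure of $G$ follows from the immediacy condition: if $(p_k,p_i),(p_i,p_j)\in E(G)$ then $p_i\in S_j$, so $S_i\subseteq S_j$, and $p_k\in S_i$ gives $p_k\in S_j$, i.e.\ $(p_k,p_j)\in E(G)$. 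For unilateral connectivity, the chain $S_{\pi(1)}\subseteq\dots\subseteq S_{\pi(n)}$ shows that for any two processes $p_a,p_b$, the one appearing later in the $\pi$-order has the other in its $S$-set (since $p_b\in S_b\subseteq S_a$ or vice versa), giving a direct edge, hence certainly a directed path, between them.

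Finally, since $\Ps$, $\mathcal{P}$ and $\Ch$ are each determined by their facet sets over the common vertex set (all three are subcomplexes of $\Ps(\sigma_0)$, with $V(\mathcal{P}(\sigma_0))\subseteq V(\Ps(\sigma_0))=V(\Ch(\sigma_0))$ automatically, and equality of vertex sets following once the facets match), the combinatorial equivalence above yields $Fct(\mathcal{P}(\sigma_0))=Fct(\Ch(\sigma_0))$ and hence $\mathcal{P}(\sigma_0)=\Ch(\sigma_0)$. The main obstacle, as noted, is the bookkeeping in the chain/immediacy $\leftrightarrow$ transitive-closure/unilateral-connectivity dictionary — in particular handling processes that are mutually reachable (so that $G$ restricted to them is a complete digraph, i.e.\ a single ``strongly connected block'' collapsed in the preorder) and making sure the permutation $\pi$ can always be chosen to linearize the nesting; everything else is unwinding definitions.
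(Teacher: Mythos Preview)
Your proposal is correct and follows essentially the same approach as the paper: both directions reduce the question to the combinatorial equivalence between the $\Ch$-conditions on $(S_1,\dots,S_n)$ and the in-neighborhood structure of a transitively closed, unilaterally connected graph, and both construct $G$ from the $S_i$ via $E(G)=\{(p_j,p_i):p_j\in S_i\}$ in the $(\Leftarrow)$ direction. For the $(\Rightarrow)$ direction, where you invoke a total preorder and flag the handling of ties as the delicate point, the paper does exactly this explicitly: it passes to the strongly connected component graph $G^*$, observes that transitive closure plus unilateral connectivity makes $G^*$ a transitive tournament, takes its Hamiltonian path to order the components, and breaks ties within a component by process id---this is precisely the linearization of your total preorder, and confirms that your anticipated bookkeeping goes through without surprises.
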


\begin{proof}
	Notice first that for any face $\sigma \in \sigma_0$ such that $p_i \in \chi(\sigma)$, there exists a graph $G_\sigma \in \bD$ such that $\In^{G_\sigma}(p_i)=\chi(\sigma)$: simply consider $E(G_{\sigma}) = \{ (u,v) \; \vert \; u \in \chi(\sigma) \wedge v \in \Pi \} \cup \{ (w,y) \; \vert \; w, y \notin \chi(\sigma)\}$. By construction, $G_\sigma$ is both transitively closed and unilaterally connected. Therefore, $V(\Ch(\sigma_0)) \subseteq V(\mathcal{P}(\sigma_0))$. On the other hand, from \cref{def:Pps} of the protocol complex pseudosphere construction, it follows that $V(\mathcal{P}(\sigma_0))\subseteq V(\Ch(\sigma_0))$. Consequently, $V(\mathcal{P}(\sigma_0)) = V(\Ch(\sigma_0))$.
	
	Let $\sigma= \{(p_1, \sigma_1), \ldots, (p_n, \sigma_n)\}$ be a facet of $\Ch(\sigma_0)$, and consider the graph $G_\sigma$ with edges $E(G_\sigma) = \{(p_j,p_i) \; \vert \; p_j \in \chi(\sigma_i)\}$. Assume that $(p_i,p_j)$, $(p_j,p_k) \in E(G_\sigma)$. By \cref{def:chrsub}, it holds that $p_i \in \chi(\sigma_j)$, which implies $\sigma_i \subseteq \sigma_j$. Analogously, $\sigma_j \subseteq \sigma_k$ and therefore $\sigma_i \subseteq \sigma_k$. It hence follows that $p_i \in \chi(\sigma_k)$, and by construction of $G_\sigma$, that $(p_i,p_k) \in E(G_\sigma)$. This shows that $G_\sigma$ is transitively closed.
	
	 Now consider $p_i, p_j \in \Pi$. Since $\pi$ is a permutation, $p_i = p_{\pi(i')}$ and $p_j = p_{\pi(j')}$ for some $i',j' \in [n]$. Let us assume w.l.o.g that $i' \leq j'$. Then $\sigma_i = \sigma_{\pi(i')} \subseteq \sigma_{\pi(j')}= \sigma_j$, which implies that $p_i \in \chi(\sigma_j)$. From the definition of $G_\sigma$, it follows that $(p_i,p_j) \in E(G_\sigma)$. This shows that $G_\sigma$ is also unilaterally connected. Therefore, $\sigma$ must also be a facet of $\mathcal{P}(\sigma_0)$, i.e., $Fct(\Ch(\sigma_0)) \subseteq Fct(\mathcal{P}(\sigma_0))$.
	
	Conversely, let $\sigma=\{(p_1, \sigma_1), \ldots, (p_n, \sigma_n)\}$ be a facet of $\mathcal{P}(\sigma_0)$. Let $G_\sigma$ be the graph from $\bD$ that induces $\sigma$. Recall that $G_\sigma$ is unilaterally connected and transitively closed. Let $S_i$ denote the strongly connected component containing $p_i$. Since $G_\sigma$ is transitively closed, $S_i$ is in fact a directed clique. Therefore, $S_i \subseteq \chi(\sigma_i)= \In^{G_\sigma}(p_i)$. 
	Consider the component graph $G^*$ where $V(G^*) = \{ S_i \mid i \in [n]\}$, and $E(G^*) = \{(S_i,S_j ) \; \vert \; (p_i,p_j) \in E(G_\sigma) \}$. Since $G_\sigma$ is transitively closed and unilaterally connected, $G^*$ is a transitive tournament (where $(a,b)$ or $(b,a)$ must be present for all $a,b\in V(G^*)$).
        Therefore, $G^*$ has a directed Hamiltonian path $S_{\pi(1)}, \ldots, S_{\pi(s)}$ for $s=|V(G^ *)|$; note that $s\leq n$ since $S_i=S_j$ may be the same for different processes $p_i$ and $p_j$.
	
	Clearly, the permutation from the Hamiltonian path of connected components, extended by ordering processes leading to the same connected component according to their ids, induces a complete ordering of the process indices: $i \preceq j$ if $S_i = S_{\pi(i')}$ and $S_j = S_{\pi(j')}$ with $i' \leq j'$ and $i \leq j$, i.e., first we order each index $i$ according to the order of their connected component in the Hamiltonian path in $G^*$, and break ties according to their process ids. Therefore, $\preceq$ is a total ordering on $[n]$, and thus induces a permutation $\pi'$ with the property that if $i \leq j$, then either $S_{\pi'(i)} = S_{\pi'(j)}$, or there exists an edge from $S_{\pi'(i)}$ to $S_{\pi'(j)}$.
	
	From the transitive closure of $G_\sigma$ and the construction of $\pi'$, we get $\In^{G_\sigma}(\pi'(p_i)) = \bigcup \limits_{j=1}^{i} S_{\pi'(i)}$. Therefore, $\pi'$ is also a permutation of the $\sigma_i$ in $\sigma$ that satisfies the conditions for being a facet of $\Ch(\sigma_0)$. It follows that $\sigma \in Fct(\Ch(\sigma_0))$. Therefore $Fct(\mathcal{P}(\sigma_0)) \subseteq Fct(\Ch(\sigma_0))$, which completes the proof that $Fct(\mathcal{P}(\sigma_0)) = Fct(\Ch(\sigma_0))$ and thus $\mathcal{P}(\sigma_0) = \Ch(\sigma_0)$. \qed	
\end{proof}

For any pair of simplices $\sigma,\kappa \in \K$, it hence holds by \cref{eq:Pstrict}
that $\Ch(\sigma) \cap \Ch(\kappa) = 
\Ch(\sigma \cap \kappa)$, i.e., subdivided simplices  
that share a face intersect precisely in the subdivision of that
face in $\Ch(\K)$. \cref{lem:bcP} thus ensures that the iterated standard chromatic 
subdivision $\Ch^r(\K)$ is well-defined. 

Thanks to its regular structure, the equivalent carrier map is also rigid.
As is the case for the communication pseudosphere in \cref{fig:pseudospheres}, 
the four white and green vertices aligned on the bottom side of the outer triangle 
connected by thick arrows are actually an instance of 
a 2-process chromatic subdivision. Indeed, the standard chromatic subdivision $\Ch(\sigma^\ell)$ 
of a simplex $\sigma^\ell$ of dimension $\ell$ can be constructed iteratively \cite{HKR13}: Starting out from the vertices
$V(\sigma^\ell)$, i.e., the 0-dimensional faces $\sigma^0$ of $\sigma^\ell$, where 
$\Ch(\sigma^0)=\sigma^0$, one builds $\Ch(\sigma^1)$ for the edge $\sigma^1$
by placing 2 new vertices in its interior and connecting them to each other and to the
vertices of $\sigma^1$. For constructing $\Ch(\sigma^3)$, one places 3 new vertices
in its interior and connects them to each other and to the vertices constructed before, 
etc. 

\begin{corollary}
	Let $\mathcal{K}$ be an arbitrary simplicial complex, then $\Ch(\mathcal{K}) = \mathcal{P}(\mathcal{K})$ with $\bD$ as the set of allowed graphs.
\end{corollary}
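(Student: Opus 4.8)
The plan is to reduce to the single-simplex case \cref{thm:MAforIIS} by using that, for $\bD$ the set of all unilaterally connected and transitively closed graphs on $\Pi$, both $\Ch$ and $\P$ act facet by facet on a pure, properly $n$-colored complex. First I would record that $\P(\K)=\bigcup_{\tau\in Fct(\K)}\P(\tau)$: this is how $\P$ was defined (as an operator applicable to the facets of $\K$), and \cref{lem:bcP} guarantees the facet images glue consistently, $\P(\tau)\cap\P(\tau')=\P(\tau\cap\tau')$, so the union is well defined. Dually, $\Ch(\K)=\bigcup_{\tau\in Fct(\K)}\Ch(\tau)$ can be read off \cref{def:chrsub}: in any facet $\{(p_1,\sigma_1),\dots,(p_n,\sigma_n)\}$ of $\Ch(\K)$ the witnessing chain $\sigma_{\pi(1)}\subseteq\cdots\subseteq\sigma_{\pi(n)}$ puts every $\sigma_i$ inside $\sigma_{\pi(n)}$, and since $p_i\in\chi(\sigma_i)\subseteq\chi(\sigma_{\pi(n)})$ for all $i$ we get $\chi(\sigma_{\pi(n)})=\{p_1,\dots,p_n\}$, so $\sigma_{\pi(n)}$ is a facet $\tau$ of $\K$ and the whole facet already belongs to $\Ch(\tau)$; the corresponding statement for vertices is immediate. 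Hence it suffices to prove $\Ch(\tau)=\P(\tau)$ for each facet $\tau$ of $\K$.

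For this last step, I would observe that \cref{thm:MAforIIS} is stated for the particular simplex $\sigma_0$, but its proof never uses anything about $\sigma_0$ except that $\chi$ restricts to a bijection of its vertex set onto $\Pi$ --- which is equally true of any facet $\tau$ of $\K$. Thus the witnessing-graph constructions (in particular $G_\sigma$ with $E(G_\sigma)=\{(p_j,p_i)\mid p_j\in\chi(\sigma_i)\}$), the transitivity and unilateral-connectedness verifications, and the strongly-connected-component decomposition that yields the transitive tournament $G^{*}$, its Hamiltonian path, and the permutation $\pi'$, all carry over verbatim with $\tau$ in place of $\sigma_0$, giving $V(\Ch(\tau))=V(\P(\tau))$ and $Fct(\Ch(\tau))=Fct(\P(\tau))$, i.e.\ $\Ch(\tau)=\P(\tau)$. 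Chaining the three equalities then yields $\Ch(\K)=\bigcup_{\tau\in Fct(\K)}\Ch(\tau)=\bigcup_{\tau\in Fct(\K)}\P(\tau)=\P(\K)$.

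I expect the only delicate point to be the bookkeeping of the first paragraph --- arguing that $\Ch(\K)$ and $\P(\K)$ really are assembled from their restrictions to the facets of $\K$, glued along matching overlaps, so that no vertices of $\Ch(\tau)$ and $\Ch(\tau')$ (resp.\ of $\P(\tau)$ and $\P(\tau')$) get identified beyond those forced by $\tau\cap\tau'$. For $\P$ this is precisely strictness from \cref{lem:bcP}; for $\Ch$ it is the intersection identity $\Ch(\sigma)\cap\Ch(\kappa)=\Ch(\sigma\cap\kappa)$ recorded right after \cref{lem:bcP}. A caveat worth flagging at the outset is that \cref{def:recps}, \cref{def:chrsub} and \cref{def:Pps} are phrased for $(n-1)$-dimensional pure properly $n$-colored complexes, so ``arbitrary simplicial complex'' in the statement carries that standing assumption; alternatively one works with the extensions of $\P$ (and correspondingly of $\Ch$) to lower-dimensional simplices supplied by \cref{lem:bcP}, and the facet-wise argument above applies to each facet whatever its dimension.
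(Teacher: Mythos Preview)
Your proposal is correct and follows essentially the same approach as the paper: the paper's proof is the one-liner ``Follows immediately from \cref{thm:MAforIIS} and boundary consistency of $\mathcal{P}(\mathcal{K})$,'' and what you have written is precisely an unpacking of that sentence --- reducing to facets via boundary consistency/strictness and then invoking \cref{thm:MAforIIS} on each facet. Your additional bookkeeping (the facet-wise decomposition of $\Ch$, the caveat about the standing purity/coloring hypothesis) is sound and makes explicit what the paper leaves implicit.
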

\begin{proof}
	Follows immediately from \cref{thm:MAforIIS} and boundary consistency of $\mathcal{P}(\mathcal{K})$.
\end{proof}

\subsection{Classification of facets of protocol complexes}

We first define the important concept of the border of a protocol complex.

\begin{definition}[Border]\label{def:border}
The \emph{border} $\Bd(\P_1)$ of a 1-round protocol complex $\P_1=\P(\sigma_0)$ 
is defined as $\Bd(\P_1)=\P(\partial \sigma_0)$. The \emph{border} $\Bd(\P_r)$
(resp.\ the border $\Bd(\C)$ of some subcomplex $\C \subseteq \P_r$) of the general 
$r$-round complex $\P_r=\P^r(\sigma_0)$ is $\Bd(\P_r)=\P^r(\partial \sigma_0)$.
\end{definition}

Due to the boundary consistency property of $\P$ (\cref{lem:bcP}), 
the border is just the ``outermost'' part of $\P_r$, i.e., the part
that is carried by $\partial \sigma_0$; the dimension of every 
facet $F\in\Bd(\P_r)$ is at most $\dim(\sigma_0)-1=n-2$. 
Recall that it may also be smaller than $n-2$, since $\P$ viewed as a carrier
map need not be rigid. Obviously, however, $F$ is always a face of some facet in
$\P_r$. In the case of \cref{fig:pseudospheres}, 
	where $\P_{1}=\{G_1,G_2\}$
	with the graphs $G_1,G_2$ given in Eqn.~(\ref{eq:advcomm}),
$\Bd(\P_1)$ only consists of 
the three edges and the vertices shown in
\cref{fig:bordern3all}.
Observe that the processes of the vertices $V(\rho)$ of a face $\rho \in \Bd(\P_r)$
may possibly have heard from each other, but not from processes in $\Pi\setminus V(\rho)$, 
in any round $1,\dots, r$. 

\begin{figure}
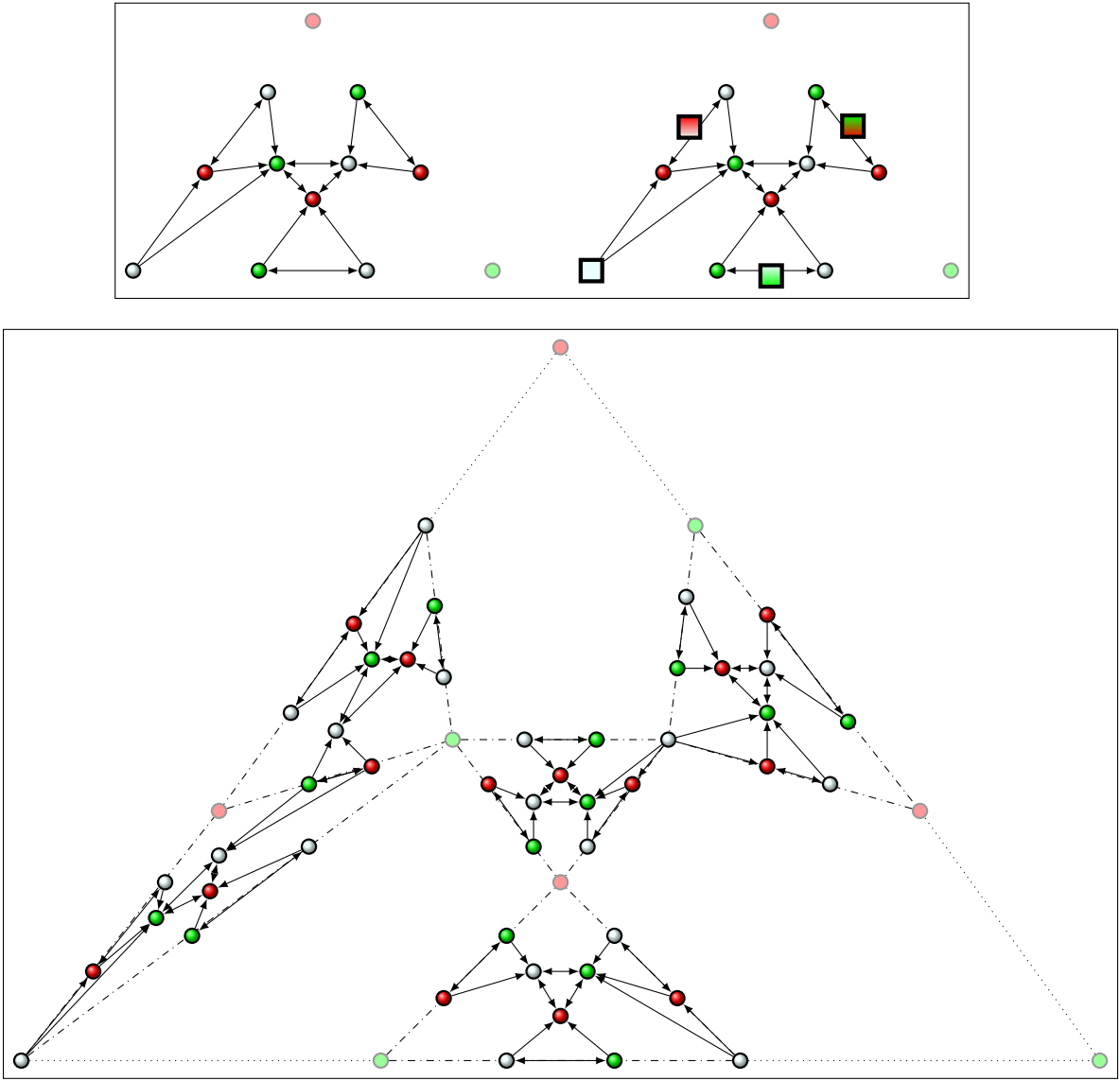

\ctikzfig{Figures/ras1}
\ctikzfig{Figures/ras2}
\caption{Protocol complex $\P_1^{RAS}$ for one round (top) and $\P_2^{RAS}$ for two rounds (bottom) of the RAS message adversary. The top right figure also shows the border root components of $\P_1^{RAS}$.}
\label{fig:RAScomplexes}
\end{figure}

For a more elaborate running example, consider the RAS message adversary 
shown in \cref{fig:RAScomplexes}: 
Its 1-round uninterpreted complex $\P_1^{RAS}$ (top left part) 
is reminiscent of the well-known radioactivity sign, hence its name.
Its 2-round uninterpreted complex $\P_2^{RAS}$ is shown in the bottom part of the figure. It is 
constructed by taking the union of the 1-round uninterpreted complexes $\P(F)$ for every facet 
$F \in \P_1$. Its border $\Bd(\P_2^{RAS})$ is formed by all the vertices
and edges of the faces that lie on the (dotted and partly dash-dotted) borders of the outermost triangle.

For classifying the facets in a protocol complex, the \emph{root components} of the graphs 
in $\bD$ will turn out to be crucial. 

\begin{definition}[Root components]\label{def:root}
  Given any facet $F$ in the protocol complex $\P_r$, $r\geq 1$, let $\sigma \in \P_{r-1}$ be its carrier, i.e., the unique facet such that $F\in \P(\sigma)$, and $G\in \bD$ be the corresponding graph leading to $F$ in $\P(\sigma)$.
  A \emph{root component} $R(F)$ of $F$
  is the face of $F$ corresponding to a strongly connected component $R$ in $G$
  without incoming edges from $G\setminus{R}$. 
\end{definition}

It is well-known that every directed graph with $n$ vertices has at least one and 
at most $n$ root components, and that every process in $G$ is reachable from every
member of at least one root component via some directed path in $G$. 
Graphs with a single root component are called rooted, and
it is easy to see that just one graph in $\bD$ that is not rooted makes consensus trivially 
impossible:
The adversary might repeat this graph forever, preventing the different root components from coordinating the output value.
In the sequel, we will therefore restrict our attention to message
adversaries where $\bD$ is made up of rooted graphs only, and will 
denote by $R(G)=R(F)$ the face representing the root component of $F$. Note that $R(G)$ is a
face and hence includes the edges of the interconnect and their orientation; 
its set of vertices is denoted by $V(R(F))=\{(p_i,h_r(p_i))\;|\; p_i \in \chi(R(F))\}$.
Recall from \cref{def:HOhistory} that the set of processes that $p_i$ has actually heard of in some 
vertex $v=(p_i,h_r(p_i))\in V(\rho)$ is denoted $\cup h_r(p_i)$.

\begin{definition}[Border facets]\label{def:borderfacets}
A facet $F\in \P_r$ is a \emph{border facet}, if the subcomplex $F \cap \Bd(\P_r)$ 
is non-empty. 
The subcomplex $F \cap \Bd(\P_r)$ will be called \emph{facet borders} of $F$. 
A border facet $F$ is \emph{proper}
if the members of the root component did not collectively hear from all processes, i.e., 
$\bigcup_{(p_i,h_r(p_i)) \in V(R(F))} \cup h_r(p_i) \neq \Pi$.
\end{definition}

Intuitively, a border facet $F\in \P_r$ has at least one vertex $v \in \Bd(\P_r)$.
It is immediately apparent that $v$ may have heard at most from processes in some 
face $\rho\in\Bd(\P_r)$, which has dimension at most $n-2$, but not from
processes outside $\rho$ (so, in particular, not from all processes).

The facet borders $F \cap \Bd(\P_r)$ of a border facet $F$ form indeed a subcomplex in general, rather than
a single face, as is the case in, e.g., the left part of \cref{fig:bordern3all} (generated by $F$ that represents the graph $G_2$) shows. 
Moreover, $F\cap\Bd(\P_r)$ does not even
need to be connected. 
For example, if the message adversary of \cref{fig:bordern3all} would also include the graph $G_3=\{r \to g \to w\}$, i.e., a chain (with root component $r$), we observe $F_3\cap\Bd(\P_1) = \{r \to g, w\}$ 
for the corresponding facet $F_3$. 
Finally, it need not
even be the case that $F \cap \Bd(\P_r)$ contains the entire
root component $R(F)$: Since $\dim(\Bd(\P_r))=n-2$, this is inevitable
if $F$ is not a proper border facet, i.e., if the members of $R(F)$ have collectively heard from all processes. 
For instance, if the message adversary of \cref{fig:bordern3all} also contained the cycle $G_4=\{r\to g \to w \to r\}$ (with root component $R(F_4)=F_4=\{r\to g \to w \to r\}$ consisting of all processes), 
then the (improper) border facet $F_4\cap\Bd(\P_1) = \{r, g, w\}$ obviously cannot contain $R(F_4)$.

\begin{definition}[Border components and border root components]\label{def:bcrc}
For every proper border facet $F \in \P_r$, the \emph{border component} $B(F)$ is
the smallest face of $F$ whose members did not hear from processes outside of $B(F)$, that is, $\bigcup_{(p_i,h_r(p_i))\in V(B(F))} \cup h_r(p_i) 
\subseteq \chi(B(F))$. 
For a facet $F$ that is not a proper border facet, we use the convention $B(F)=F$ for completeness.
The set of all proper border components of $\P_r$ is denoted as $\BdC(\P_r)$ 
(with the appropriate restriction $\BdC(\C)$ for a subcomplex $\C\subseteq \P_r$).

The root component $R(F)$ of a proper border facet $F$ is called
\emph{border root component}; it necessarily satisfies $R(F)\neq F$.
The set of all border root components of $\P_r$ resp.\ a 
subcomplex $\C\subseteq \P_r$ is denoted $\BdR(\P_r)$ resp. $\BdR(\C)$.
\end{definition}

\Cref{lem:propbc} below will assert that the border component of a facet is unique and contains its root component.

\begin{definition}[Border component carrier]\label{def:bccarr}
The \emph{border component carrier} $\B(F)$ of a proper border facet $F$ is the 
smallest face of the initial simplex $\sigma_0=\{(p_1,\{p_1\}), \dots, (p_n,\{p_n\})\}$ that carries $B(F)$.
For a facet $F$ that is not proper, we use the convention $\B(F)=\sigma_0$ for consistency.
\end{definition}
 
Since $\chi\bigl(\B(F)\bigr)=\chi(B(F))$, it is apparent that $\B(F)$ implicitly also characterizes the heard-of sets of the processes in $B(F)$: According to \cref{def:bcrc}, its members may have heard from processes in $\B(F)$ but not from other processes. Note carefully that this also tells something about the knowledge of the processes regarding the initial values of other processes, as the members of $B(F)$ can at most know the initial values of the processes in $\B(F)$.

For an illustration, consider the top right part of \cref{fig:RAScomplexes},
which shows the border root components of border facets of $\P_1^{RAS}$ for the RAS message adversary, represented by square nodes with fat borders. 
$\BdR_2^{RAS}$ depicted in the bottom part of \cref{fig:RAScomplexes}
consists of
all faces $B(F)$ of border facets $F$ touching the outer border: Going in clockwise direction, starting with the 
bottom-leftmost border face, we obtain the following pairs 
(border root component, border component carrier) representing $B(F)$ of a border facet: $(w,\{w\})$, $(r \leftrightarrow w, \{r,w\})$, $(r \leftrightarrow w, \{r,w\})$,  $(w,\{w,r\})$, $(r \leftrightarrow g, \{r,g\})$, $(w,\{w,g\})$, $(w \leftrightarrow g, \{w,g\})$. It is apparent that the members of the border root
component $w \leftrightarrow g$ in the last pair $(w \leftrightarrow g, \{w,g\})$ (representing the border facet on the bottom) only know their own initial values, but not the initial value of the red process.

\begin{lemma}[Properties of border component of a proper border facet]\label{lem:propbc}
The border component $B(F)$ of a proper border facet $F \in \P_r$ satisfies the following properties:
\begin{enumerate}
\item[(i)] $B(F)$ is unique,
\item[(ii)] $R(F) \subseteq B(F) \subseteq F \cap \Bd(\P_r)$, which also implies $B(F)\neq F$, 
\item[(iii)] $B(F)=R(F)$ for $r=1$, but possibly $R(F) \subsetneq B(F)$ for $r \geq 2$.
\end{enumerate}
\end{lemma}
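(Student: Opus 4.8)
The plan is to pass to the purely combinatorial picture. Since the colouring $\chi$ is injective on the simplex $F$, a sub-face $\rho\subseteq F$ is determined by its colour set $P=\chi(\rho)\subseteq\Pi$, and the condition defining $B(F)$ reads ``$P$ is \emph{closed}'': $\bigcup_{p_i\in P}\cup h_r(p_i)\subseteq P$ (I will always mean \emph{nonempty} closed sets; $\emptyset$ is vacuously closed and excluded). Let $\sigma\in\P_{r-1}$ be the carrier of $F$ and $G\in\bD$ the graph producing $F$ in $\P(\sigma)$. Two elementary facts do most of the work: $\In^G(p_i)\subseteq\cup h_r(p_i)$ (self-loops), and $\cup h_{r-1}(p_i)\subseteq\cup h_r(p_i)$ (monotonicity). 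First I would show that every nonempty closed $P$ contains $\chi(R(F))$: closedness plus $\In^G(p_i)\subseteq\cup h_r(p_i)$ gives that $P$ has no $G$-edges entering it from $\Pi\setminus P$; since $G$ is rooted, any $p\in P$ is reachable in $G$ from the root component $R$, and walking such a path backwards (each in-neighbour of a vertex of $P$ is again in $P$) forces $P\cap\chi(R)\neq\emptyset$; strong connectivity of $R$ together with in-neighbour-closure then yields $\chi(R)\subseteq P$. Uniqueness (i) is then immediate: the nonempty closed colour sets form a family that is nonempty (it contains $\Pi$) and closed under intersection (if $p\in P_1\cap P_2$ with $P_1,P_2$ closed, then $\cup h_r(p)\subseteq P_1\cap P_2$), so it has a least element, namely $\chi(B(F))$, and it contains $\chi(R(F))$ — which is also the first inclusion of (ii).

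For (ii) it remains to prove $B(F)\subseteq F\cap\Bd(\P_r)$. Since $B(F)\subseteq F$ holds by definition, and since $F\cap\Bd(\P_r)$ has dimension at most $n-2<\dim(F)$, it suffices to show $B(F)\in\Bd(\P_r)$ — which, because $\Bd(\P_r)=\P^r(\partial\sigma_0)$, $B(F)\in\P^r(\B(F))$, and $\chi(\B(F))=\chi(B(F))$, is equivalent to $\chi(B(F))\neq\Pi$. The natural upper bound is $Q:=\bigcup_{p_i\in\chi(R(F))}\cup h_r(p_i)$, which is a \emph{proper} subset of $\Pi$ exactly because $F$ is a proper border facet. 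Using that $R$ is strongly connected and has no incoming $G$-edges (and the self-loops when $|R|=1$), the set of in-neighbours in $G$ of $\chi(R)$ is $\chi(R)$ itself, so one may rewrite $Q=\bigcup_{q\in\chi(R)}\cup h_{r-1}(q)$. The claim then reduces to: $Q$ is closed (whence $\chi(B(F))\subseteq Q\subsetneq\Pi$). I would attack this by induction on $r$, passing to the carrier $\sigma$ and its round-$(r-1)$ graph and trying to control, for each in-neighbour met along message chains out of the root, how far its round-$(r-1)$ history can reach.

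Part (iii) splits cleanly. For $r=1$ one has $\cup h_1(p)=\bigcup_{q\in\In^G(p)}\{q\}=\In^G(p)$, so a colour set is closed iff it is in-neighbour-closed in $G$; by the argument of the first paragraph every nonempty such set contains $\chi(R)$, and $\chi(R)$ is itself in-neighbour-closed (no incoming edges), hence $B(F)=R(F)$. For $r\ge2$ the inclusion may be strict, so a single witness suffices: one can use the border root components already displayed for $\P_2^{RAS}$ in \cref{fig:RAScomplexes} (e.g.\ the pairs $(w,\{w,r\})$ or $(w,\{w,g\})$, where $R(F)$ is a single vertex but $B(F)$ spans two processes), or build a three-process instance in which round~$1$ lets $p_w$ and $p_r$ both hear from a common sender and round~$2$ uses a graph with root $\{p_w\}$, yielding $\cup h_2(p_w)=\cup h_2(p_r)=\{p_w,p_r\}\subsetneq\Pi$, so that $B(F)$ is the $\{p_w,p_r\}$-face while $R(F)$ is the $\{p_w\}$-vertex.

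The main obstacle is the border inclusion in (ii) — proving that $Q$ is closed. The subtlety is that a process outside the root component may, in a single round, pass along information about a process that the root itself never hears of, so ``$Q$ closed'' is not a formal triviality: the induction on $r$ genuinely has to exploit how the carrier $\sigma$ constrains the round-$(r-1)$ histories of the in-neighbours that occur. I expect this to be the technical heart of the proof, and the place where the precise strength of the hypothesis ``proper border facet'' is needed.
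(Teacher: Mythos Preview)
Your arguments for (i) and (iii) are sound. For (i) you actually take a cleaner route than the paper: the paper supposes two minimal closed faces of the same cardinality and chases an element of their symmetric difference to a contradiction, whereas your observation that the nonempty closed colour sets are closed under intersection (with nonempty intersection, since each contains $\chi(R(F))$) yields the least element at once. For (iii) you match the paper, which simply appeals to \cref{def:bcrc} for $r=1$ and to an explicit example for $r\geq 2$.

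The gap is in (ii). You reduce $B(F)\subseteq F\cap\Bd(\P_r)$ to the claim that $Q:=\bigcup_{p\in\chi(R(F))}\cup h_r(p)$ is closed, correctly flag this as the crux, and propose an induction on $r$. That induction cannot succeed, because the claim is false. Take $n=3$, $\Pi=\{a,b,c\}$, round-$1$ graph $G_1$ with edges $a\to b$, $b\to c$ (root $\{a\}$) and round-$2$ graph $G_2$ with edges $c\to a$, $c\to b$ (root $\{c\}$), all with self-loops. For the facet $F\in\P_2$ produced by the sequence $(G_1,G_2)$ one has $\cup h_2(c)=\cup h_1(c)=\{b,c\}$, so $Q=\{b,c\}\subsetneq\Pi$ and $F$ is a proper border facet; but $\cup h_2(b)=\cup h_1(b)\cup\cup h_1(c)=\{a,b\}\cup\{b,c\}=\Pi$, so $b\in Q$ while $\cup h_2(b)\not\subseteq Q$. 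The exact phenomenon you worried about---a non-root in-neighbour relaying, in round $r$, information the root never receives---does occur, and the ``proper'' hypothesis does not prevent it. The paper does not go through $Q$: it argues the inclusion $B(F)\subseteq\Bd(\P_r)$ directly, first placing $R(F)$ inside $\Bd(\P_r)$ via properness and then, assuming some $v=(p_i,h_r(p_i))\in B(F)\setminus\Bd(\P_r)$, invoking the \emph{minimality} of $B(F)$ from \cref{def:bcrc} to exhibit a process already in the border that has heard from $p_i$, contradicting that process's border membership. So the lever your approach is missing is the minimality of $B(F)$ itself; bounding by $Q$ attempts to establish something strictly stronger, and false.
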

\begin{proof}
As for (i), assume for a contradiction that there is some alternative $B'(F)$ of the same size.
Due to \cref{def:root}, both $R(F) \subseteq B(F)$ and $R(F) \subseteq B'(F)$ must hold, 
since some process in $B(F)$ would have heard from a process in $R(F)\setminus B(F)$ otherwise,
and, analogously, for $B'(F)$.
As $B(F)\neq B'(F)$, there is a $v'=(p_i',h_r(p_i'))  \in B'(F)\setminus B(F)$ that is present in $B'(F)$ because some $v\in B(F) \cap B'(F)$ has heard from $p_i'$ earlier.
But then, $v'$ is also present in $B(F)$, a contradiction.
For (ii), besides $R(F) \subseteq B(F)$, we also have
$\sigma = R(F) \cap \Bd(\P_r) \neq \emptyset$, since $R(F)$ of a proper border facet 
according to \cref{def:borderfacets} does not encompass all processes.
Now assume first that $R(F) \not\subseteq F \cap 
\Bd(\P_r)$, i.e., $R(F) \not\subseteq \Bd(\P_r)$ (since $R(F) \subseteq F$ obviously always holds).
For every facet $\sigma \in \Bd(\P_r)$, there is hence some $v=(p_i,h_r(p_i)) \in R(F)\setminus \sigma \neq \emptyset$. 
However, by the properties of root components, some process $p_j \in \chi(\sigma)$ 
must have heard from $p_i \not\in \chi(\sigma)$, which would contradict $p_j \in \chi(\sigma)$. 
Therefore, we must have $R(F) \subseteq \Bd(\Pr)$. For the final contradiction, by the same token,
assume that $B(F)\setminus\Bd(\P_r) \neq \emptyset$, i.e., for any facet $\sigma \in \Bd(\P_r)$, 
there is some $v=(p_i,h_r(p_i)) \in B(F)\setminus\sigma$. By the definition of border components according to
\cref{def:bcrc}, however, such a $v$ exists only if some process $p_j \in \chi(\sigma)$ has already heard
from $p_i \not\in \chi(\sigma)$, which would contradict $p_j \in \chi(\sigma)$. Thus, 
$B(F) \subseteq F \cap \Bd(\P_r)$. Finally,
$B(F)\neq F$ follows from the fact that $B(F) \subseteq \Bd(\P_r)$ imposes a maximum dimension
of $n-2$ for $B(F)$.

As for (iii), $B(F)=R(F)$ for $r=1$ follows immediately from \cref{def:bcrc}. For $r\geq 2$, it is of course possible that some process in $R(F)$ has already heard from a process outside 
$R(F)$ in some earlier round, see the bottom-left facet $F_2^l$ in \cref{fig:bordercomponents} for an example.
\qed
\end{proof}

We point out that the facet borders 
$F \cap \Bd(\P_r)$ of a proper border facet $F \in \P_r$ for $r\geq 2$ may also contain 
(small) faces in $\Bd(\P_r)$ that are disjoint
from $B(F)$, albeit they are of course always contained in a larger face of $F$ that also
contains $B(F)$. Examples can be found in the left part of \cref{fig:bordern3all}, where
$B(F)=\{w\}$ is disjoint from the single vertex $\{g\} \in F \cap \Bd(\P)$, or in the top-left facet $F_1^l$ in \cref{fig:bordercomponents}, where $B(F_1^l)=w$ is disjoint from the single vertex $\{g\} \in F_1^l \cap 
\Bd(\P_1)$.

\medskip

We conclude this section by stressing that border facets and border (root) components in $\P_r$ for $r\geq 2$
implicitly represent \emph{sequences} of faces: A single border facet $F$ and the
corresponding $R(F)\in \BdR(\P_r)$ resp.\ $B(F)\in \BdC(\P_r)$ actually represent 
all the border facets $F_1,\dots,F_r$ in rounds $1,\dots, r$ 
that carry $F=F_r$, and their corresponding $R(F_1),\dots,R(F_r)$ resp.\ $B(F_1),\dots, B(F_r)$.
Note carefully that, although $\B(F_{i+k})$, $k\geq 1$, is typically not
smaller than $\B(F_i)$, in particular, if $\chi\bigl(R(F_{i+k})\bigr)=\chi\bigl(R(F_i)\bigr)$, this need not
always be the case, since a process present in $B(F_i)$ may have heard from all 
other processes in $F_{i+k}$, so that it is no longer present in
$\Bd(F_{i+k})$ and hence in $\B(F_{i+k})$.
\cref{fig:bordercomponents} provides some additional illustrating examples
of border component carriers and border root components. 

\begin{figure}
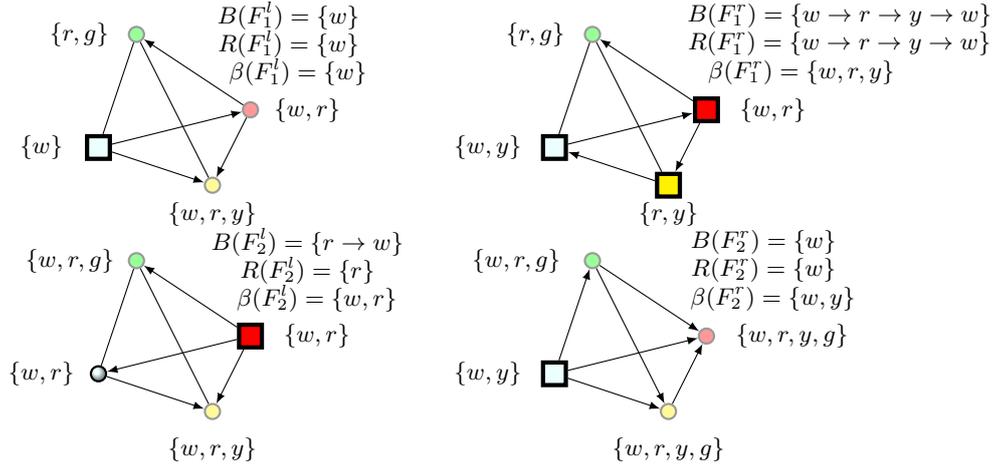

\ctikzfig{Figures/bordercomponents}
\caption{Illustration of border components and border root components, for two different examples (left, right column) for $n=4$ processes. Faded nodes represent vertices outside $B(F)$; squared nodes represent members of the border
root component. The first row shows the respective border facets 
in $\P_1$, the second row shows the border facets in $\P_2$. The labels provide the heard-of set of the nearby
process, assuming that the round-2 facet is applied to the round-1 facet atop of it. Observe that the red and yellow process in the bottom-right facet $F_2^r$ 
have heard from everybody and are hence removed from $\Bd(F_2^r)$ and $\B(F_2^r)$.}
\label{fig:bordercomponents}
\end{figure}

\section{Consensus Solvability/Impossibility}
\label{sec:consensuschar}

In this section, we will characterize consensus solvability/impossibility under an oblivious 
message adversary $\MA$ by means of the topological tools introduced in \cref{sec:topologyintro}.
Due to its ``geometrical'' flavor, our topological view not only provides interesting additional insights, 
but also prepares the ground for additional results provided in \cref{sec:terminationtime}.

The key insight of \cref{sec:incompbc} is that one cannot solve consensus in $r$ rounds if the $r$-round protocol 
complex $\P_r$ comprises a connected component that contains \emph{incompatible} proper 
border facets (defined as having a set of border components with an empty intersection).
In \cref{sec:charconsensus}, we focus on paths connecting pairs of incompatible proper
border facets in $\P_{r-1}$, and exhaustively characterize what happens to such a path
when $\P_{r-1}$ evolves to $\P_r$: It may either break, in which case consensus might
be solvable in $\P_r$ (unless some other path still prevents it), or it may be lifted, 
in which it still prohibits consensus. In \cref{sec:charconnectedcomp}, we recast our
path-centric characterization in terms of its effect on the connected components
in the evolution from $\P_{r-1}$ to $\P_r$. A suite of examples in \cref{sec:examples}
illustrates all the different cases. Finally, \cref{sec:decisionprocedures} presents a
alternative (and sometimes more efficient) formulation of the consensus decision 
procedure given in \cite{WPRSS23:ITCS},
which follows right away from the topological characterization of consensus solvability
developed in the previous subsections.

\subsection{Incompatibility of border components}
\label{sec:incompbc}

Consensus is impossible to solve in $r$ rounds
if the $r$-round protocol complex $\P_r$ has a connected component $\C$ that contains $k\geq 1$ proper border facets $\hF_1,\dots,\hF_k$ with incompatible border components $B(\hF_1),\dots, B(\hF_k) \in \Bd(\P_r)$, where incompatibility means $\bigcap_{i=1}^k \B(\hF_i) = \emptyset$:
Since no vertex of $B(\hF_i)$ could have had incoming edges from processes outside $B(\hF_i)$, in any of the rounds $1,\dots,r$, their corresponding processes cannot decide on anything but one of their own initial values. 
As all vertices in a connected component must decide on the same value, however, this is impossible. 

Incompatible border components occur, in particular, when $\hF_1,\dots,\hF_k$ have incompatible border root components $R(\hF_1),\dots, R(\hF_k) \in \BdR(\P_r)$.
An instance of this situation can be seen in the top right part of \cref{fig:RAScomplexes}: 
Since there is a path from the bottom-left white vertex (shown as a fat squared node that
represents the border root component consisting only of this vertex) to the border root component
consisting of the red and green square on the right edge of the outer triangle, consensus
cannot be solved in one round.

\subsection{Characterizing solvability via paths connecting incompatible border components}
\label{sec:charconsensus}

In this subsection, we will characterize the possible evolutions of a path that connects facets with
incompatible border components in some protocol complex $\P_{r-1}$, for some $r\geq 1$, which may either
break or may lead to a \emph{lifted} path connecting incompatible border components in $\P_r$.

Consider two border facets $\hF_x\neq\hF_y$ taken from a  set
of $k\geq 2$ incompatible proper border facets $\hF_1,\dots,\hF_k \in \C \subseteq \P_{r-1}$, $r\geq 2$, i.e., 
belonging to the connected component $\C$ and having incompatible 
border components $B(\hF_1),\dots,B(\hF_k)$ (see \cref{fig:pr1} for an illustration).
Since $\P_{r-1}$ is the result of repeatedly applying $\P$ to the single facet $\sigma_0$, 
there must be some smallest round number $1\leq \orr \leq r-1$ and two facets
$\otau_x\neq \otau_y$ 
with $B(\otau_x)\neq B(\otau_y)$ 
in $\P_{\orr}$
that
carry $\hF_x$ and $\hF_y$, respectively, 
i.e., $\hF_x \in \P_{r-1-\orr}(\otau_x)$
and $\hF_y \in \P_{r-1-\orr}(\otau_y)$.
Note that, as $\orr$ is minimal,
$\otau_x$ and $\otau_y$ are facets obtained by applying
$\P$ to the same facet $\oF\in \P_{\orr-1}$
(see \cref{fig:p0p1}).
For simplicity of exposition, we will assume
below that $\orr=1$, as otherwise we would have to introduce the definition
of a ``generalized border'' that does not start from $\P_1$ but
rather from $\P_{\orr}$. We will hence subsequently just write
$\P_1$, $\P_{r-1}$ and $\P_{r-2}$
instead of $\P_{\orr}$, $\P_{r-\orr}$, and $\P_{r-1-\orr}$, respectively.
Fortunately, this assumption can be made without loss of generality,
as all the scenarios that can occur in the case of $\orr > 1$ will
also occur when $\orr=1$.

\begin{figure}
	\includegraphics[scale=0.6]{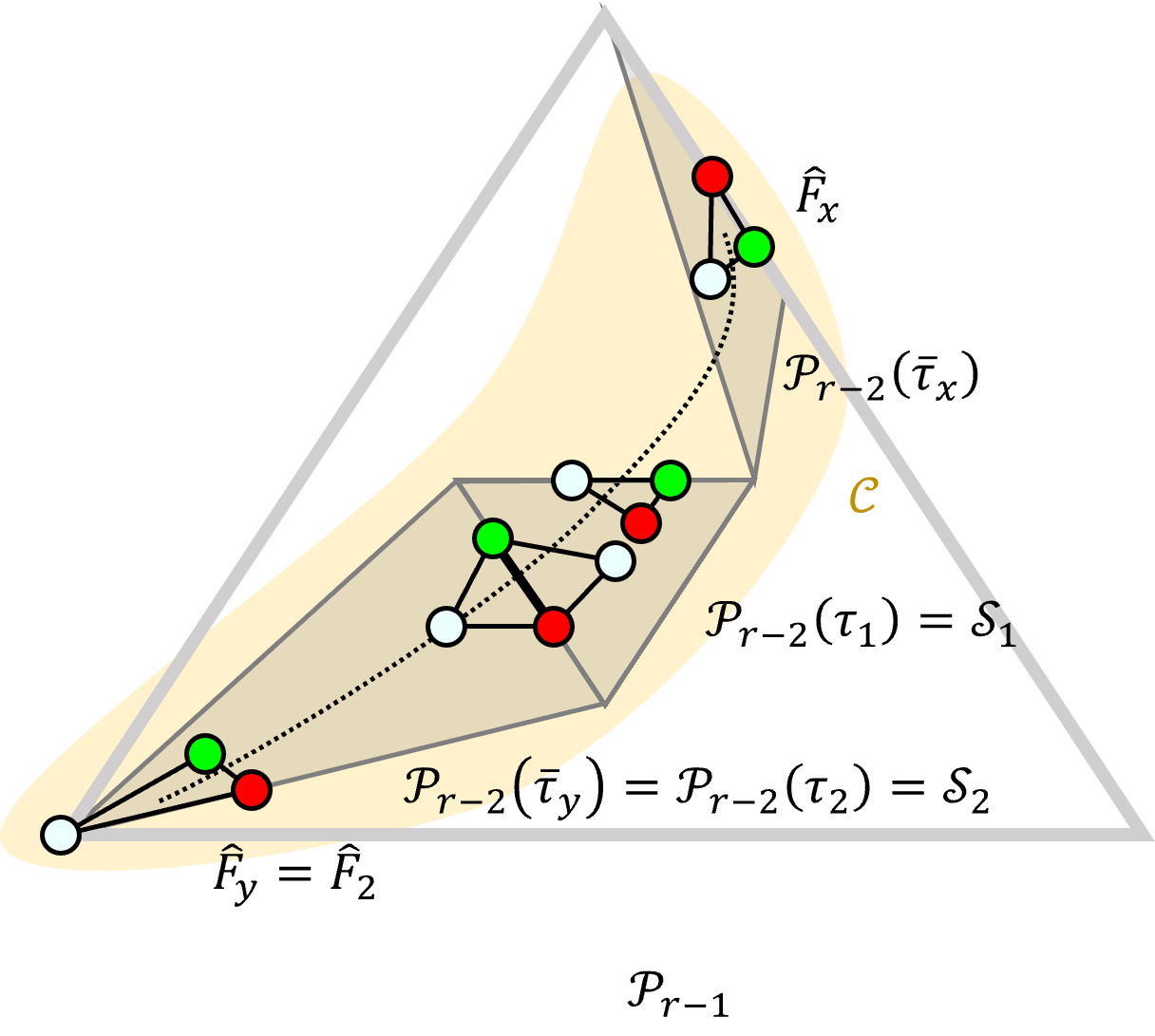} \hfil
	\includegraphics[scale=0.6]{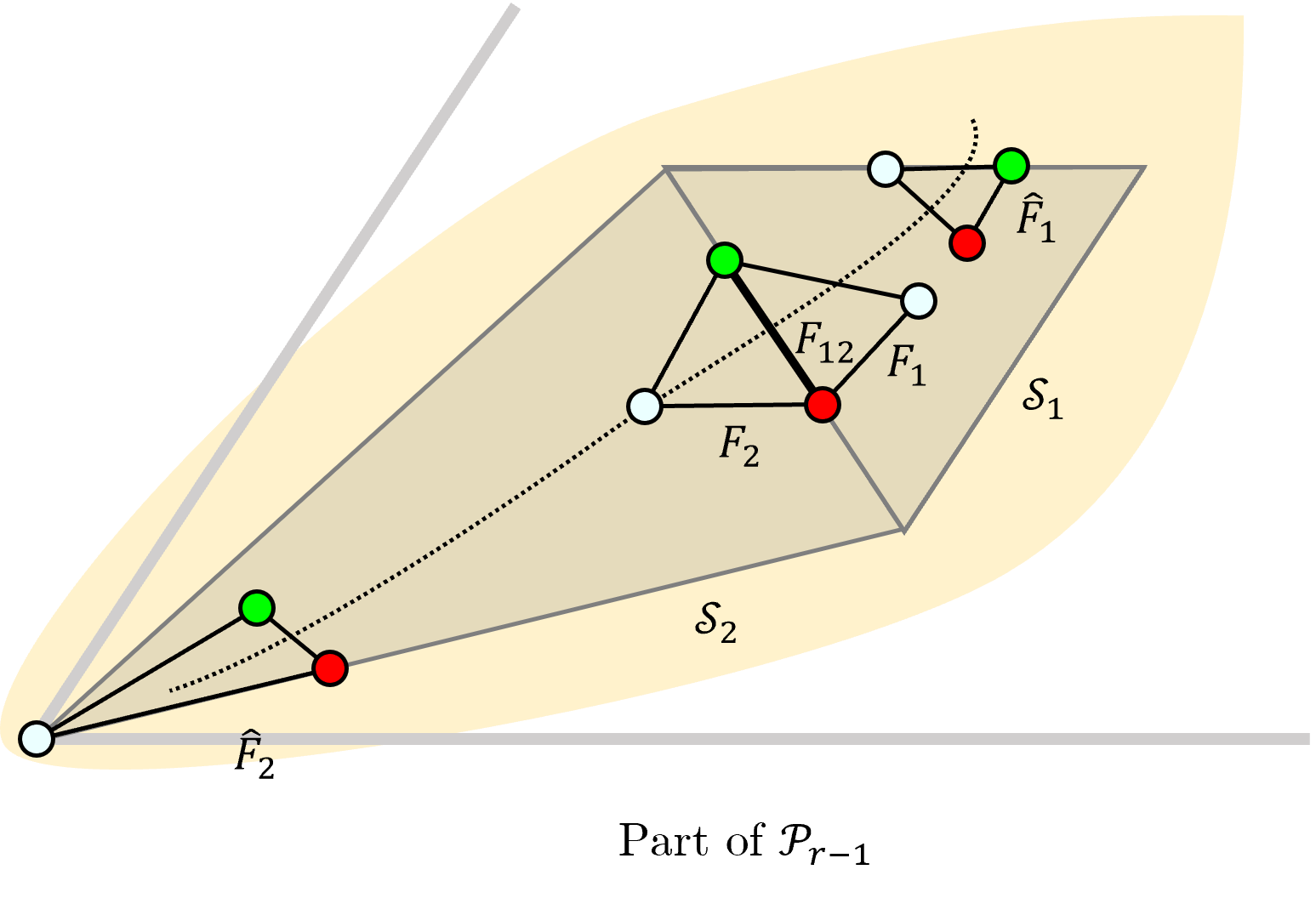}
	\centering
	\caption{$P_{r-1}$}
	\label{fig:pr1}
\end{figure}

\begin{figure}
	\includegraphics[scale=0.7]{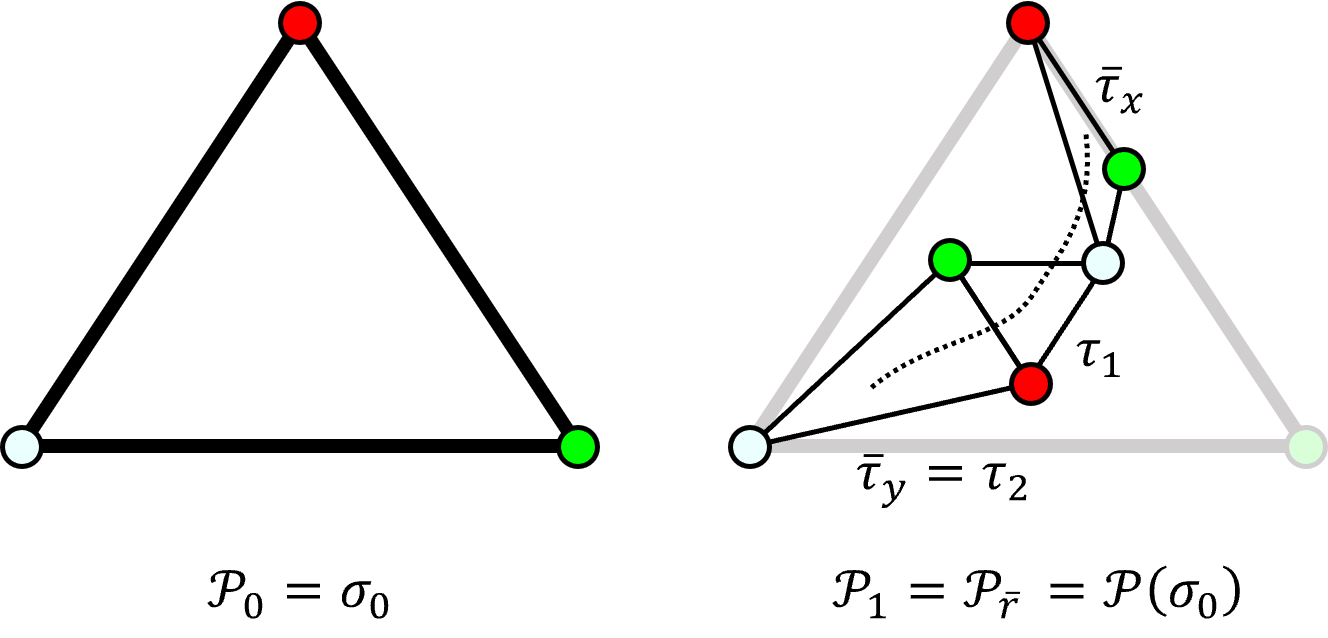}
	\centering
	\caption{$\P_0$ and $P_1$}
	\label{fig:p0p1}
\end{figure}

Since $\hF_x$ and $\hF_y$ are connected in $\C \subseteq \P_{r-1}$, 
$\otau_x$ and $\otau_y$
must be connected via one or more paths of adjacent facets in $\P_{1}$ as well. 
Consider an arbitrary, 
fixed path connecting the proper border facets $\hF_x$ and $\hF_y$ in $\P_{r-1}$, and its unique
corresponding path connecting $\otau_x$ and $\otau_y$ in $\P_{1}$. Let $\tau_1$ and $\tau_2$ be any two adjacent facets on the path in $\P_1$,
and $\tau_{12}=\tau_1 \cap \tau_2 \neq \emptyset$. In $\P_{r-1}$, the facets $\tau_1$ and $\tau_2$
induce connected subcomplexes $\S_1=\P^{r-2}(\tau_1)$ and $\S_2=\P^{r-2}(\tau_2)$ with a non-empty intersection $\S_1 \cap \S_2 \neq\emptyset$. 
The path from $\hF_x$ to $\hF_y$ in $\P_{r-1}$ must enter $\S_1$ at some facet $\hF_1$ and exit  $\S_2$ at some facet $\hF_2$, 
i.e., there is a path connecting $\hF_x$ to $\hF_1$ and a path connecting $\hF_y$ to $\hF_2$, 
and cross the border between $\S_1$ and $\S_2$ via adjacent facets $F_1 \in \S_1\subseteq \P_{r-1}$ and $F_2 \in \S_2\subseteq \P_{r-1}$ with
$\emptyset \neq F_{12}=F_1 \cap F_2 \in \S_1 \cap \S_2$. 
Note that $F_{12}$, as the intersection of two facets in the protocol complex $\P_{r-1}$, is of course a face. 

\begin{figure}
	\includegraphics[scale=0.7]{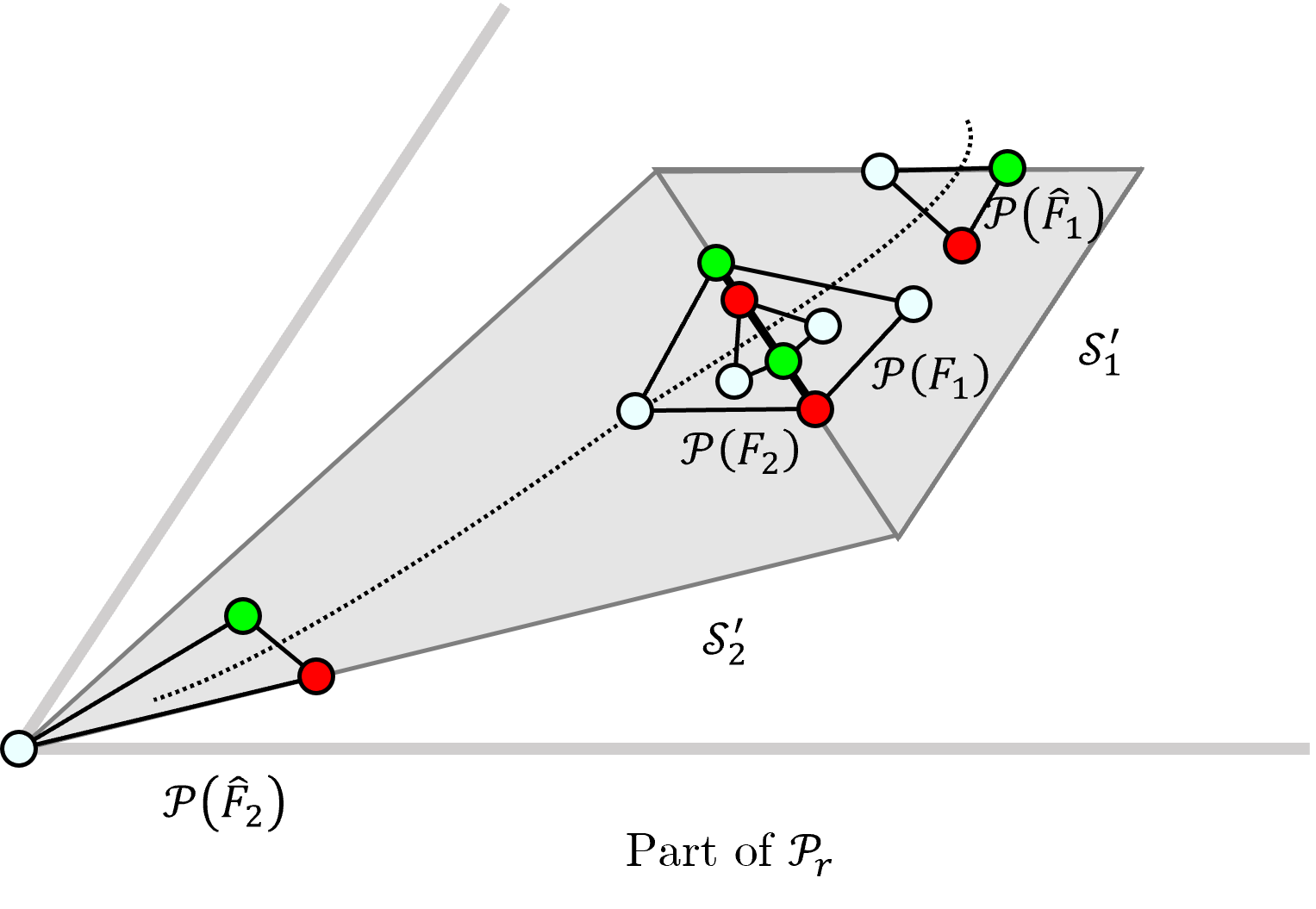}
	\centering
	\caption{Part of $\P_r$}
	\label{fig:prpartof}
\end{figure}

Now consider the outcome of applying $\P$ again to all the facets in $\P_{r-1}$, which of course gives~$\P_r$ (see \cref{fig:prpartof}).
In particular, this results in the subcomplexes $\S_1'=\P\bigl(\P^{r-2}(\tau_1)\bigr)=\P^{r-1}(\tau_1)$ and
analogously $\S_2'=\P^{r-1}(\tau_2)$, which may or may not have a non-empty intersection.
We will be interested in the part of this possible intersection created by the application of $\P$ to $F_1$ and $F_2$, i.e., in
$\P(F_1) \cap \P(F_2) \subseteq \S_1' \cap \S_2'$. 
Note that both $\S_1'$ and $\S_2'$ are isomorphic to $\P_{r-1}$. 
Clearly, the application of $\P$ to the facets $F_1, F_2 \in \P_{r-1}$
typically creates many pairs of intersecting border facets $F_1' \in \P(F_1) \subseteq \S_1'\subseteq \P_r$ and
$F_2' \in \P(F_2) \subseteq \S_2' \subseteq \P_r$, such that each pair shares some
non-empty face $\emptyset\neq F_{12}'=F_1'\cap F_2' \subseteq \S_1' \cap \S_2'$. 
The shared faces $F_{12}'$ together form the subcomplex $FC_{12}' \subseteq \P(F_1) \cap \P(F_2)$ (see \cref{fig:fc12}, left part, for two different examples).

\begin{figure}
	\includegraphics[scale=0.6]{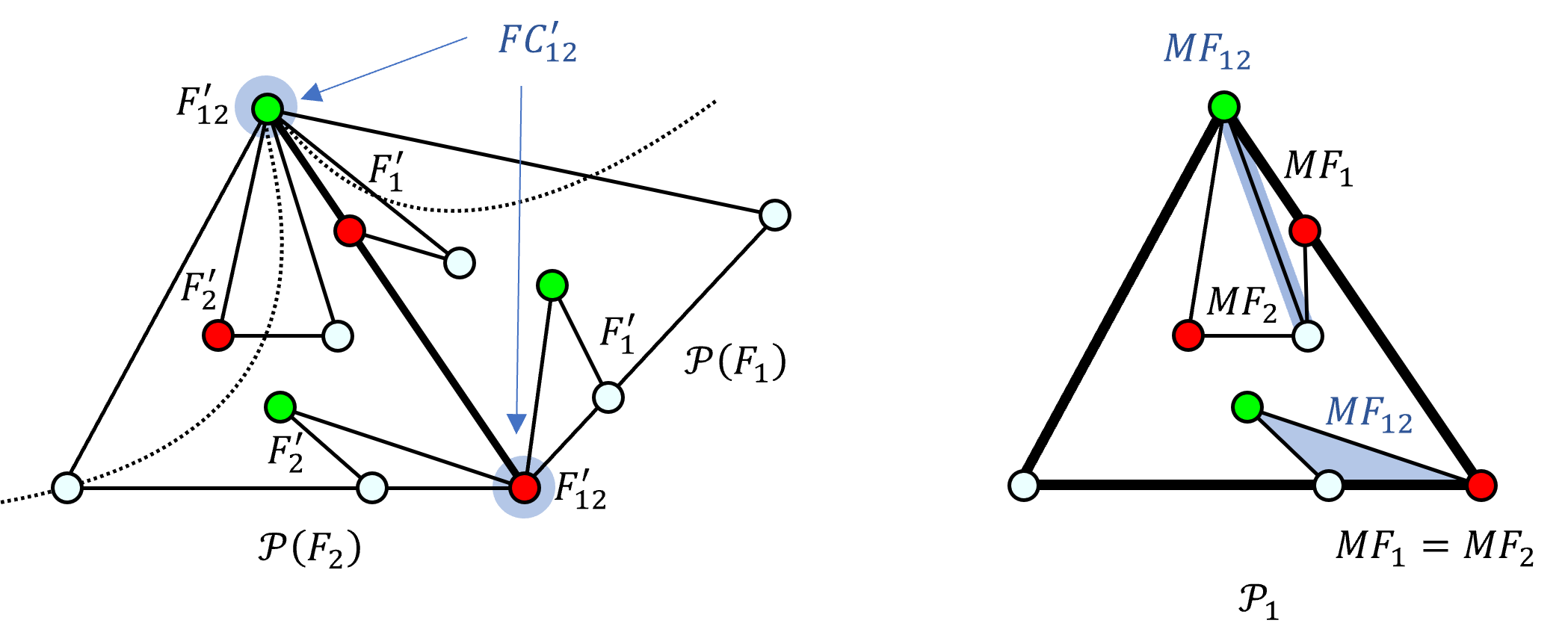}
	\centering
	\par\noindent\rule{0.6\textwidth}{0.5pt}
	\includegraphics[scale=0.6]{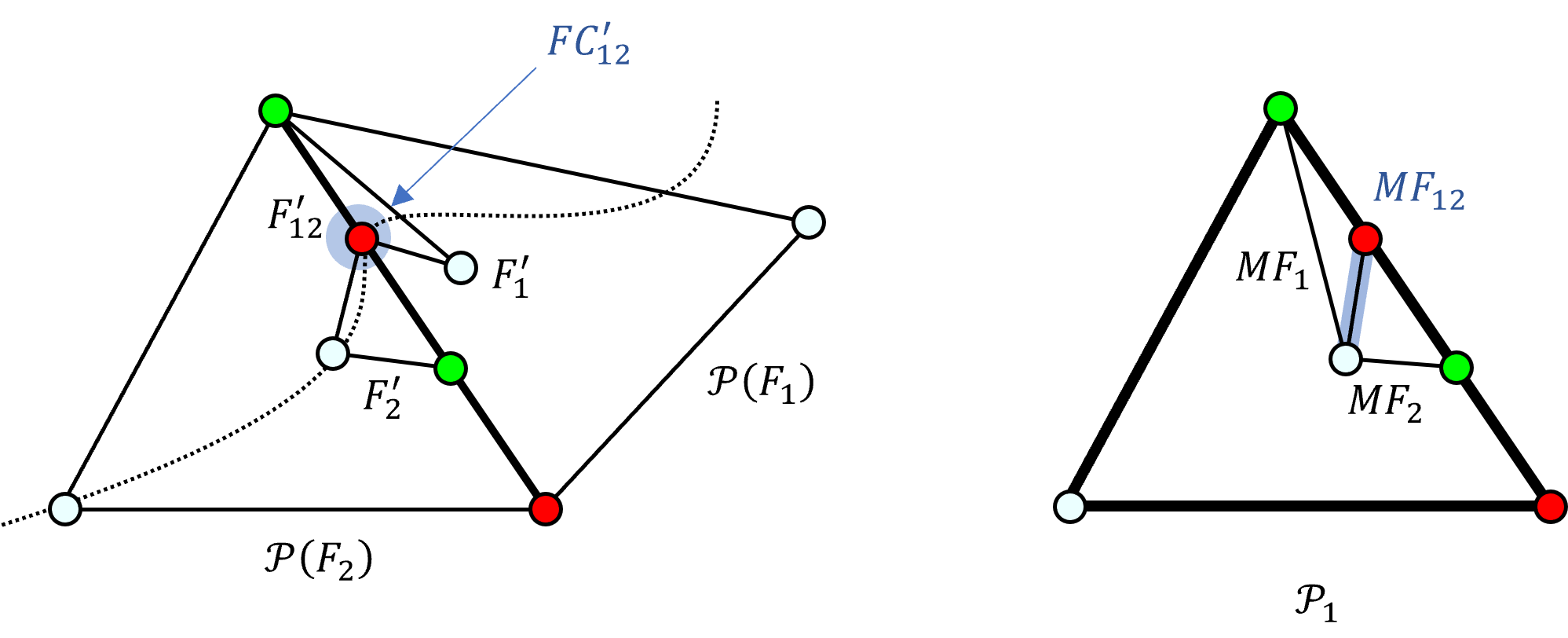}
	\centering
	\caption{The subcomplex $FC_{12}' \subseteq \P(F_1) \cap \P(F_2)$ in $\P_r$ and the corresponding subcomplex $MF_{12}$ in $\P_1$. Case~$(1)$ top and Case~$(2)$ bottom.}
	\label{fig:fc12}
\end{figure}

Any such $F_{12}'$ is not completely arbitrary, though: First of all,
since $FC_{12}' \subseteq \P(F_1) \cap \P(F_2)$ implies 
that its colors can only be drawn from $\chi(F_{12})$ due
to boundary consistency, we have
\begin{equation}
\chi(F_{12}') \subseteq \chi(FC_{12}') \subseteq \chi(F_{12}).\label{eq:nodeshrinking}
\end{equation}
Moreover, every pair of properly intersecting facets $F_1'$ and $F_2'$ is actually 
created by two unique \emph{matching border facets} $MF_1, MF_2 \in \P_1$: the adjacent
facets $F_1' \in \P(F_1)$ and $F_2' \in \P(F_2)$
are isomorphic to some
intersecting border facets $MF_1 \in \P_1$ and $MF_2 \in \P_1$, respectively, which
match at the boundary (see \cref{fig:fc12}, right part). 
This actually leaves only two possibilities for their intersection
$MF_{12}=MF_1 \cap MF_2 \neq \emptyset$:
\begin{enumerate}
\item[(1)] $MF_1$ and $MF_2$ are proper border facets with the same root component $R(MF_1)=R(MF_2) \in MF_{12}$ (possibly 
$MF_{12}\setminus R(MF_1)\neq \emptyset$, though). Two instances are shown in the top part of \cref{fig:fc12}.
\item[(2)] $MF_1$ and $MF_2$ are proper border facets with
  different root components, or improper border facets, with
  $R(MF_1)\cup R(MF_2) \not\subseteq MF_{12}$ (taken as a complex). An instance is shown in the bottom part of \cref{fig:fc12}.
\end{enumerate}
Note that these are all possibilities, since our single-rootedness assumption rules out
$R(MF_1)\cup R(MF_2) \subseteq MF_{12}$: After all, every
$v\in R(MF_1)\setminus R(MF_2) \neq\emptyset$  (w.l.o.g.)  would
need to have an outgoing path to every vertex in $R(MF_2)$, 
which is not allowed for the root component $R(MF_2)$
by \cref{def:root}. Keep in mind, for case (2) below, that
every vertex in $MF_{12}$ must have
an incoming path from every member of $R(MF_1)$ in $MF_1$
and from every member of $R(MF_2)$ in $MF_2$.

Now, given any pair of facets $F_1'$ and $F_2'$, we will consider conditions
ensuring the lifting/breaking of paths that run over their intersection 
$F_{12}'$. Not surprisingly, we will need to distinguish the two cases (1) and
(2) introduced above. To support the detailed description of 
the different situations that can happen here, we recall the path in 
$\C\subseteq \P_{r-1}$ that forms our starting point (\cref{fig:pr1}): It starts out from the proper border facet $\hF_x$ and leads to $\hF_1$, where it
enters the subcomplex $\S_1=\P^{r-2}(\tau_1)$. Within $\S_1$, the path continues to $F_1$.
The latter has a non-empty intersection $F_{12}$ with $F_2$, which belongs to the subcomplex $\S_2=\P^{r-2}(\tau_2)$.
The path continues within $\S_2$ and exits it at $\hF_2$, from where it finally leads to
the proper border facet $\hF_y$. 

\begin{enumerate}
\item[(0)] If $F_{12}'=\emptyset$, i.e., the
border facets $F_1'$ and $F_2'$ do not intersect at all,
there obviously cannot be any path in $\P_r$ that connects $\hF_1'$ and $\hF_2'$ via $F_{12}'$.

\item[(1)] If $F_{12}' \neq \emptyset$ is caused by proper 
border facets $F_1'$ and $F_2' \in \P_r$ with the same border root component
$R(F_1')=R(F_2')$, then both $F_1' \in \P_r$ and $MF_1 \in \P_1$ are isomorphic 
to some proper border facet $BF_1 \in \P_{r-1}$; analogously, $F_2' \in \P_r$ and $MF_2 \in \P_1$ 
are both isomorphic to some proper border facet $BF_2 \in \P_{r-1}$. This holds since
$\S_1'=\P^{r-1}(\tau_1)$ and $\S_2'=\P^{r-1}(\tau_2)$ are both isomorphic to $\P_{r-1}$.
Note carefully, though, that $BF_{12}=BF_1 \cap BF_2$ is isomorphic to $F_{12}'$ 
(but not necessarily to $MF_{12}$, as can be seen in the top of \cref{fig:fc12}),
so $\chi(R(BF_1))= \chi(R(BF_2)) 
\subseteq \chi(BF_{12}) \subseteq \chi(F_{12}') \subseteq \chi(F_{12})$ by \cref{eq:nodeshrinking}.
Note that \cref{def:bcrc} immediately implies $B(BF_1)=B(BF_2) \subseteq BF_{12}$
for the respective border components as well.

Depending on $BF_1$ and $BF_2$ (actually, depending on $BF_{12}$ and, ultimately, on $R(BF_1)= R(BF_2)$,
which we will say to \emph{protect} $F_{12}$), all paths running over $R(F_1')=R(F_2')$ will either~(a) 
be lifted  or else (b) break:
\begin{enumerate}
\item[(a)] We say that $R(BF_1)= R(BF_2)$ \emph{successfully} protects $F_{12}$ (see \cref{fig:lifting}) if $BF_{12}\in \C$, 
  i.e., if both $BF_1$ and $BF_2$ are within the same connected component $\C$
  as $F_1$ and $F_2$ (which also implies that 
there are paths in $\C$ connecting $\hF_1$ to $BF_1$ and $\hF_2$ to $BF_2$). In this
case, there is a lifted path in $\P_r$ connecting some border facets
$\hF_1'\in\P(\hF_1)$ and $\hF_2'\in \P(\hF_2)$ via $BF_{12}$, carried by the one in $\P_{r-1}$
that connected the proper border facets $\hF_1$ and $\hF_2$ via $F_{12}$: It exists, since
both $\S_1'=\P^{r-1}(\tau_1)$ and $\S_2'=\P^{r-1}(\tau_2)$ are isomorphic to $\P_{r-1}$.
By applying $\P$ to all remaining facets on the path that connected 
$\hF_x$ and $\hF_y$ in $\P_{r-1}$ as well, a path in $\P_r$ may be created that 
connects some incompatible proper border 
facets $\hF_x', \hF_y' \in \P_{r}$; of course, this requires a successful lifting everywhere
along the original path, not just at the intersection between $F_1$ and $F_2$. 
\item[(b)] We say that $R(BF_1)= R(BF_2)$ \emph{unsuccessfully} protects $F_{12}$ if $BF_{12} \not\in \C$. In this case, 
there cannot be a
path connecting the border facets $\hF_1'$ and $\hF_2'$ in $\P_r$ running via
$BF_{12}$, i.e., 
the connecting path in $\P_{r-1}$ cannot be lifted to $\P_r$ and thus breaks! 
\end{enumerate}

\begin{figure}
	\centering
	\includegraphics[scale=0.6]{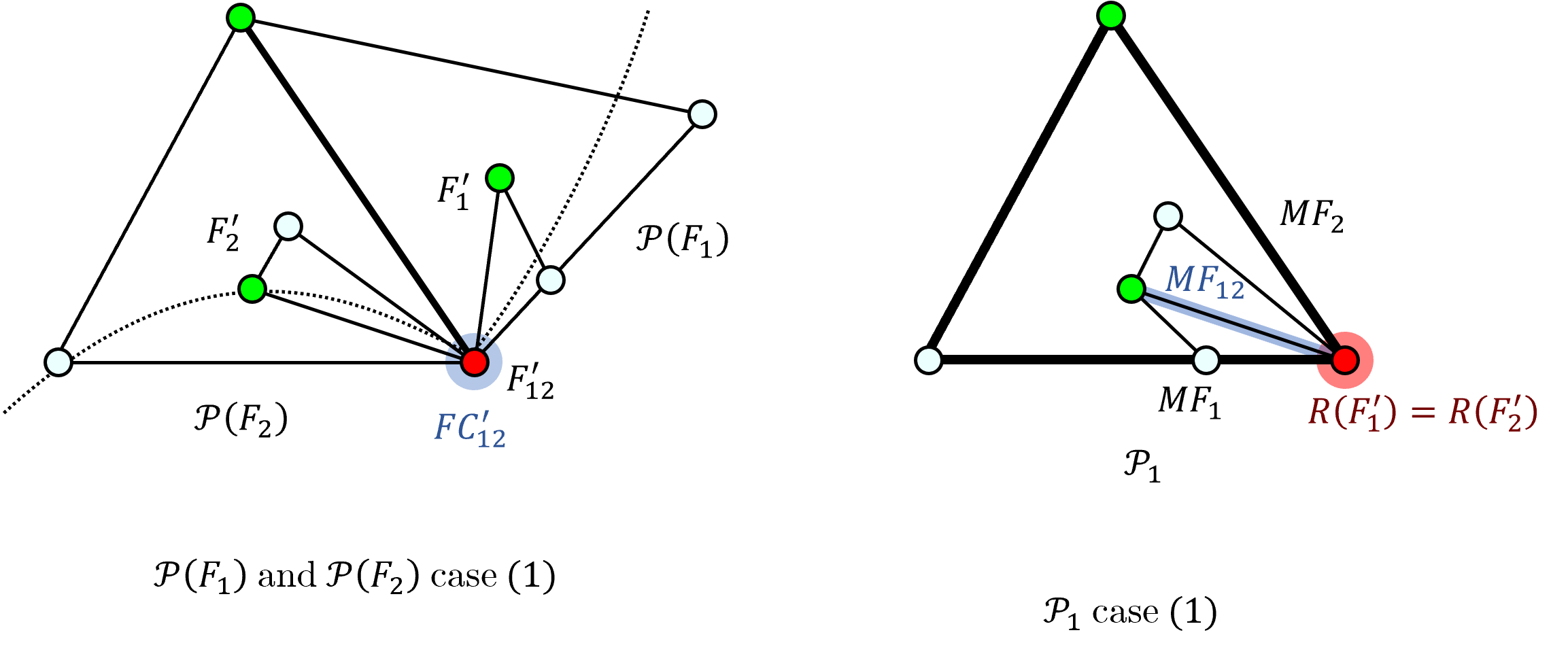}
	\linebreak
	$\P(F_1)$ and $\P(F_2)$ in $\P_r$, and the corresponding subcomplexes $MF_{1}$ and $MF_{2}$ in $\P_1$ 
	\linebreak
	\linebreak
	\includegraphics[scale=0.6]{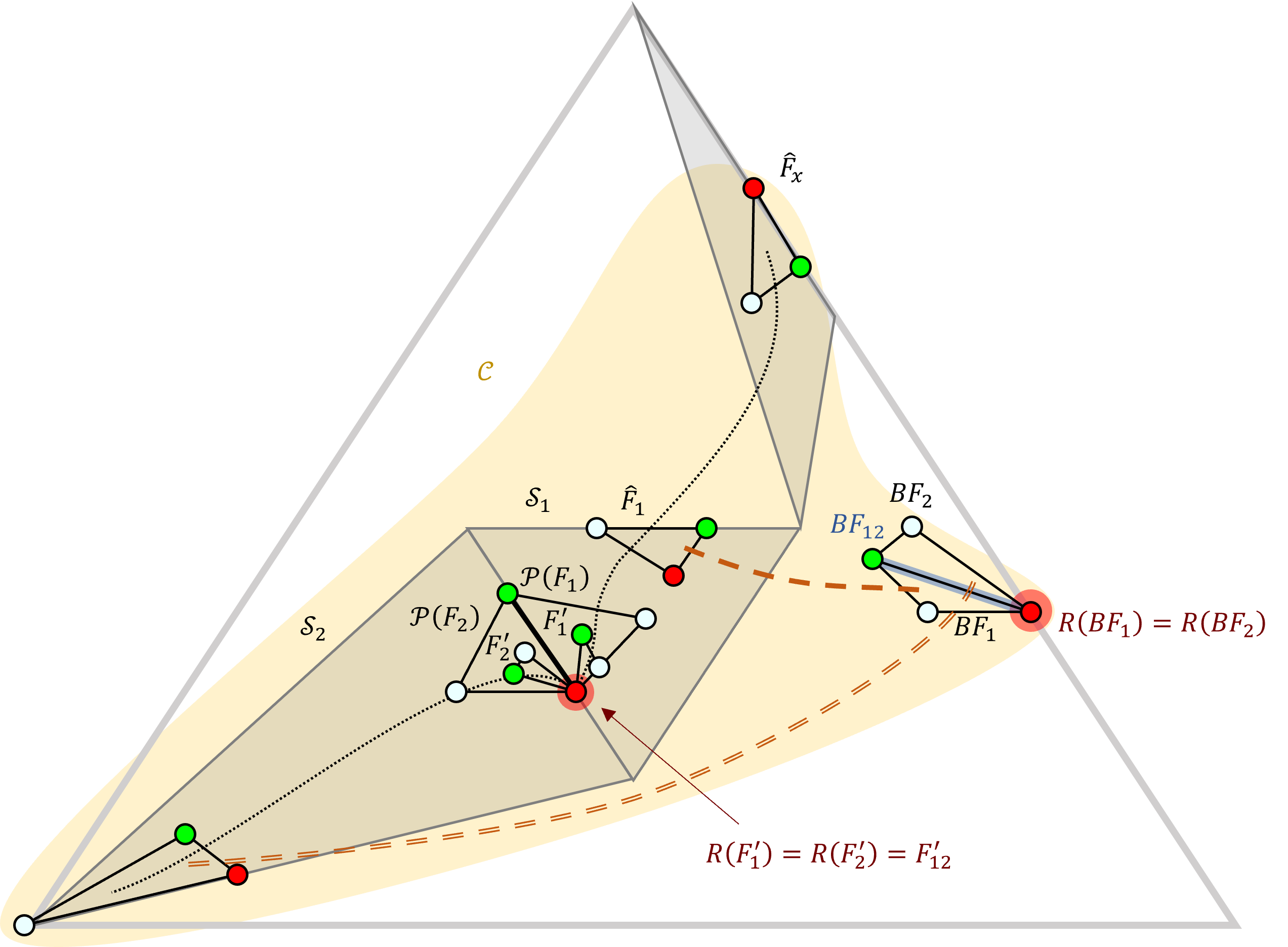}	
	\linebreak  $\P_{r-1}$ \linebreak\linebreak
	\includegraphics[scale=0.6]{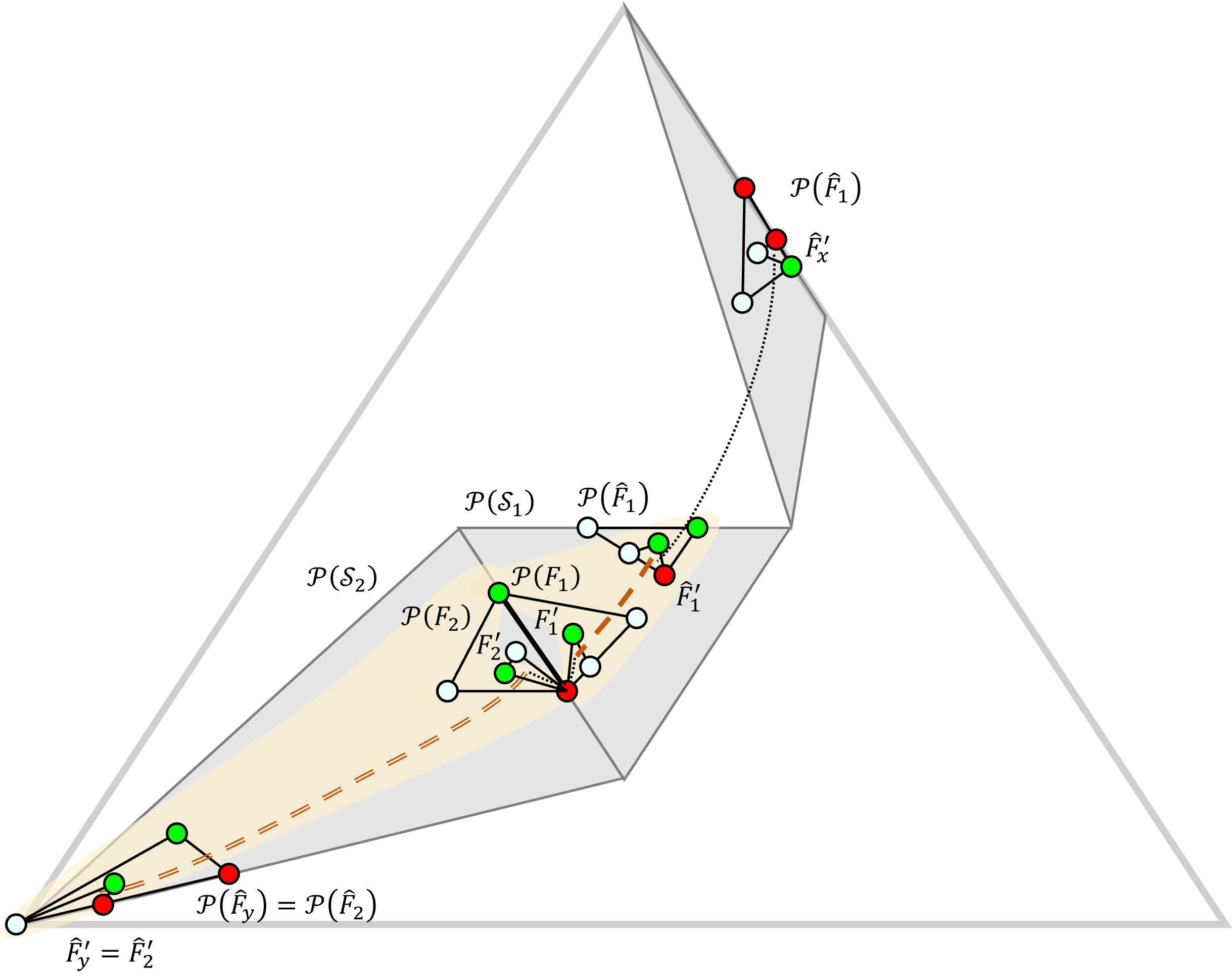}
	\linebreak  $\P_{r}$ with the images of $\C$ in $\S_1$ and $\S_2$ marked\linebreak
	\caption{Lifting --- Case 1(a)}
	\label{fig:lifting}
\end{figure}

\item[(2)] If $F_{12}' \neq \emptyset$ is not caused by proper 
  border facets $F_1' \in \P_r$ and $F_2' \in \P_r$ with
  common border root components
  $R(F_1')=R(F_2')$, we know from (2) above that
  $R(MF_1)\cup R(MF_2) \not\subseteq MF_{12}$, i.e., at least one of
  the root components, say, $R(F_1')$, has a vertex $v_1'\in F_1'$ outside
  $F_{12}'$. Clearly, the corresponding vertex $v_2'\in F_2'$ with
  $\chi(v_1')=\chi(v_2')$ is also outside $F_{12}'$ and hence
  different from $v_1'$. Since there is a path from $v_1'$ to
  every vertex in $F_1'$, at least one member in the intersection
  $F_{12}$ will be gone in $F_{12}'$, so
\begin{equation}
|V(F_{12}')| < |V(F_{12})|. \label{eq:propershrinking2}
\end{equation}
\end{enumerate}

\noindent
This completes the (exhaustive) list of cases that need to be considered w.r.t.
a single pair of facets $F_1'$ and $F_2'$.
Clearly, 
in order for a lifted path connecting $\hF_x'$ and $\hF_y'$ in $\P_r$ to break, 
it suffices that it breaks for just one pair of adjacent facets.
On the other hand, for a given pair $F_1, F_2 \in \P_{r-1}$, many paths are potentially
created simultaneously in $\P_r$, each corresponding to a possible selection 
of $F_1'$ and $F_2'$ and the particular intersection facet $F_{12}'$, which all need to break eventually.
Moreover, there are different paths in $\P_{r-1}$ connecting $\hF_x$ and $\hF_y$ via
different pairs $F_1, F_2$ that need to be considered. 
In \cref{sec:terminationtime}, we will show that there is even another
subtle complication caused by case (1.b),
the case where there is \emph{no} lifted path in $\P_r$: It will
turn out that ``bypassing'' may create a \emph{new} path connecting some incompatible proper
border facets in $\P_r$ when the path in
$\P_{r-1}$ breaks; see \cref{fig:multipath} for an example.

Finally, for consensus to be solvable, \emph{no} connected component $\C$ containing all border facets 
$\hF_1,\dots,\hF_k \in \C$ with incompatible 
border components $\hB_1,\dots,\hB_k$ may exist. In other words,
there must be some $r$ such that \emph{none} of the connected components
of $\P_r$ contains facets 
with incompatible border components. If this is ensured, the processes in any 
facet $F\in C\subseteq \P_r$ can eventually decide on the initial value of a deterministically 
chosen process in $\bigcap_{F \in \C, B(F)\neq\emptyset} \B(F) \neq \emptyset$. 
Note carefully, however,
that this also requires that all connections between incompatible borders that are 
caused by facet borders different from the border component $B(F)$ have disappeared.
Since this is solely a matter of case (2), \cref{eq:propershrinking2} reveals that 
this must happen after at most $n-1$ additional rounds.

\subsection{Characterizing consensus solvability via connected components}
\label{sec:charconnectedcomp}

It is enlightening to view cases (0)--(2) introduced before
w.r.t. the effect that they cause on the connected component $\C \subseteq
\P_{r-1}$ that connects incompatible border facets: 
Reconsider the two adjacent facets $F_1, F_2 \in \C$ with intersection
$F_{12}$, and assume, for clarity of the exposition, that $\C$ would fall
apart if the path running over $F_{12}$ would break. 
We will now discuss
what happens w.r.t.\ the connected component(s) in $\P_r$ when going to
$F_1' \in \P(F_1)$ and $F_2'\in \P(F_2)$, under the assumption that
$F_{12}'$ is the only intersecting facet in $\P_r$, according to our three cases: 
\begin{enumerate}
\item[(0)] If $F_{12}'=\emptyset$, then $\P_r$ would contain two separate
  connected components $\C_1'$ and $\C_2'$ with $F_1' \in \C_1'$ and
  $F_2' \in \C_2'$, i.e., the connected component(s) in $\P_r$
  resulting from $\C$ are separated by what is generated from $F_{12}$, namely,
  $F_{12}'=\emptyset$. A nice example is the (shaded) green center
node in $\P_2^{RAS}$ of the RAS message adversary shown in \cref{fig:RAScomplexes}.

\item[(1)] If $F_{12}'\neq\emptyset$ is caused by proper 
  border facets $F_1' \in \P_r$ and $F_2' \in \P_r$ (isomorphic
  to $BF_1$ resp.\ $BF_2$) with the same
  border root component
  $R(F_1')=R(F_2') = F_{12}'$ that successfully
  protects $F_{12}$, we have our two subcases:
  \begin{enumerate}
\item[(a)] If $BF_{12}\in \C$, then $\P_r$ would contain a single connected 
component $\C'$ (resulting from $\C$) that connects incompatible border facets.
An instance of such a successful protection 
can be found in \cref{fig:iRAScomplexes} of $\P_2^{iRAS}$ for the iRAS
message adversary, where consensus is impossible. Note that
just one communication graph has been added to RAS here, namely, 
the additional triangle that connects the bottom left white vertex to the central triangle 
in the 1-round uninterpreted complex $\P_1^{iRAS}$ in the top left part of \cref{fig:iRAScomplexes}. 
Consider the left border of the dash-dotted
central triangle, for example, where two adjacent facets intersect
in the common root $r \leftrightarrow g$. It results from the fact
that the border root component $r \leftrightarrow g$ of the proper
border facet on the right outermost border of the protocol
complex $\P_1^{iRAS}$ protects the intersection $r \leftrightarrow g$
of the central facet and the facet left to it in $\P_1^{iRAS}$.

\begin{figure}
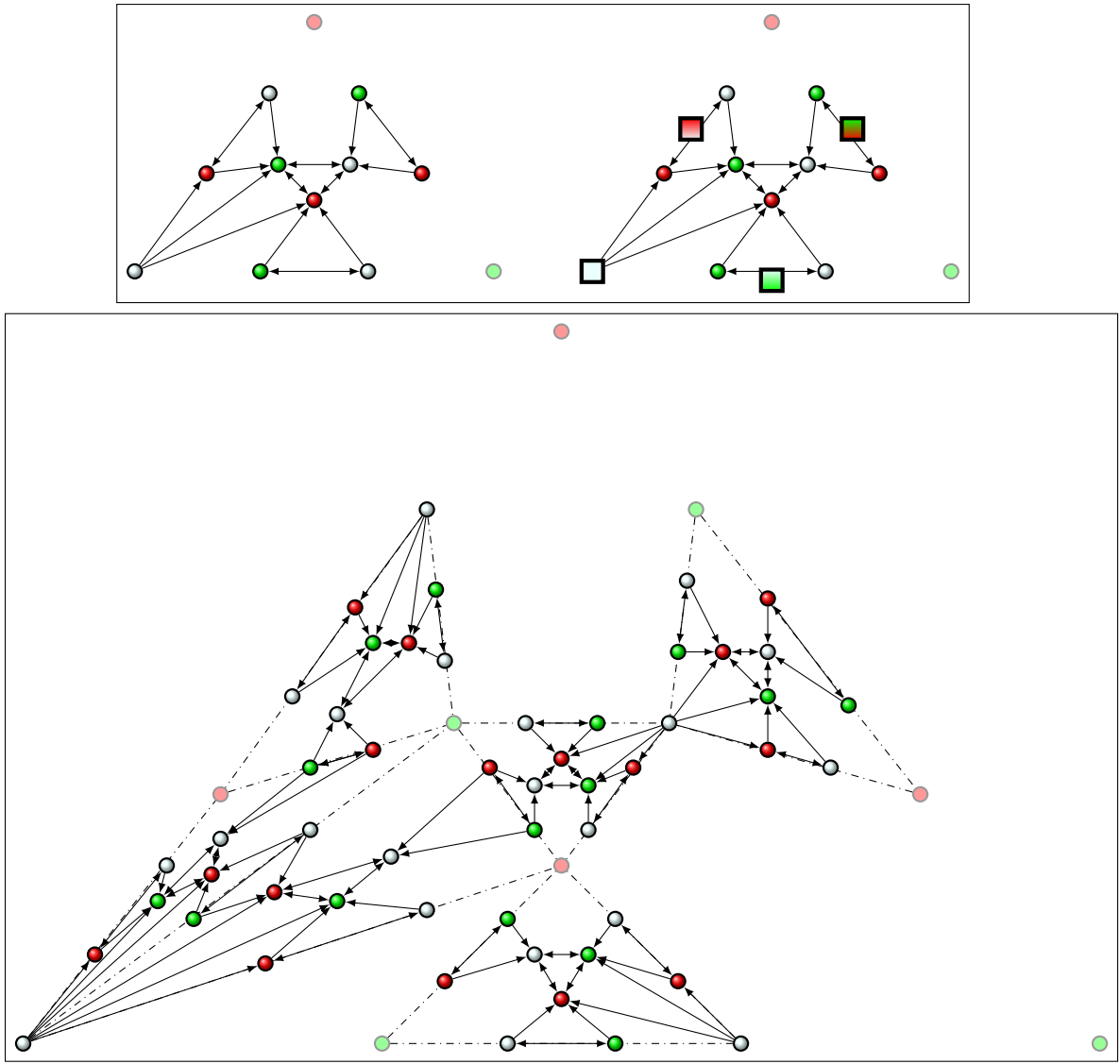

\ctikzfig{Figures/iras1}
\ctikzfig{Figures/iras2}
\caption{Protocol complex for one round ($\P=\P_1^{iRAS}$, top) and two rounds ($\P_2^{iRAS}$, bottom) of the iRAS message adversary. The top right figure also shows the border root components of $\P$.}
\label{fig:iRAScomplexes}
\end{figure}

\item[(b)] If $BF_{12}\not\in \C$, then $\P_r$ would contain two
  connected components $\C_1'$ and $\C_2'$ with $F_1' \in \C_1'$ and
  $F_2' \in \C_2'$. Unlike in case (0), however, they are separated
by a third connected component $\C_{12}'$ that contains $F_1'$ and $F_2'$. It 
can be viewed as an ``island'' that develops around $F_{12}'$. A nice
example of such an unsuccessful protection is the connected component containing the red process
in the center of \cref{fig:2Ccomplexes} for $\P_2^{2C}$ for the two-chain
message adversary, which now separates the single connected component
containing this process in $\P_1^{2C}$.

We note that the bypassing effect already mentioned (and discussed in detail 
in \cref{sec:terminationtime})
can also be easily explained via this view: It could happen that the
``island'' $C_{12}'$ is such that it connects some \emph{other}
incompatible proper border facets in $\P_r$ (see \cref{fig:multipath} for an 
example). So whereas it successfully
separates $\C_1'$ and $\C_2'$, it creates a new path that prohibits
the termination of consensus in round $r$.
\end{enumerate}

\item[(2)] If $F_{12}' \neq \emptyset$ is not caused by a proper 
  border facet $F_1' \in \P_r$ and $F_2' \in \P_r$ with common
  border root component
$R(F_1')=R(F_2') = F_{12}'$, $\P_r$ would 
contain a single connected component $\C'$ (resulting from $\C$) that still
connects incompatible border facets. 
\end{enumerate}

\subsection{Examples}
\label{sec:examples}

An example of an unsuccessful protection (a breaking path, case (1.b)) 
can be found in the 1-round uninterpreted complex $\P_1^{RAS}$ for
the RAS message adversary in the top-right part of \cref{fig:RAScomplexes}, where the facets $\hF_x$ and $\hF_y$ 
containing the border root components $\hR_x$ (the single white vertex in the bottom-left corner) and $\hR_y$ (the bidirectionally connected red and green vertices on the right border) are connected by a path that 
runs over the bottom leftmost triangle $F_1=\hF_x$ and the central triangle $F_2$, in a joint connected
component $\C \subseteq \P_1^{RAS}$. Note that $F_1$ and 
$F_2$ intersect in the single green central vertex $F_{12}=F_1 \cap F_2 = \{g\}$, and that there is no facet with a
border (root) component consisting only of the green vertex in $\P_1$ and hence in $\C$. Consequently,
it follows that $\P_2^{RAS}$ cannot contain a corresponding path connecting $\hF_x'$ with border root
component $\hR_x'$ (the single white vertex in the bottom-left corner in the
bottom part of \cref{fig:RAScomplexes}) and $\hR_y'$ (the bidirectionally connected red and green vertices 
on the right outer border) running over the faded central vertex $F_{12}'= \{g\}$, as is confirmed by our figure.

For an example of a successful protection (a non-breaking path, case (1.a)), consider the path connecting the border facets $\hF_x$ and $\hF_z$ 
containing the border root components $\hR_x$ (the single white vertex in the bottom-left corner) and $\hR_z$ (the bidirectionally connected red and white vertices on the left border) in the top-right part of \cref{fig:RAScomplexes}.
This path only consists of the bottom leftmost triangle $F_1=\hF_x$ and the triangle $F_2=\hF_z$ in a joint connected
component $\C\subseteq \P_1^{RAS}$. Note that $F_1$ and 
$F_2$ intersect in a red-green edge $F_{12}=F_1 \cap F_2=\{r \to g\}$ here, and that there is the border facet $\hF_y \in \C$ with a
border root component $\hR_y=\{g \leftrightarrow r\}$ on the rightmost outer border. 
According to our considerations above,
$\P_2^{RAS}$ contains a corresponding path connecting $\hF_x'$ with border root
component $\hR_x'$ (the single white vertex in the bottom-left corner in the
bottom part of \cref{fig:RAScomplexes}) and the border facet $\hF_z'$ with border root component 
$\hR_z'$ (the bidirectionally connected red and white vertices 
on the leftmost outer border) running via $F_{12}'=\{g \leftrightarrow r\}$, as is confirmed by our figure.

To further illustrate the issue of successful/unsuccessful protection, consider  
the modified RAS message adversary iRAS depicted in \cref{fig:iRAScomplexes}, where consensus is impossible. 
The border facets $\hF_w$ (the additional triangle) resp.\ $\hF_y$ 
containing the border root component $\hR_w$ (the single white vertex in the bottom-left corner) resp.~$\hR_y$ (the bidirectionally connected red and green vertices on the right border) are connected by a path that 
runs over the central bidirectional red-green edge $F_{12}=F_1 \cap F_2=\{g \leftrightarrow r\}$ in $\C$ here. In sharp contrast to RAS, 
the border facet $\hF_y$ with the border root component $\hR_y=\{g \leftrightarrow r\}$ on the right outer border is now also in $\C$, however. Consequently,
$\P_2^{RAS}$ contains a corresponding path connecting $\hF_w'$ with border root
component $\hR_w'$ (the single white vertex in the bottom-left corner in the
bottom part of \cref{fig:iRAScomplexes}) and $\hF_y'$ with border root component $\hR_y'$ 
(the bidirectionally connected red and green vertices 
on the right outer border) running via $F_{12}'=\{g \leftrightarrow r\}$, as is confirmed by our figure. 
Note that this situation recurs also in all further rounds, making consensus impossible.

To illustrate the issue of delayed path breaking (case (2)), 
consider another message adversary, called the \emph{2-chain message adversary} (2C), shown for $n=4$
processes in \cref{fig:2Ccomplexes} (top part). It consists of three graphs, a chain $G_1 = y \to g \to w \to r$, another chain $G_2 = g \to y \to w \to r$, and a star $G_3 = r \to \{y,w,g\}$. In $\P_1$, the facets $F_1$ and $F_2$, corresponding to $G_1$ and $G_2$, respectively, are connected by a path
running over the intersection $F_{12}=\{r\}$ in a joint connected component $\C$. 
There is also a border root component $R=\{r\}$ in 
the facet $F_3$ resulting from $G_3$, which, however, lies in a \emph{different} connected component $\C'\neq \C$
in $\P$. According to our considerations (case (1.b), the path (potentially) connecting
$F_1'$ (the border facet representing $G_1$ both in round 1 and 2) and $F_2'$ (the border facet representing $G_2$ both in round 1 and 2)
via $F_{12}'=\{r\}$ in $\P_2^{2C}$ breaks: As is apparent from the bottom part of \cref{fig:2Ccomplexes}, there
is no single red vertex shared by these two facets. 

If one adds another process $p$ (pink) to 2C for $n=5$, denoted by the message adversary
2C+, such that 
$G_1=y \to g \to w \to p \to r$, 
$G_2=g \to y \to w \to p \to r$, and $G_3 = r \to \{y,w,g,p\}$, then $F_{12}=\{p \to r\}$ is in $\P_1^{2C+}$.
Now there is a path in $\P_2^{2C+}$ connecting
$F_1'$ (the border facet representing $G_1$ both in round 1 and 2) and $F_2'$ (the border facet representing $G_2$ both in round 1 and 2)
running via $F_{12}'=F_1' \cap F_2' = \{r\}$: Whereas the pink vertex has learned in round 2 
where it belongs to, i.e., either $F_1'$ and $F_2'$, from the respective root component, 
this is not (yet) the case for the red vertex. However, whereas the corresponding path
did not break in $\P_2^{2C+}$, it will finally break in $\P_3^{2C}$ since the red vertex will
also learn where it belongs to.

\begin{figure}
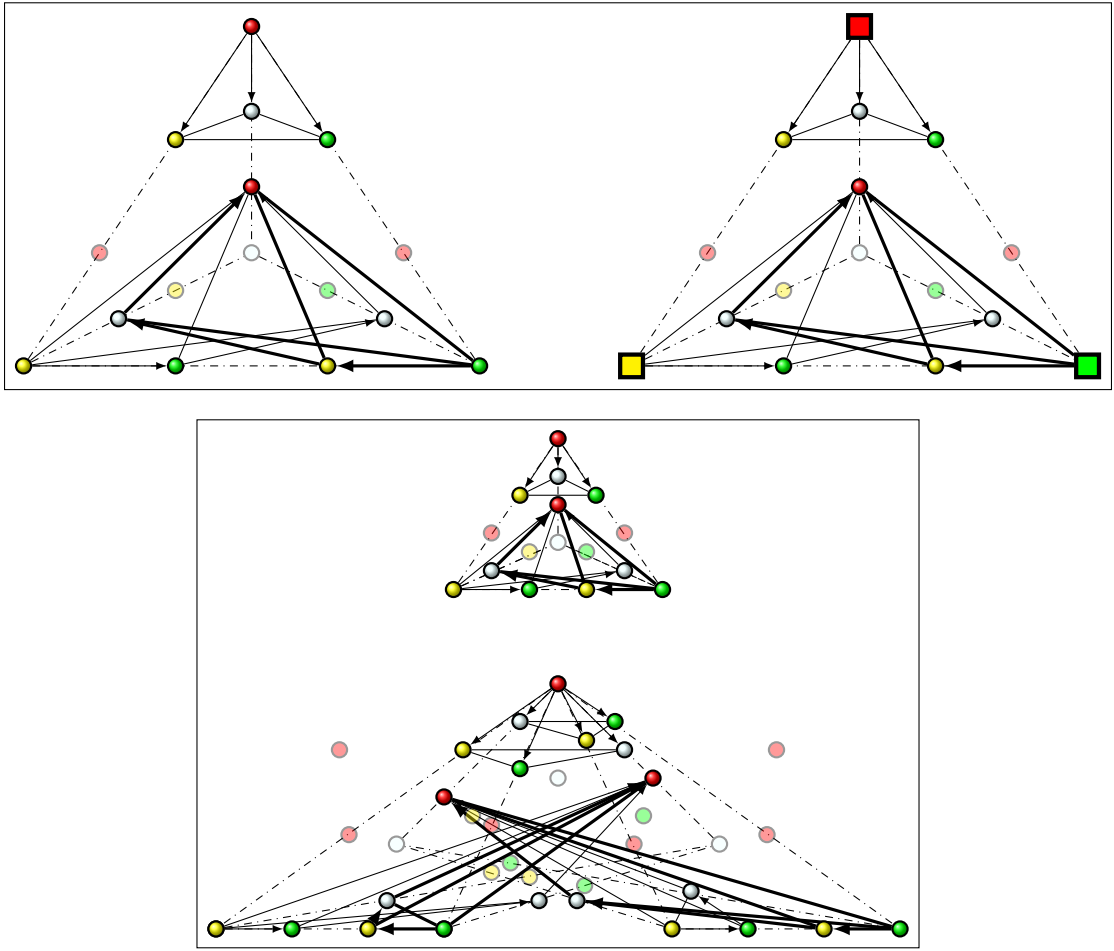

\ctikzfig{Figures/twochainexample}
\ctikzfig{Figures/twochainexample2}
\caption{Protocol complex for one round ($\P=\P_1^{2C}$, top) and two rounds ($\P_2^{2C}$, bottom) of the two-chain message adversary for $n=4$ processes. 
The top right figure also shows the border root components of $\P$.}
\label{fig:2Ccomplexes}
\end{figure}

\subsection{A decision procedure for consensus solvability}
\label{sec:decisionprocedures}

Revisiting the different cases (0)--(2) 
that can occur w.r.t.
lifting/breaking a path connecting incompatible border facets
in $\P_{r-1}$ to $\P_r$, it is apparent that the only case that
might lead to a path that never breaks, i.e., in no round
$r\geq 1$, is case (1.a): In case (0) and (1.b), there cannot
be a lifted path running via $F_{12}'$ in $\P_r$, i.e., the path 
in $\P_{r-1}$ breaks immediately. In case (2),
it follows from \cref{eq:propershrinking2}
that this type of lifting could re-occur in at most $n-2$ consecutive rounds
after a path running over $F_{12}$ is lifted to
a path running over $F_{12}'$ in $\P_r$ for the first time. Since
these are all possibilities, after the ``exhaustion'' of case (2),
$F_{12}'=\emptyset$ and 
hence case (0) necessarily applies.

In order to decide whether consensus is solvable
for a given message adversary $\MA$ at all, it hence suffices to keep track of case
(1.a) over rounds $1, 2, \dots$. If one finds that case (1.a) does not
occur for any path in $\P_{r-1}$ for some $r$, there is no need for
iterating further. On the other hand, if one finds that
case (1.a) re-occurs for some path forever, consensus is
impossible. Since the facets $BF_1$ and $BF_2$, where the common root component
$R(BF_1)=R(BF_2)$ successfully protects $F_{12}$ in case (1.a),
leads to $\chi(F_{12}') \subseteq \chi(F_{12})$ according to
\cref{eq:nodeshrinking}, the infinite re-occurence
of case (1.a) for some path implies that there is some round
$r_0$ such that $\chi(F_{12}') = \chi(F_{12}) = I \subset \Pi$ 
for all $r\geq r_0$. If this holds true, then case (1.a) must also re-occur perpetually
in the lifted paths obtained by using the same $BF_1$ and $BF_2$
with $\chi(BF_{12})=I$ in all rounds $r\geq r_0$.

For keeping track of possibly infinite re-occurrences of case (1.a), it is hence sufficient to determine,
for every pair of facets $MF_1\in \P_1$ and $MF_2 \in \P_1$, $MF_2\neq MF_1$,
intersecting in $MF_{12}\neq\emptyset$, 
the set of proper border facets $MF^1, \dots, MF^\ell \in \P_1$ with border 
root components $R^j=R(MF^j)$ satisfying 
$\chi(R^j) \subseteq \chi(MF_{12})$ for all $1 \leq j \leq \ell$. 
Clearly, every choice of $MF^j$, $MF^k$ is a possible candidate for the isomorphic
re-occurring protecting facets $BF_1=BF^j$ and $BF_2=BF^k$ for case (1.a) in 
some $\P_{r-1}$, provided (i) $R(MF^j)=R(MF^k)$ and (ii) both $BF^j$ and $BF^k$
are in the connected component $\C \subseteq \P_{r-1}$ containing $F_1$ and $F_2$. 
If (ii) does not hold, one can safely drop $MF^j$, $MF^k$ from the set of 
candidates for infinitely re-occurring protecting facets in all subsequent rounds.

This can be operationalized in an elegant and efficient decision procedure
by using an appropriately labeled and weighted version 
of the facets' \emph{nerve graph} $\N$ of the 1-round uninterpreted complex $\P_1$.
It is a topological version of the combinatorial decision procedure given in
\cite[Alg.~1]{WPRSS23:ITCS} that works as follows:
Every facet in $\P_1$ is a node $F$ in $\N$ and labeled by $w(F)=R(F)$, 
its root component in $\P_1$. Two nodes $F^1,F^2$ in $\N$ 
are joined by an (undirected) edge $(F^1,F^2)$, if they intersect in a simplex
$\emptyset \neq F^{12}=F^1\cap F^2$ in $\P_1$. The edge is labeled by $w((F^1,F^2)) =  \{R^1,\dots,R^\ell\}$ (possibly empty), which is the maximal set of
(necessarily: border) root components that satisfy the property
$\chi(F^{12})\supseteq \chi(R^i)$.
Recall that the member sets of different border root components may satisfy 
$\chi(R^i)\cap \chi(R^j) \neq \emptyset$ 
and even $\chi(R^i)=\chi(R^j)$, albeit $R^i \cap R^j \neq \emptyset$ when taken as faces is impossible.

The procedure for deciding on consensus solvability proceeds in iterations,
starting from $\N_0=\N$,
and defining $\N_{i+1}$ from $\N_{i}$ as follows.
Let
$V(\N_{i+1}):=V(\N_{i})$ with the same node labels $w(F)$, 
initialize $E(\N_{i+1})$ to be the empty set,
and add to it each edge $(F^1,F^2)\in E(\N_i)$ with a label $w_{i+1}((F^1,F^2))$ defined next, but only if this label is not empty.
For a potential edge $(F^1,F^2)\in E(\N_i)$, 
set $R \in w_{i+1}((F^1,F^2))$
if the (unique) connected component $C_{i}^j$ of $\N_i$ with $(F^1,F^2) \in E(\C_{i}^j)$ 
contains some $F'\in V(\C_{i}^j)$ with $w(F')=R \in w_{i}((F^1,F^2))$.
The construction stops when either (i) none of the connected components of $\N_i$ contains 
nodes representing facets with incompatible root components (consensus is solvable), or else (ii) if
$\N_{i+1} = \N_{i}$ but there is at least one connected component containing
nodes representing facets with incompatible root components (consensus is impossible).

For example, \cref{fig:RASn} shows the labeling of the facets with their root components for the RAS message
adversary, where consensus can be solved. The sequence of nerve graphs $\N$, $\N_0$ and $\N_1$ is illustrated in 
\cref{fig:RASnerve}. On the other hand, \cref{fig:iRASn} and \cref{fig:iRASnerve} show the same for the iRAS message
adversary, where consensus cannot be solved.

\begin{figure}
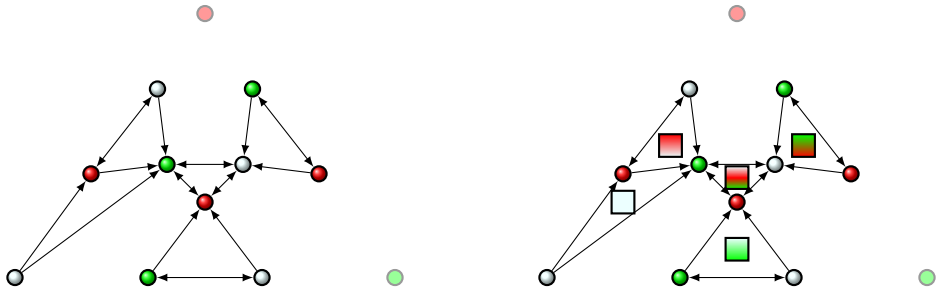

\ctikzfig{Figures/ras1n}
\caption{Results of labeling the faces of the 1-layer protocol complex 
$\P_1$ of the RAS message adversary (left) by their root component (right).}
\label{fig:RASn}
\end{figure}

\begin{figure}
\ctikzfig{Figures/ras1nerve}
\caption{Construction of the initial nerve graph $\N_0$ of the 1-layer protocol 
complex $\P_1$ of the RAS message adversary: After replacing the facets by their
corresponding nodes (colored by their root component) and labeling all the edges (left), nerve graph $\N_0$ 
obtained by removing edges without a protecting root component (middle),
nerve graph $\N_1$ (right). Note that $\N_1$ already reveals that 
consensus is solvable.}
\label{fig:RASnerve}
\end{figure}

\begin{figure}
\ctikzfig{Figures/iras1n}
\caption{Results of labeling the faces of the 1-layer protocol complex 
$\P_1$ of the iRAS message adversary (left) by their root component (right).}
\label{fig:iRASn}
\end{figure}

\begin{figure}
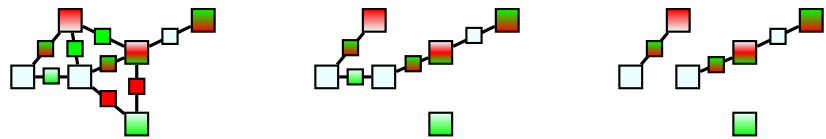

\ctikzfig{Figures/iras1nerve}
\caption{Construction of the initial nerve graph $\N_0$ of the 1-layer protocol 
complex $\P_1$ of the iRAS message adversary: After replacing all facets by their
corresponding nodes (colored by their root component) and labeling all the edges (left), nerve graph $\N_0$  
obtained by removing edges without a protecting root component (middle),
nerve graph $\N_1$ (right). Since $\N_2=\N_1$, which still contains a component that
connects incompatible root components, consensus is impossible.}
\label{fig:iRASnerve}
\end{figure}

Note that there is a small difference between the decision procedure \cite[Alg.~1]{WPRSS23:ITCS} and our
topological version: Whereas the latter uses sets of border root components $w_i((F^1,F^2)) = \{R^1,\dots,R^\ell\}$
as the label of an edge $(F^1,F^2)$, the size of which may decrease during the iterations, the former 
uses the fixed set of processes that cannot distinguish $F^1$ and $F^2$ in $\P_1$ as its label $\ell((F^1,F^2))$. 
The latter does not change during the iterations, and can in fact be written as $\ell((F^1,F^2)) = 
\chi(w_0((F^1,F^2))) = \chi(\{R^1,\dots,R^\ell\})$. Whereas these different labeling schemes are equivalent in
terms of correctly deciding consensus solvability/impossibility, ours might facilitate a more efficient data 
encoding and thus some advantages in computational complexity for certain message adversaries. On the other hand, we could of course also use the original
labeling of
\cite[Alg.~1]{WPRSS23:ITCS} in our decision procedure and detect successfully protecting border root components via proper inclusion of the member sets.


\section{Consensus Termination Time}
\label{sec:terminationtime}

In this section, we will shift our attention from the principal question of whether
consensus is solvable under a given message adversary $\MA$ to the question
of how long a distributed consensus algorithm may take to terminate.

Whereas it is immediately apparent that the number of iterations of the 
decision procedure in \cref{sec:decisionprocedures} is a lower bound for
the consensus termination time, their exact relation is not clear: 
Case (2) of our classification
for path lifting/breaking in \cref{sec:consensuschar} revealed
an instance where the actual breaking of a path may happen up to
$n-1$ rounds \emph{after} detecting that it will eventually break. 
An interesting question is whether there are other effects that may even increase this gap between iteration complexity of the decision procedure and consensus termination time. 
And indeed, \cite{WPRSS23:ITCS} provided an example that shows that
this gap may even be exponential in $n$. 
In \cref{sec:delayedpathbreaking},
we will provide an intuitive topological explanation of this gap, 
which is caused by the possibility of ``bypassing''. 
In \cref{sec:RRGdecisionproc}, we finally propose a decision procedure, 
which allows to answer the question whether 
distributed consensus is solvable in $k$ rounds under a given message adversary $\MA$.

\subsection{Delayed path breaking due to bypassing}
\label{sec:delayedpathbreaking}

We mentioned already in \cref{sec:charconsensus} that
in order for some incompatible border components to become disconnected, 
all paths connecting those must break. 
Consider the situation illustrated in \cref{fig:multipath}, 
for the case of a system of $n=5$ processes ($r, g, w, p, y$), and a message adversary that comprises only 5 graphs, 
according to the uninterpreted 1-round protocol complex $\P_1$ illustrated in the top part of the figure. 
There are two paths $P_{a}\in \C_a$ and $P_b\in C_b$ in $\P_1$
that connect the same incompatible border facets (touching upon the $\{y,g\}$ resp.\ upon the
$\{p,w\}$ border component carrier), 
lying in \emph{different} connected components $\C_a$ and $\C_b$. 
The left path $P_a$ consists of facets $F_1$ (with $R(F_1)=\{y\}$)
and $F_2$ (with $R(F_2)=\{w\}$), 
sharing the face $F_{12}=\{r\}$. 
The right path $P_b$ consists of $H_1$, $H_0$ and $H_2$ and involves the facet $H_0$ with border root component $R(H_0)=\{r\}$. 
According to case (1.b), both corresponding lifted paths in $\P_2$ break, 
since the shared faces $F_{12}$ between any two facets are (unsuccessfully) protected by the common root component of
proper border facets in $\P_1$ lying
in a \emph{different} connected component only.

\begin{figure}
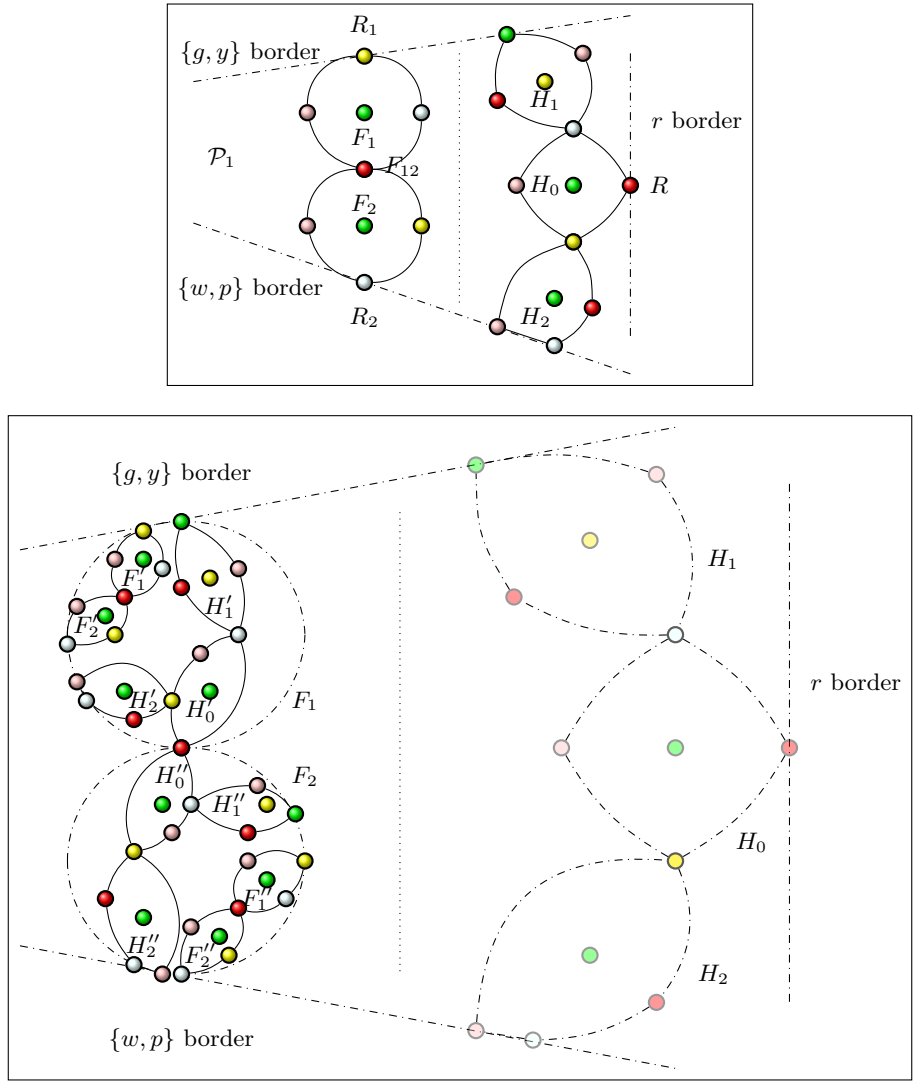

\ctikzfig{Figures/multipath}
\ctikzfig{Figures/multipath2}
\caption{Illustration of delayed path creation in the evolution of a protocol complex, for $n=5$. The top part shows $\P_1$, which consists of two paths $P_a$ and $P_b$
connecting incompatible border (root) components (on the $\{y,g\}$ resp.\ $\{p,w\}$ border), lying in different connected components $\C_a$ and $\C_b$. 
The left path $P_a$ consists of facets $F_1$ and $F_2$, sharing $F_{12}=\{r\}$. The
right path $P_b$ consists of $H_0$--$H_2$ and involves also a facet $H_0$ with border (root) component $R(H_0)=\{r\}$. Note that our restriction to $n=5$
implies that the pink vertices in $F_1$ and in $F_2$ are actually
the same, and so are the red vertices in $H_1$ and $H_2$.
The bottom part shows $\P_2$: Whereas the corresponding paths for both $P_a$ and $P_b$ break in round 2 according to case (1.b), it also happens that $P_b$ creates a lifted path in $\P_2$ (running within $\P(F_1)$ and 
$\P(F_2)$) that connects ``new'' incompatible proper border components there. This lifted path will break only in $\P_3$.}
\label{fig:multipath}
\end{figure}

Surprisingly, however, the $\{y,g\}$ and $\{p,w\}$ borders themselves are \emph{not} separated in
$\P_2$. Actually, it happens that the right path $P_b$ in $\C_b \subseteq \P_1$ gives rise
to a \emph{new} lifted path connecting facets with proper border components in the $\{y,g\}$ resp.\ 
$\{p,w\}$ borders in $\P_2$. This effect, called \emph{bypassing}, is illustrated in the bottom part of 
\cref{fig:multipath}: By applying $\P$ to both $F_1,F_2$, leading to $\P(F_1), \P(F_2) 
\subseteq \P_2$, one observes that $P_b$ now leads to a new lifted path connecting the incompatible border (root) components
$R(H_1')=\{g\}$ (in the facet $H_1'$ corresponding to $H_1$ in $\P(F_1)$) and $R(H_2'')=\{p,w\}$ (in the facet $H_2''$ 
corresponding to $H_2$ in $\P(F_2)$) via the intersection of $H_0' \cap H_0''=\{r\}$. 
In fact, the island created in
$\P_2$ around the latter, due to case (1.b), which consists of $H_1'$, $H_2'$, $H_0'$, $H_0''$, $H_1''$ and $H_2''$ and nicely
separates the connected components consisting of $F_1'$ and $F_2'$ from $F_1''$ and $F_2''$, 
is not an island, but rather
connects two other incompatible border root components, namely $\{g\} \in H_1'$ and 
$\{w\} \in H_2''$. 
Whereas it can be inferred already in $\P_1$ that this new lifted path in $\P_2$
will eventually break as well, consensus cannot be solved in just two rounds here. 

Even worse, for larger $n$, it is possible to iterate this construction: An additional path 
$P_c$ in a separate connected component $\C_c \subseteq \P_1$ could bypass both
the shared face $\{w\}$ between $H_1$ and $H_0$ and $\{y\}$ between $H_2$ and 
$H_0$ in $P_b$, in the same way as the shared face $\{r\}$ in $P_a$ is bypassed. More specifically, 
if these shared faces in $P_b$ are (unsuccessfully) protected by the border root components of proper border facets 
in $P_c \in \C_c$, which connect proper border facets touching the 
incompatible $\{g,y\}$ border to 
the $\{w\}$ border, to the $\{r\}$ border, to the $\{y\}$ border, and finally to the $\{w,p\}$ border, 
one gets a path connecting the $\{g,y\}$ and the $\{w,p\}$ borders in $\P_2$, carried by $P_b$ in $\P_1$, in exactly the same way as we got the path carried by $P_a$ described above. 
Note carefully that the border root component of the proper border facet touching the $\{r\}$
border in $P_c$ must be different from the one touching the $\{r\}$ border in $P_b$, since $\C_b \neq \C_c$.
Since $\P_3=\P(\P_2)$, this finally causes the creation of a new path in $\P_3$ that also connects the 
incompatible $\{g,y\}$ and $\{w,p\}$ borders, carried by $P_a$ in $\P_1$.

Whereas successive bypassing cannot go on forever, it stops
only if there are no ``new'' connected components in $\P_1$ that allow to bypass shared faces. 
Indeed, there are natural limits of the number of such bypassing connected components:
\begin{enumerate}
\item[(1)] The bypassing connected components $\C_1, \C_2, \dots$ in $\P_1$ must connect incompatible borders,
but must be disjoint. The root components of the facets that 
touch some specific border in different components must hence all be different (when taken as faces) 
as well. However,
the example worked out in \cite{WPRSS23:ITCS} demonstrates that there can be exponentially (in $n$)
many of those.

\item[(2)] A connected component $\C_{x+1}$ that contains a border root component $R$ that unsuccessfully
protects a 
shared face $F_{12}$ in the connected component $\C_x$ to accomplish bypassing 
must be such that it connects both the 
incompatible borders of $\C_x$ and some border containing $R$. Since each such $F_{12}$ must be
protected  by some proper border facet in $\C_{x+1}$, the length of the paths connecting
two particular borders must be strictly increasing.
\end{enumerate}

This ultimately provides a very intuitive ``geometric'' explanation of the quite
unexpected exponential blowup of the gap between the iteration complexity of the
decision procedure and the termination time of distributed consensus. In
particular, (1) explains the surprising fact that the \emph{number} of connected 
components in $\P_1$ plays a major role here.

\subsection{A decision procedure for $k$-round distributed consensus}
\label{sec:RRGdecisionproc}

Reviewing the decision procedure of \cref{sec:decisionprocedures} in the light of bypassing as described
in \cref{sec:delayedpathbreaking}, it is apparent that the nerve graph based
approach removes edges/labels \emph{eagerly}. Regarding decision time, this is of
course most advantageous: In the example of \cref{fig:multipath}, it would
terminate already after one iteration, telling that consensus is solvable.

There is a less eager alternative decision procedure, which builds a sequence 
of \emph{(border) root reachability graphs} $\RRG_i$, $i\geq 0$, that tell
which proper border facets are reachable from each other in $\P_{i+1}$.
First, it builds the initial root reachability graph
$\RRG_0$, the vertices of which (represented as square nodes in our illustrating
figures) are the border root components (which are the same as the border components
for all proper border facets) of the 1-round uninterpreted complex $\P_1$, 
see \cref{fig:RAScomplexes} 
(top right), and
where two such vertices are connected by an undirected edge if they are connected
via a path in $\P_1$ (irrespectively of the type of edges in $\P_1$), see \cref{fig:RASreachability} (top left). We obtain
$\RRG_1$ by replacing every facet $F$ in $\P_1$ by an instance of $\RRG_0$,
in such a way that the replacements of two facets $F_1$, $F_2$ that intersect in a 
simplex $F_{12}$ (case (1.a) in \cref{sec:charconsensus}) that is protected by the common root
$R(MF_1)=R(MF_2)$ of the proper border facets $MF_1$ and $MF_2$ in $\P_1$,
i.e., $\chi(R(MF_1))=\chi(R(MF_2)) \subseteq \chi(F_{12})$, share a node labeled 
$\chi(R(MF_1))=\chi(R(MF_2))$. Note that an actual
root component is represented by a fat square node in our figures, 
whereas the node representing an intersection is displayed by
a non-fat square node.

\cref{fig:RASreachability} shows $\RRG_0$ (top left part), obtained directly from the top-right part of
\cref{fig:RAScomplexes}, and $\RRG_1$ (top right part), which consists of several connected components. It is apparent, however, that it no longer connects incompatible border components. In particular, the bottom left border root component consisting of the white fat square node is
no longer connected by a path to the red-green fat square node on the right
side of the outer triangle in $\RRG_1$. That is, the connection between these
two border root components (present in $\RRG_0$) has disappeared!

This immediately gives us a recursive procedure for deciding consensus solvability: 
Rather than starting from the initial $\RRG_0=\RRG_0^{(0)}$, 
we start inductively from the previously constructed $\RRG_0^{(i)}$, $i\geq 0$,
and plug it into $\P_1$ exactly as before to construct $\RRG_1^{(i+1)}$. Note that,
for $i\geq 1$, $\RRG_1^{(i+1)}$ has at most the same number of edges than $\RRG_1^{(i)}$. 
This process can be 
repeated until $\RRG_1^{(m+1)}=\RRG_1^{(m)}$ for some $m\geq 0$. Consensus is possible if and only
if $\RRG_1^{(m)}$ contains no component that connects incompatible fat square nodes. In the example of \cref{fig:RASreachability}, already $\RRG_1^{(1)}$ does
not connect incompatible fat square nodes, so consensus is solvable under RAS.

This recursive RRG construction is equivalent to the the following iterative procedure, which
operates directly on the root reachability graphs: 
Starting out from $\RRG_{i}$, initially
$\RRG_1$, construct $\RRG_{i+1}$ by removing every edge incident to a node (= a non-fat 
square node) where the common border root component (= the fat square node $R(BF_1)=R(BF_2)$) 
of the protecting
matching border facets is not in the same connected component. The procedure stops if
either the resulting $\RRG_i$ contains no component that connects incompatible fat square nodes (in
which case consensus is solvable), or else if $\RRG_{i+1}=\RRG_{i}$ (in which case consensus
is not solvable if incompatible fat square nodes are still connected).

%

\begin{figure}
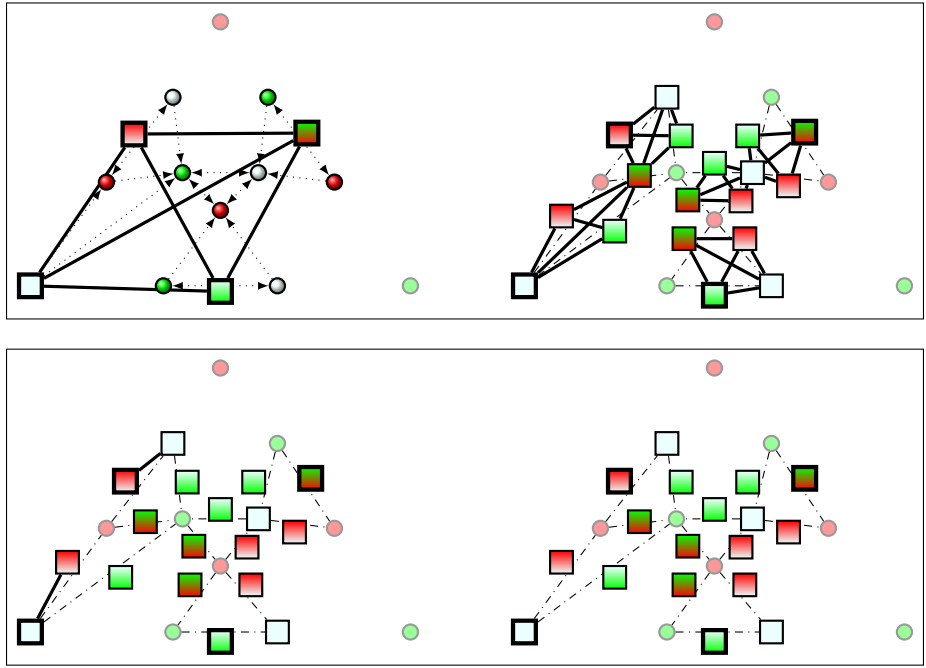

\ctikzfig{Figures/ras1reachability}
\ctikzfig{Figures/ras1reachability2}
\caption{Construction of the root components reachability graphs $\RRG_0$--$\RRG_3$ from the
1-layer protocol complex for RAS: Initial root reachability graph $\RRG_0$ (top left) and
$\RRG_1$ (top right). Since $\RRG_1$ already partitions into several connected
components that no longer contain incompatible border root components, one can already
decide here that consensus is solvable. For completeness, we also show 
$\RRG_2$ (bottom left) and $\RRG_3$ (bottom right), where all edges have finally been removed.}
\label{fig:RASreachability}
\end{figure}

\cref{fig:iRASreachability} and \cref{fig:iRASn} show the RRG construction
for the iRAS message adversary, where consensus is impossible, as
introduced in \cref{fig:iRAScomplexes}. The case of the 2-chain message
adversary 2C is illustrated in in \cref{fig:twochainexamplereachability}.

\begin{figure}
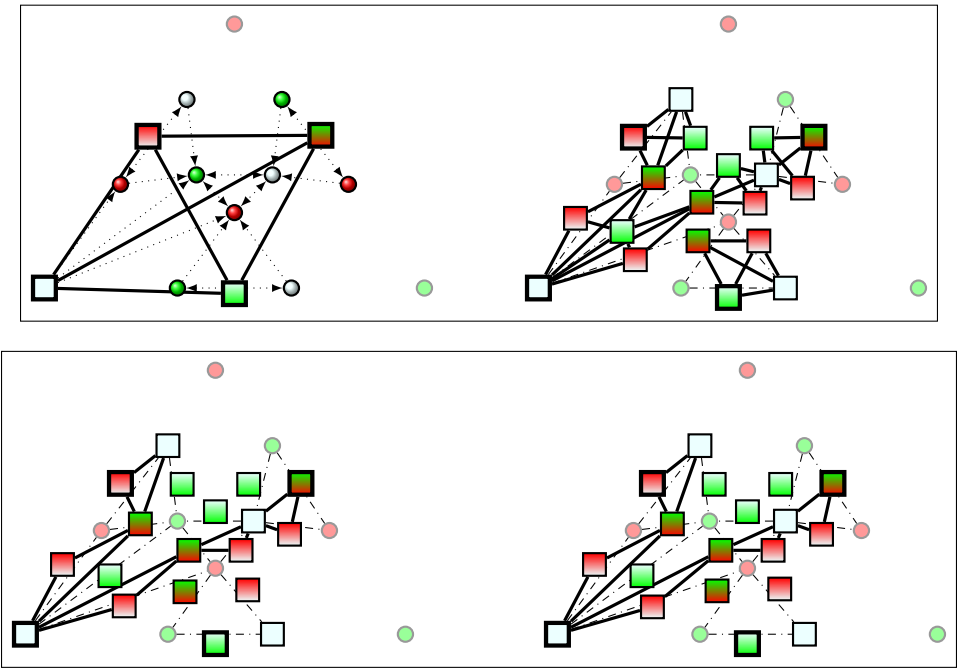

\ctikzfig{Figures/iras1reachability}
\ctikzfig{Figures/iras1reachability2}
\caption{Construction of the root components reachability graphs $\RRG_0$--$\RRG_3$ from the
1-layer protocol complex for the iRAS message adversary: Initial root reachability graph $\RRG_0$ (top left) and
$\RRG_1$ (top right). Since $\RRG_1$ partitions into several connected
components containing incompatible border root components, one has to construct $\RRG_2$ (bottom left). 
As incompatible border root compenents are still connected $\RRG_2$, another
iteration finally provides $\RRG_3=\RRG_2$, so consensus is not solvable here.}
\label{fig:iRASreachability}
\end{figure}

\begin{figure}
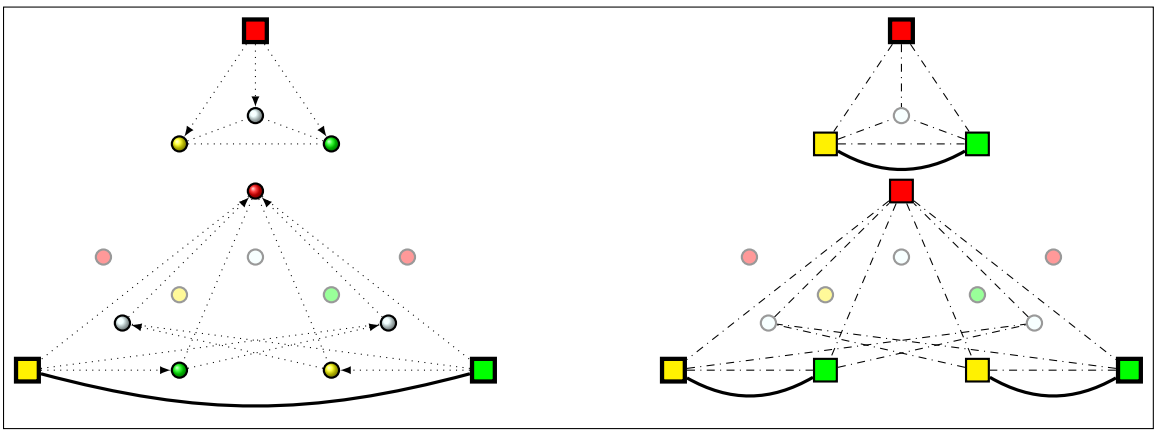

\ctikzfig{Figures/twochainexamplereachability}
\caption{Construction of the root components reachability graphs $\RRG_0$ and $\RRG_1$ from the
1-layer protocol complex for the 2-chain message adversary 2C. 
Since $\RRG_1$ already partitions into connected
components containing only compatible border root components, 
consensus is solvable here. Note that one additional iteration even removes all edges in $\RRG_2$.}
\label{fig:twochainexamplereachability}
\end{figure}

Returning to the example of \cref{fig:multipath}, it is apparent that the root 
reachability graph based decision would not terminate as early as the
nerve graph based procedure, since it explicitly keeps track of all paths between 
border root components. More specifically,
whereas the path $P_a$ between the root components $R_1=\{y\}$ and $R_2=\{w\}$
in $\P_1$ in the top part of \cref{fig:multipath} has vanished in $\P_2$, and
hence also in $\RRG_1$, the path $P_b$ connecting the border root components 
$\{g\}$ and $\{p,w\}$ is lifted to $\P_2$ and hence still present in $\RRG_1$.
Consequently, the decision procedure would proceed to $\RRG_2$ before it can decide that
consensus is solvable. In general, it would faithfully track paths/connected components
that bypass each other until they have been exhausted.

\medskip

It follows that the $\RRG$-based decision procedure would be a natural candidate
for developing a decision procedure that can tell whether distributed consensus
is solvable within $k$ rounds. 
However, like the nerve graph based procedure, it
does not cover delayed path breaking due to case (2). Whereas
a simple way to also accommodate
this would be to scale the number of rounds required for termination by a factor
of $n-1$, i.e., to infer from a number of iterations $k$ of the decision
procedure a consensus termination time bound of $k(n-1)$, this is quite
conservative. The major disadvantage of the $\RRG$-based decision
procedure is its computational
complexity, however: After all, the number of different border root components is 
exponential in $n$ and thus makes the initial root reachability 
graph $\RRG_1$ exponentially larger than the initial nerve graph $\N_0$.

\section{Conclusion}
\label{sec:conclusion}

We presented a topological view on deciding consensus solvability in dynamic graphs controlled by oblivious message adversaries. Compared to the purely combinatorial approach~\cite{WPRSS23:ITCS}, it not only provides additional insights
into the roots of the possible exponential blowup of both the iteration
complexity of the decision procedure and the termination time of distributed
consensus, but also results in a decision procedure for consensus termination
within $k$ rounds. Thanks to our novel concept of a communication pseudosphere,
which can be viewed as the message passing analogon of the chromatic
subdivision, it is also a promising basis for further generalizations,
e.g., for other decision problems and other message adversaries.

\bibliographystyle{splncs04}
\bibliography{lit}
\end{document}